\setlist[itemize, 1]{label=\textbullet}
\setlist[itemize, 2]{label=$\circ$}
\setlist[itemize, 3]{label={\tiny$\blacksquare$}}
\newcommand{\longv}[1]{}
 \newtheorem{theorem}{Theorem}[section]
 \newtheorem{lemma}[theorem]{Lemma}
 \newtheorem{corollary}[theorem]{Corollary}
\newtheorem{proposition}[theorem]{Proposition}
\newtheorem{defn}[theorem]{Definition}
 \newtheorem{example}[theorem]{Example}
 \newtheorem{remark}[theorem]{Remark}
\theoremstyle{definition}
  \newtheorem{notation}[theorem]{Notation}
\newcommand\N{\mathbb{N}}
\newcommand{\cp}[1]{\mathbf{#1}}
\renewcommand{\NC}{\cp{NC}}
\renewcommand{\AC}{\cp{FAC}}
\newcommand{\FNC}{\mathbf{FNC}}
\newcommand{\kBRN}{{ k\text{-}\rm BRN}}
\newcommand{\CRN}{{\rm CRN}}
\newcommand{\WBRN}{{\rm WBRN}}
\newcommand{\rank}[1]{\mathsf{rank}(#1)}
\newcommand{\Nat}{\mathbb{N}}
\newcommand{\sucs}{\mathbf{s}}
\newcommand{\tu}[1]{\mathbf{#1}}
 \newcommand\lengthnotation{\ell}
\newcommand{\length}[1]{\mathrm{\lengthnotation}(#1)}
\newcommand{\LDL}{\mathbb{LDL}}
\newcommand{\ACDL}{\mathbb{ACDL}}
\newcommand{\TCDL}{\mathbb{TCDL}}
\newcommand{\maj}{\textsc{Maj}}
\newcommand{\FPTime}{\mathbf{FP}}
\newcommand{\Dlogtime}{\mathbf{Dlogtime}}
\newcommand{\ACz}{\mathbf{AC}^0}
\newcommand{\ACi}{\mathbf{AC}^i}
\newcommand{\NCi}{\mathbf{NC}^i}
\newcommand{\NCc}{\mathbf{NC}}
\newcommand{\TCz}{\mathbf{TC}^0}
\newcommand{\TCi}{\mathbf{TC}^i}
\newcommand{\FTC}{\mathbf{FTC}}
\newcommand{\FLH}{\mathbf{FLH}}
\newcommand{\fun}[1]{\mathsf{#1}}
\newcommand{\raus}[1]{}
\begin{document}

%\begin{frontmatter}

%% Title, authors and addresses

%% use the tnoteref command within \title for footnotes;
%% use the tnotetext command for theassociated footnote;
%% use the fnref command within \author or \affiliation for footnotes;
%% use the fntext command for theassociated footnote;
%% use the corref command within \author for corresponding author footnotes;
%% use the cortext command for theassociated footnote;
%% use the ead command for the email address,
%% and the form \ead[url] for the home page:
%% \title{Title\tnoteref{label1}}
%% \tnotetext[label1]{}
%% \author{Name\corref{cor1}\fnref{label2}}
%% \ead{email address}
%% \ead[url]{home page}
%% \fntext[label2]{}
%% \cortext[cor1]{}
%% \affiliation{organization={},
%%            addressline={}, 
%%            city={},
%%            postcode={}, 
%%            state={},
%%            country={}}
%% \fntext[label3]{}

\title{Towards New Characterizations of Small Circuit Classes via Discrete Ordinary Differential Equations} %% Article title

%% use optional labels to link authors explicitly to addresses:
%% \author[label1,label2]{}
%% \affiliation[label1]{organization={},
%%             addressline={},
%%             city={},
%%             postcode={},
%%             state={},
%%             country={}}
%%
%% \affiliation[label2]{organization={},
%%             addressline={},
%%             city={},
%%             postcode={},
%%             state={},
%%             country={}}

%\author{Melissa Antonelli} %% Author name

%% Author affiliation
%\affiliation{organization={Department of Computer Science, University of Helsinki and HIIT},%Department and Organization
          %  addressline={Pietari Kalmin katu 5}, 
          %  city={Helsinki},
          %  postcode={00560}, 
            %state={},
          %  country={Finland}}

%\author{Arnaud Durand}

%\affiliation{organization = {Institut de Mathématiques de Jussieu – Paris Rive Gauche, Université Paris Cité and Sorbonne Université, CNRS},
%addressline = {8 place Aur\'elie Nemours},
%city = {Paris},
%postcode = {F-75013},
%country ={France}
%}

\author{Melissa Antonelli \quad Arnaud Durand \quad Juha Kontinen}

%\affiliation{organization={Department of Mathematics and Statistics, University of Helsinki},%Department and Organization
    %        addressline={Pietari Kalmin katu 5}, 
%            city={Helsinki},
    %        postcode={00560}, 
%            %state={},
     %       country={Finland}}

%% Abstract

\maketitle

\begin{abstract}
%% Text of abstract
Implicit computational complexity is a lively area of theoretical computer science, which aims to provide machine-independent characterizations of relevant complexity classes.
% for uniformity with subsequent uses >> 1960s (but feel free to modify it)
%
One of the seminal works in this field appeared in the 1960s, when Cobham introduced a function algebra closed under bounded recursion on notation to capture polynomial time computable functions ($\FPTime$).
Later on, several complexity classes have been characterized using \emph{limited} recursion schemas.
In this context, an original approach has been recently introduced, showing that ordinary differential equations (ODEs) offer a natural tool for algorithmic design and providing a characterization of $\FPTime$ by a new ODE-schema.
In the present paper we generalize this approach by presenting original ODE-characterizations for the small circuit classes $\AC^0$ and $\FTC^0$.
\end{abstract}

%%Graphical abstract
%\begin{graphicalabstract}
%\includegraphics{grabs}
%\end{graphicalabstract}

%%Research highlights
%\begin{highlights}
%\item Research highlight 1
%\item Research highlight 2
%\end{highlights}

%% Keywords
%\begin{keyword}
%implicit computational complexity \sep parallel computation \sep ordinary differential equations \sep circuit complexity
%% keywords here, in the form: keyword \sep keyword

%% PACS codes here, in the form: \PACS code \sep code

%% MSC codes here, in the form: \MSC code \sep code
%% or \MSC[2008] code \sep code (2000 is the default)

%\end{keyword}

%\end{frontmatter}

\section{Introduction}
By now, computability and complexity theory have a long history, dating back to pioneering studies by G\"odel~\cite{Godel}, Turing~\cite{Turing}, Church and Kleene~\cite{ChurchKleene} and to connected early developments of algorithmic analysis.
In general, computability theory investigates the limits of what is algorithmically computable, whereas complexity theory classifies functions based on the amount of resources, typically time and space, needed to execute them.
Historically, time and space complexity for Turing machines were first introduced in an explicit context~\cite{HartmanisStearn}, as associated with a machine and a cost model. 
More recently, a different approach to complexity, called implicit computational complexity (ICC, for short), has been developed, aiming to describe complexity classes without explicitly referring to a specific machine model.
This alternative perspective has offered remarkable insights into the corresponding classes and has led to related meta-theorems in several domains, from database theory to constraint satisfaction.

In particular, from the 1990s on, several implicit characterizations for multiple classes have been introduced. 
Just to quote a few approaches among the plethora developed in different fields, in recursion theory classes are captured by limiting recursion schemas in multiple ways~\cite{Cobham,Ritchie,BellantoniCook}, in model theory and descriptive complexity classes are characterized in terms of the richness of the logical language needed to describe the corresponding problem~\cite{Immerman99}, and in proof theory many achievements rely on the so-called Curry-Howard correspondence~\cite{CurryFeys,Howard,SorensenUrzyczyn,Buss}.
The results presented in this paper are placed in the first area.
Indeed, one of the major approaches to computability and complexity theory is constituted by the study of recursion. 
A foundational work in this area is by Cobham~\cite{Cobham}, who provided an implicit characterization of the class of polynomial time computable functions, $\FPTime$, by \emph{bounded recursion on notation} (BRN, for short).
This result has led to several other characterizations based on \emph{limited} recursion schemas, and, for $\FPTime$ itself, alternative approaches have been developed, for instance~relying on safe recursion~\cite{BellantoniCook} and ramification~\cite{Leivant,LeivantMarion}.\footnote{Other implicit characterizations based on schemas and restrictions of first-order programs have been recently introduced in~\cite{BonfanteMarionPechoux,Leivant,LeivantMarion2000,MarionPechoux,NigglWunderlich}.}
Among them there is the  machine-independent approach recently introduced by Bournez and Durand in 2019, which is based on discrete Ordinary Differential Equations (ODEs, for short)~\cite{BournezDurand19}.
Our work takes its cue from this result.

\bigskip
In the context of the present paper, it is particularly relevant to notice that Cobham's work, together with other early results in recursion theory~\cite{Grzegorczyk,Bennett,Ritchie,Lind}, also paved the way to recursion-theoretic characterizations for small circuit classes.
%~\cite{Clote88,Clote1990,ComptonLaflamme,Allen,CloteTakeuti,BKMO}.
%
In~\cite{Clote88,Clote1990}, Clote introduced an algebra based on so-called \emph{concatenation recursion on notation} (CRN, for short) to capture functions computable in ($\Dlogtime$-uniform) $\ACz$, i.e.~computable by families of polynomial size and constant depth circuits.
This result was extended to $\ACi$ and $\NCi$ due to the notions of \emph{k-bounded} and \emph{weak bounded recursion on notation} in~\cite{Clote1990}.
A few years later, a similar function algebra for capturing $\TCz$ was introduced~\cite{CloteTakeuti}.
Other characterizations for subclasses of $\NCc$ were independently presented in~\cite{ComptonLaflamme,Allen,BonfanteKahleMarionOitavem}.

Related, but alternative, approaches to capture small circuit classes have also been developed in the framework of model- and proof-theory.
In the former area, it is well known that there is an equivalence between $\ACz$ and first-order logic, which naturally generalizes to extensions of this logic and larger circuit classes~\cite{Immerman87,BarringtonImmermanStraubing,HellaKontinenLuosto,GurevichLewis,Lindell}.
In the proof-theoretical setting, in~\cite{ComptonLaflamme} it is shown that functions computable in $\NCc^1$ can be characterized not only by means of a function algebra $\mathscr{A}$, based on so-called \emph{upward tree recursion}, but also in terms of a logical system.
Similarly, in~\cite{Allen}, together with the corresponding algebra, Allen defined a proof system \emph{\'a la Buss} to capture $\NCc$.
In~\cite{CloteTakeuti92}, alternative bounded theories were introduced to capture this and other small circuit classes~\cite{CloteTakeuti}.
Theories for $\TCz$ have been developed in~\cite{Johannsen,CloteTakeuti} and, scaling up to the second-order, in~\cite{NguyenCook06}.
Alternative, proof-theoretical characterizations for $\NC^1$ were then presented in~\cite{Arai}, and, in the context of two-sorted theories, in~\cite{CookMorioka}.

\bigskip
As anticipated, an original approach to complexity, based on discrete ODEs, was introduced in~\cite{BournezDurand19,BournezDurand23},
aiming to capture functions computable in a given complexity class as solutions of a corresponding type of ODE.
In this vain, a characterization of $\FPTime$ was given by \emph{linear} systems of equations and \emph{derivation along a logarithmically growing function}.
Informally, the latter condition controls the number of steps, whereas linearity controls the growth of objects generated during the computation.
Recently, these approaches have been generalized to the continuous setting~\cite{BlancBournez22,BlancBournez23}.

Although small circuit classes have been characterized in multiple ways, the questions of whether they can be studied through ODE lenses and whether this would shed some new light on their features are still open.
To us, these questions are both interesting and challenging. 
They are \emph{interesting} as, for a descriptive approach based on ODEs to make sense and be fruitful, it has to be able to cope with very subtle and restricted modes of computation. 
They are \emph{challenging} as even simple and useful mathematical functions may not be computable in the classes we are considering (e.g.~multiplication is not in $\AC^0$, but it is in $\FTC^0\subseteq \FNC^1$); consequently, tools at hand and the naturalness of the approach are drastically restricted.
Our project aims to investigate these questions and to find natural ODE-oriented function algebras to capture small circuit classes.
So far, we have focused on the characterization of functions computable by families of polynomial size and constant depth circuits ($\AC^0$, for short), possibly including majority gates ($\FTC^0$, for short).
We have captured \emph{both} classes by means of special ODE-schemas, obtained by deriving along the logarithmic function and intuitively allowing for bit shifting operations through restricted forms of linear equations.
These case studies are intended as the first steps towards a \emph{uniform} characterization of other relevant classes in the $\AC^i$ and $\FNC^i$ hierarchies.

\paragraph{The Structure of This Work}
This article is the extended version of the works presented in the eponymous~\cite{ADK24b} and in~\cite{ADK24a}.
The paper is structured as follows.
In Section~\ref{sec:preliminaries}, we introduce notions and results at the basis of our ODE-style characterizations. In particular, in Section~\ref{sec:CCviaODE}, we summarize the salient aspects of the approach developed in~\cite{BournezDurand23}, which we aim to generalize from the study of $\FPTime$ to that of small circuit classes. 
In Section~\ref{sec:parallel}, we briefly recap some basic notions in parallel complexity and recall the definition of a couple of function algebras developed in this setting.
Then, in Section~\ref{sec:main}, we present our ODE-characterizations for $\AC^0$ and $\FTC^0$. 
Specifically, in Section~\ref{sec:schemas}, we introduce new restricted ODE-schemas intuitively corresponding to variously left- and right-shifting the binary representation of a given input.
Based on these schemas, in Sections~\ref{sec:FAC} and~\ref{sec:FTC} we provide different ODE-style characterizations for $\AC^0$ and $\FTC^0$ (respectively).
To clarify the relationship between the ODE machinery and constant-depth circuit computation, and to identify the minimal computational principles necessary to capture these classes, several schemas and characterizations are presented.
In Section~\ref{sec:direct}, we provide alternative and direct completeness proofs for both these classes in the non-uniform setting, namely assuming that functions describing the circuits are given as basic functions.
Finally, in Section~\ref{sec:conclusion}, we conclude by pointing to possible directions for future research.

\section{Preliminaries}\label{sec:preliminaries}
In this Section we introduce the preliminary notions needed to develop our characterization of $\AC^0$ and $\FTC^0$.
In particular, in Section~\ref{sec:CCviaODE}, we briefly present the elegant approach to complexity delineated in~\cite{BournezDurand19,BournezDurand23}, while, in Section~\ref{sec:parallel}, we recap salient definitions related to small circuit classes and the state-of-the-art on function algebras to capture them.

%%%%%%%%%%%% SUBSECTION
\subsection{Capturing Complexity via ODEs}\label{sec:CCviaODE}

In this Section we summarize the core ideas of the $\FPTime$ characterization presented in~\cite{BournezDurand19,BournezDurand23}, that we will extend to circuit classes in Section~\ref{sec:main}. 
This approach combines discrete-based computation (as digital computers are discrete machines) and ODEs, allowing to capture $\FPTime$ due to a special discrete ODE-schema. 
%In this Section we summarize the core ideas 
%at the basis of Bournez and Durand's  characterization of $\FPTime$, which we will extend to circuit classes in Section~\ref{sec:main}.
%In~\cite{BournezDurand19,BournezDurand23}, the two authors have studied discrete ODEs in the context of computability and complexity theory and have shown that, on the one hand, ODEs provide a natural tool for algorithmic design, and, on the other, complexity classes can be elegantly captured due to them.
%
%Their original approach put together discrete-based computation (as digital computers are discrete machines) and ODEs, allowing them to capture $\FPTime$ due to a special (discrete) ODE-schema. 
%
Concretely, we start by briefly recapping Cobham's result, capturing $\FPTime$ by a schema which imposes an explicit bound on the recursion in the form of an already defined function (Section~\ref{sec:fromCtoCC}).
Then, the alternative characterization based on ODEs is introduced (Section~\ref{sec:fromBRtoODEs}).
%
%The defining features allowing the corresponding schema to capture poly-time computability are a special syntactic form for ODEs, called linearity and keeping the object size under control, and the possibility to derive along functions with the proper growth rate, so to limit the number of recursion steps.

%%%%%%%%%%%%%%% SUBSUBSECTION
%%%%%%%%%%%%%%% From Computability to Complexity
\subsubsection{From Computability to Complexity}\label{sec:fromCtoCC}

Modern computing machines have been developed in relation to the first attempts to formalize the notion of algorithm and of computable functions.
Accordingly, computational models are inherently linked to function algebras, where a function algebra is an algebra constituted by a set of basic functions and by schemas to construct new functions from the given ones.
For example, the class of primitive recursive functions is the smallest class defined as follows:
$$
\mathcal{PR} = [\fun{0}, \fun{s}, \pi^p_i; \circ, \text{PR}]
$$
where $\fun{s}(x)=x+1$ is the successor function, $\pi^p_i$ is the projection function, $\circ$ is function composition and PR is the primitive recursion schema below.

\begin{defn}[Primitive Recursion, PR]\label{def:PR}
    Given $g:\Nat^p \to \Nat$ and $h:\Nat^{p+1}\to \Nat$, the function $f:\Nat^{p+1}\to \Nat$ is defined by \emph{primitive recursion} from $g$ and $h$ if:
    \begin{align*}
        f(0,\tu y) &= g(\tu y) \\
        f(x+1, \tu y) &= h\big(x, \tu y, f(x, \tu y)\big).
    \end{align*}
\end{defn}
\noindent
Clearly, primitive recursive computation requires a number of steps that is approximately exponential in the length of the input. However, when the recursion schema is restricted in the proper way, several other classes could be defined, including complexity ones.

In 1943, Kalm\'ar defined the class of elementary functions by restricting PR so to allow only limited sums and products~\cite{Kalmar}, and, in 1953, Grzegorczyk generalized this work to the study of the classes $\mathcal{E}^k$. 
About ten years later, Bennett introduced the notion of bounded recursion on notation in the context of rudimentary sets~\cite{Bennett}.
Then, a groundbreaking result in this spirit was established by Cobham, who provided a characterization of $\FPTime$ using the notion of BRN, as defined below, and inspired by Bennett's recursion~\cite{Bennett}.

\begin{defn}[Bounded Recursion on Notation, BRN]\label{defn:BRN}
   Given $g:\Nat^p \to \Nat$, $k: \Nat^{p+1} \to \Nat$ and $h_i : \Nat^{p+2} \to \Nat$, with $i\in \{0,1\}$, a function $f:\Nat^{p+1} \to \Nat$ is defined by \emph{bounded recursion on notation} from $g,h_0, h_1$ and $k$ if
   \begin{align*}
       f(0,\tu y) &= g(\tu y) \\
       f(\fun{s}_i(x), \tu y) &= h_i\big(x, \tu y, f(x, \tu y)\big) \quad x \neq 0 \\
       f(x,\tu y) &\leq k(x,\tu y)
   \end{align*}
   where $\fun{s}_0 = 2x$ and $\fun{s}_1 = 2x+1$ are the binary successor functions.
\end{defn}
Observe that if $\ell(k(x,\tu y))$ is polynomial in $\ell(x) + \ell(\tu y)$, then so is $\ell(f(x,\tu y))$ (being $\ell(z)$ the length of $z$ written in binary, see Notation~\ref{not:ell}).
This ensures that inner terms do not grow too fast, i.e.~that the growth rate is controlled by means of already known functions.
On the other hand, the number of induction steps is kept under control by the binary successor functions, as $\ell(\fun{s}_1(x))=\ell(\fun{s}_0(x))=\ell(x)+1$.
It was relying on the BRN schema, that Cobham defined the algebra below:\footnote{Actually, Cobham's algebra is made of functions on words, while we are here presenting its number version, see~\cite{Rose}.}
$$
\mathcal{A}_{Cob} = [\fun{0}, \fun{s}_0, \fun{s}_1, \pi^p_i, \#; \circ, \text{BRN}],
$$
where $x\# y=2^{\ell(x)\times \ell(y)}$.
It was proven that this function algebra precisely captures the class of poly-time computable functions, i.e.
$$
\mathcal{A}_{Cob} = \FPTime.\footnote{In~\cite{Cobham}, the displayed algebra is claimed to characterize $\FPTime$ but no complete proof is presented. A formal proof is offered in~\cite{Rose} (where $\pi^n_k$ is denoted by $i^n_k$, $\#$ by $\overline{D}$, and BRN is called limited recursion on notation (LRN)). An alternative but equivalent definition of $\mathcal{A}_{Cob}$, based on limited iteration on notation, is introduced in~\cite{Buss}.} 
$$
This result has led to several machine-independent characterizations of other complexity classes, even of parallel ones (see Section~\ref{sec:parallel}).

\subsubsection{From Bounded Recursion to ODEs}\label{sec:fromBRtoODEs}

Cobham's work not only led to a variety of new, implicit characterizations for different classes, but also inspired alternative characterizations for $\FPTime$, such as those based on safe recursion~\cite{BellantoniCook} and ramification~\cite{Leivant,LeivantMarion}.
Among them, the proposal presented in~\cite{BournezDurand23} offers the advantages of not imposing an explicit bound on the recursion schema and of not assigning specific roles to variables.
These nice properties are obtained by introducing two original notions in the context of discrete ODEs: the idea of \emph{deriving along specific functions}, so that, if the function is properly chosen, the number of computation steps is kept under control, and a special syntactic form for equations, called \emph{essential linearity} and allowing to control the object size.
It was relying on these two features that poly-time computability is captured.

\paragraph{Discrete ODE Schemas}
Classical ODEs have been widely used in applied sciences, serving as a model in several contexts involving continuous phenomena.
When a change occurs incrementally rather than continuously, their counterpart is defined by \emph{Discrete ODEs} or \emph{difference equations}, which are, in turn, capable of modeling discrete changes.
%
%Based on this notion, mathematicians have developed a finite calculus analogous to the traditional infinite calculus. %and centered on summation
%Discrete ODEs are relevant for us as Bournez and Durand have shown that they represent a natural tool to characterize computation and complexity.
%
%Their starting point is the 
An obvious observation is that a discrete function can be naturally described by its right \emph{discrete derivative} defined as~$\Delta \tu f(x)=\tu f(x+1)-\tu f(x)$ (also denoted $\tu f'(x)$ in what follows), and that, in this setting, it is also straightforward to rewrite primitive recursion.
%Then, as anticipated, in order to capture more fine-grained time and space measures, they introduce two original concepts – namely, derivation along a function and essential linearity – in the context of discrete ODEs.
%
%Before presenting these two new elements, we briefly recap the basic notions defining the discrete calculus and present our notational convention to align with~\cite{BournezDurand23}.

%\emph{Discrete ODEs} or \emph{difference equations} are the discrete counterparts of classical (continuous) ODEs, dealing with changes over discrete %intervals. In other words, the difference considered is between successive values of a function that depends on a discrete variable, i.e., a variable that %is defined only for values that differ by some finite amount, in our case, 1.
%Formally, the \emph{discrete derivative of $\tu f(x)$} si defined as
%$$
%\Delta \tu f(x) = \tu f(x+1) - \tu f(x).
%$$
%It is the finite analog of the derivative, in which we restrict ourselves to positive integer values. 
%%
%Following~\cite{BournezDurand23}, we use also the continuous prime notation $\tu f'(x)$ to denote the difference operator.

%Due to the relationship between derivative and difference operators, 
A discrete calculus analogous to differential calculus has been developed (see, e.g.,~\cite{Spiegel}), and many techniques and results from classical derivatives generalize to the context of discrete derivatives.
The notion of discrete integral is defined so that, for a function $\tu f(x)$, 
$
\int^b_a \tu f(x)\delta x = \sum^{x=b-1}_{x=a} \tu f(x)
$
with the convention that $\int^b_a \tu f(x)\delta x = 0$ when $a=b$ and $\int^b_a \tu f(x)\delta x=-\int^b_a \tu f(x)\delta x$, when $a>b$.
The fundamental theorem for the finite calculus states that, given the function $\tu F(x)$,
$$
\int^b_a \tu F'(x) \delta x = \tu F(b)- \tu F(a).
$$
%Several techniques from the classical setting, such as integration by parts, generalizes to the discrete ones.

Given the analogy between finite and classical calculus, it is also natural to consider a theory of discrete differential equations corresponding to the traditional theory of differential equations.
Discrete ODEs are expressions of the (possibly vectorial) form
$$
\frac{\partial \tu f(x,\tu y)}{\partial x} = \tu h\big( x, \tu y, \tu f(x, \tu y)\big)
$$
where $\frac{\partial \tu f(x,\tu y)}{\partial x}$ stands for the derivative of $\tu f(x,\tu y)$ considered as a function of $x$, when $\tu y$ is fixed, i.e.
$$
\frac{\partial \tu f(x,\tu y)}{\partial x} = \tu f(x+1, \tu y) - \tu f(x,\tu y).
$$
The concept of solution is as expected.
When some initial value $\tu f(0,\tu y)=\tu g(\tu y)$ is added, this is called an \emph{Initial Value Problem} (IVP) or \emph{Cauchy Problem}.
An IVP can always be put in integral form
and, if it is
%Remarkably, an IVP 
defined as above, it always admits a necessarily unique solution over $\Nat$.

\begin{remark}
    Since coping with negative numbers on classical models of computation can be done through simple encodings, as in~\cite{BournezDurand19}, we will restrict our treatment to non-negative numbers.
\end{remark}

As anticipated, discrete ODEs are convenient tools to define functions.

\begin{defn}[Discrete ODE Schema]
    Given the functions $\tu g$
    %: \Nat^p \to \Nat^d$ 
    and $\tu h$, %: \mathbb{Z} \times \Nat^{p+1} \to \mathbb{Z}^d$,
    we say that $\tu f$ %: \Nat^{p+1} \to \mathbb{Z}^d$ 
    is defined by \emph{discrete ODE solving from $g$ and $h$}, if $\tu f$ is the (necessarily unique) solution of the IVP:
    \begin{align*}
        \tu f(0,\tu y) &= \tu g(\tu y) \\
        \frac{\partial \tu f(x,\tu y)}{\partial x} &= \tu h\big(x, \tu y, \tu f(x, \tu y)\big).
    \end{align*}
\end{defn}
\noindent
The class of functions that are defined using discrete ODEs instead of primitive recursion is easily seen to correspond to primitive recursive functions.
On the other hand, in order to capture fine-grained complexity measures, some restrictions are needed.

Particularly relevant and studied %in the continuous setting 
are discrete linear or affine ODEs, namely restricted discrete ODEs of the form:
$$
\tu f'(x) = \tu A(x) \cdot \tu f(x) + \tu B(x),
$$
where $\tu A$ is a matrix and $\tu B$ a vector.
The corresponding schema is defined as follows:

\begin{defn}[Linear ODE Schema]
    Given vectors $\tu G=(G_i)_{1\leq i\leq k}, \tu B=(B_i)_{1\leq i\leq k}$, and matrix $\tu A= (A_{i,j})_{1\leq i, j\leq k}$, we say that $\tu f$ is obtained by \emph{linear ODE solving from $\tu G, \tu A$ and $\tu B$}, if $\tu f:\Nat^{p+1} \to \mathbb{Z}^k$ corresponds to the (necessarily unique) solution of the IVP:
    \begin{align*}
        \tu f(0,\tu y) &= \tu G(\tu y) \\
        \frac{\partial \tu f(x,\tu y)}{\partial x} &= \tu A(x,\tu y) \times \tu f(x,\tu y) + \tu B(x, \tu y).
    \end{align*}
\end{defn}
\noindent
In~\cite[Lemma 3.15]{BournezDurand23}, it is shown that, for matrix $\tu A$ and vectors $\tu B$ and $\tu G$, the solution of the equation 
$$
\tu f'(x,\tu y) = \tu A\big(x, \tu y, \tu f(x,\tu y), \tu h(x,\tu y)\big) \times \tu f(x,\tu y) + \tu B\big(x, \tu y, \tu f(x, \tu y), \tu h(x, \tu y)\big)
$$
with initial value $\tu f(0,\tu y)=\tu G(\tu y)$ is
$$
\sum^{x-1}_{u=-1}\bigg(\prod^{x-1}_{t=u+1}\Big(1+\tu A\big(t, \tu y, \tu f(t,\tu y), \tu h(t,\tu y)\big)\Big)\bigg) \times \tu B\big(u, \tu y, \tu f(u, \tu y), \tu h(u,\tu y)\big)
$$
with the conventions that $\tu B(\cdot, \cdot, -1, \tu y)=\tu G(\tu y)$ and $\prod^{x-1}_x\kappa (x)=1$.
Remarkably, this can be interpreted as an algorithm unrolling the computation of $\tu f(x+1,\tu y)$ from the computation of $\tu f(x,\tu y), \tu f(x-1, \tu y), \dots, \tu f(0,\tu y)$ in a dynamic programming way.

%\newpage
%\paragraph{The ODE Schema.}
%Formally, the \emph{discrete derivative of $\tu f(x)$} is defined as $\Delta \tu f(x)=\tu f(x+1)+\tu f(x)$, and ODEs are expressions of the form:
%$$
%\frac{\partial \tu f(x,\tu y)}{\partial x} = \tu h\big(\tu f(x,\tu y), x, \tu y\big),
%$$
%where $\frac{\partial \tu f(x,\tu y)}{\partial x}$ stands for the derivative of $\tu f(x,\tu y)$ considered as a function of $x$, when $\tu y$ is fixed, i.e.
%$$
%\frac{\partial \tu f(x,\tu y)}{\partial x} = \tu f(x+1, \tu y) - \tu f(x,\tu y).
%$$
%When some initial value $\tu f(0,\tu y)=\tu g(\tu y)$ is added, this is called an \emph{Initial Value Problem} (IVP) or \emph{Cauchy Problem}.

\paragraph{On Length ODE}
It is now possible to introduce the idea of deriving along a given function.

\begin{notation}\label{not:ell}
    Let $\ell(x)$ denote the length of $x$ written in binary, i.e. $\big\lceil \text{log}_2(x+1)\big\rceil$, so that, in particular, $\ell(0)=0$ and $\ell(1)=1$.
    Let $\ell_2$ be a shorthand for $\ell\circ \ell$, that is $\ell_2(x)=\ell(\ell(x))$ and, more generally for $i\ge 2$, $\ell_i(x)=\ell(\ell_{i-1}(x))$.
\end{notation}

\begin{defn}[$\lambda$-ODE Schema]
    Let $\tu f, \lambda$ and $\tu h$
    %and $\lambda$: \Nat^p \to \mathbb{Z}$ and $\tu h: \Nat^{p+1} \to \mathbb{Z}$ 
    be functions.
    Then,
    \begin{align}
    \frac{\partial \tu f(x,\tu y)}{\partial \lambda} = \frac{\partial \tu f(x,\tu y)}{\partial \lambda (x,\tu y)} = \tu h\big(x, \tu y, \tu f(x,\tu y)\big)
    \end{align}
    is a formal synonym of $\tu f(x+1, \tu y)=\tu f(x, \tu y) + \big(\lambda (x+1, \tu y) - \lambda (x,\tu y)\big)\times \tu h\big(x, \tu y, \tu f(x,\tu y)\big)$.
    When $\lambda (x,\tu y)=\ell(x)$, we call (1) a \emph{length}-ODE.
\end{defn}
\noindent
Intuitively, one of the key properties of the $\lambda$-ODE schema is its dependence on the number of distinct values taken by $\lambda$, that is, the value of $\tu f(x,\tu y)$ changes only when the value of $\lambda(x,\tu y)$ does.

The computation of the solutions of $\lambda$-ODEs has been fully described in~\cite{BournezDurand23}.
Here, we focus on the special cases of $\lambda=\ell$ and $\lambda =\ell_2$, which are particularly relevant to characterize circuit classes.
%$\AC^0$ and $\FTC^0$.
%
Clearly, the value of $\ell(x)$ changes (it increases by 1) when $x$ goes from $2^t-1$ to $2^t$.
The value of $\ell_2(x)$ changes when $x$ goes from $2^{2^t-1}-1$ to $2^{2^t-1}$.

\begin{notation}~\label{notation: alpha functions}
    For $u\ge 0$, let us denote by $\alpha(u)=2^u-1$ the greatest integer whose length is $u$; similarly, $\alpha_2(u)=2^{2^u-1}-1$ is the greatest integer $t$ such that $\ell_2(t)=u$.
\end{notation}
\noindent
If $\tu f$ is a solution of Equation (1) with $\lambda =\ell$ and initial value $\tu f(0,\tu y)=\tu g(\tu y)$, then:
$$
\tu f(1, \tu y)= \tu f(0,\tu y) + \tu h\big(\alpha(0), \tu y, \tu f(\alpha(0), \tu y)\big).
$$
More generally, for all $x$ and $\tu y$:
$$
    \tu f(x,\tu y) = \tu f(x-1, \tu y) + \Delta(\ell(x-1)) \times \tu h\big(x-1, \tu y, \tu f(x-1, \tu y)\big) 
$$

\noindent where $\Delta(\ell(t-1))=\ell(t)-\ell(t-1)$. Hence, by induction, for any $t\leq x$:
\begin{align*}
    \tu f(x,\tu y)
    &= \begin{cases}
    \tu f(t, \tu y) \quad \quad &\text{if } \ell(x)-\ell(t)=0 \\
    \tu f(t, \tu y) + \tu h(t, \tu y, \tu f(t, \tu y)) \quad \quad  &\text{if } \ell(x)-\ell(t) = 1 
    \end{cases} \\
    &= \tu f\big(\alpha(\ell(x)-1), \tu y\big) + \tu h\big(\alpha (\ell(x)-1), \tu y, \tu f(\alpha (\ell(x)-1), \tu y)\big).
\end{align*}
 Indeed, $\alpha (\ell(x)-1)$ is the largest integer whose length is different from $\ell(x)$ (in fact equal to $\ell(x)-1$, and consequently, $\Delta(\alpha (\ell(x)-1))=1$). For every $t$ strictly in between $\alpha (\ell(x)-1)$ and $x$, it holds that $\Delta(\ell(t))=0$ and $ \tu f(x,\tu y)= \tu f(t,\tu y)$. 
Starting from $t=x\ge 1$ and taking successive decreasing values of $t$, the first difference such that $\Delta(t)\neq 0$ is defined by the biggest $t-1$ such that $\ell(t-1)=\ell(x)-1$, i.e.~$t-1=\alpha(\ell(x)-1)$.
Hence, by induction it is shown that:
$$
\tu f(x,\tu y) = \sum^{\ell(x)-1}_{u=-1} \tu h\big(\alpha (u), \tu y, \tu f(\alpha(u), \tu y)\big),
$$
with $\tu h(\alpha(-1), \tu y, \cdot)=\tu f(0,\tu y)$ and, as seen, $\alpha(u)=2^u -1$.
Similarly, if $\tu f$ is a solution of Equation (1) for $\lambda = \ell_2$ with the same initial value, then, for all $x$ and $\tu y$:
$$
\tu f(x,\tu y) = \sum^{\ell_2-1}_{u=-1} \tu h\big( \alpha_2(u), \tu y, \tu f(\alpha_2(u), \tu y)\big),
$$
with $\tu h(\alpha_2(-1), \tu y, \cdot)=\tu f(0,\tu y)$ and $\alpha_2(u)=2^{2^u-1}-1$.

\paragraph{On Linear Length ODE}
%The second crucial novelty is the introduction of a special concept of \emph{(essential) linearity}, adapted so to control the growth of functions defined by ODEs. 
%
A second fundamental concept is that of \emph{(essential) linearity}, adapted so to control the growth of functions defined by ODEs.
To introduce it, we start by formally presenting the notion of \emph{degree for a polynomial expression}, which is a slightly generalized version of the corresponding definition in~\cite{BournezDurand23}.
Although we will mainly use it in the context of expressions of one variable, here we present the notion of degree of an expression and linearity in full generality, i.e.~in relation to a set of variables. 

We first formally introduce the sign (and cosign) function(s) we will use in the following and returning 1 when the argument is strictly positive and 0 otherwise. 

\begin{defn}[Sign Function]
Let $\fun{sg}:\mathbb{Z} \to \mathbb{Z}$ be the sign function over $\mathbb{Z}$, taking value 1 for $x>0$ and 0 otherwise.
Let $\fun{cosg}:\mathbb{Z} \to \mathbb{Z}$ be the cosign function over $\mathbb{Z}$ defined as $\fun{cosg}(x)=1-\fun{sg}(x)$.
\end{defn}

\noindent
It is then possible to introduce the notions of $\fun{sg}$-polynomial expression and of corresponding degree.  

\begin{defn}[Degree of Expression]\label{def:degree}
    A  $\fun{sg}$-\emph{polynomial expression} is an expression built over the signature $\{+,-,\times\}$, the $\fun{sg}$ function and a set of variables $X=\{x_1,\dots, x_h\}$ plus integer constants.
    %
    %The \emph{degree} $\deg(x,P)$ of a term $x\in X$ in $P$ is defined inductively as follows:
    %\begin{itemize}
    %    \itemsep0em
    %    \item $\deg(x,P)=0$ for $P$ constant
    %    %
    %    \item $\deg(x,x)=1$ and for $x'\in X\cup \mathbb{Z}$ such that $x'\neq x$, $\deg(x,x')=0$
    %    %
    %    \item $\deg(x,Q_1+Q_2)=\deg(x,Q_1-Q_2) = \max\big\{\deg(x,Q_1), \deg(x,Q_2)\big\}$
    %    %
    %    \item $\deg(x,Q_1\times Q_2)=\deg(x,Q_1)+\deg(x,Q_2)$
    %    %
    %    \item $\deg(x,\fun{sg}(P))=0$.
    %\end{itemize}
    %It is also useful to consider a more general version of this definition dealing with a set of variables.
    %
    Given a set of variables $\tu x = \{x_{i_1},\dots, x_{i_m}\} \subseteq X$,
    %, for $i_1,\dots, i_m\in \{1,\dots, n\}$, 
    the \emph{degree of the set in a $\fun{sg}$-polynomial expression $P$}, $\deg(\tu x, P)$, is defined as:
    \begin{itemize}
        \itemsep0em
        \item $\deg(\tu x, P)=0$ for $P$ constant
        \item $\deg(\tu x, x_{i_j})=1$, for $x_{i_j}\in \{x_{i_1},\dots, x_{i_m}\}$ and $\deg(\tu x, x_{i_j})=0$ for $x_{i_j} \not\in \{x_{i_1},\dots, x_{i_m}\}$
        \item $\deg(\tu x, Q_1+Q_2)=\deg(\tu x, Q_1-Q_2)=\max\{\deg(\tu x,Q_1), \deg(\tu x, Q_2)\}$
        \item $\deg(\tu x, Q_1\times Q_2)=\deg(\tu x, Q_1)+ \deg(\tu x, Q_2)$
        \item $\deg(\tu x, \fun{sg}(P))=0$.
    \end{itemize}
\end{defn}
\noindent
For readability, given $x_1\in X$, we denote $\deg(\{x_1\},P)$ by $\deg(x_1,P)$.\footnote{Notice that \cite[Def. 5.14]{BournezDurand23} becomes a special case of Definition~\ref{def:degree}, namely,~the one dealing with a single variable only.}

\begin{example}
    Let $\tu x = \{x_1, x_2, x_3\}$ and $P'=3 \times x_1 \fun{sg}(x_3)+ 2x_2 \times x_1$.
    Then, $\deg(x_1,P')=\deg(x_2,P')=1$ and $\deg(x_3,P')=0$, whereas $\deg(\tu x,P')=2$.
\end{example}

\begin{defn}[Essentially Constant Expression]
    A $\fun{sg}$-polynomial expression $P$ is \emph{essentially constant in a set of variables $\tu x$} when $\deg(\tu x,P)=0$.
\end{defn}
\noindent
%Then, \cite[Def. 5.14]{BournezDurand23} becomes a special case of Definition~\ref{def:degree}, namely the one dealing with a single variable only: a $\fun{sg}$-polynomial expression $P$ is said to be essentially constant in $x$ when $\deg(x,P)=0$.

\begin{defn}[Essentially Linear Expression]
    A $\fun{sg}$-polynomial expression $P$ is \emph{essentially linear} in a set of variables $\tu x$ when $\deg(\tu x,P)=1$.
\end{defn}
\noindent
So, when dealing with a single variable $x$, a $\fun{sg}$-polynomial expression $P$ is said to be essentially linear in it if %$\deg(x,P)=1$, i.e.~
there exist $\fun{sg}$-polynomial expressions $Q_1$ and $Q_2$ such that $P=Q_1\times x + Q_2$ and $\deg(x,Q_1)=\deg(x,Q_2)=0$ (i.e.,~$x$ is essentially constant in $P$).

\noindent 
With a slight abuse of notation, we say that, for a given set of variables $\tu x$, a $\fun{sg}$-polynomial expression $P$ is essentially constant (resp., linear) in $\tu f(\tu x)$, seen as a new set of variables,  when $\deg( \tu f( \tu x), P)=0$ (resp., $\deg( \tu f(\tu x),P)=1$).

\begin{example}
    Let $\tu x=\{x_1, x_2, x_3\}$.
    The expression $P(x_1,x_2,x_3)=3x_1\times x_3 + 2x_2\times x_3$ is essentially linear in $x_1$, in $x_2$ and in $x_3$. It is not essentially linear in  $\tu x$, as $\deg(P,\tu x)=2$.
    %
%Hence, taking, for example, $f(x_1,x_3)=x_1x_3$, then $P'(f(x_1,x_3),x_2,x_3)=3f(x_1,x_3)+ 2x_2x_3$ is not 
    %
    The expression $P'(x_1,x_2,x_3)=x_1 \times \fun{sg}((x_1-x_3) \times x_2)+x_2^3$ is essentially linear in $x_1$, essentially constant in $x_3$ and not linear in $x_2$.
     Clearly, $P'$ is not linear in $\tu x$.
     The expression $Q(f_1(x_1,x_2), f_2(x_1,x_2))=f_1(x_1,x_2)\times f_2(x_1,x_2)$ is linear in $f_1(x_1,x_2)$ and in $f_2(x_1,x_2)$ but not in $\tu f(x_1,x_2)=\{f_1(x_1,x_2),f_2(x_1,x_2)\}$.
\end{example}

It is now possible to introduce the desired linear $\lambda$-ODE schema, capturing poly-time computation.

\begin{defn}[Linear $\lambda$-ODE]
Given the functions $\tu g, \tu h, \lambda$ and $\tu u$, the function $\tu f$ is
    %Given $\tu g:\Nat^p\to \Nat^d$, $\tu h,\lambda:\Nat^{p+1} \to \mathbb{Z}^d$ and $\tu u : \mathbb{Z}\times \Nat^{p+1} \to \mathbb{Z}^d$, the function $\tu f: \Nat^{p+1} \to \mathbb{Z}^d$ is 
    \emph{linear $\lambda$-ODE definable from $\tu g, \tu h$ and $\tu u$} if it is the solution of the IVP:
    \begin{align*}
        \tu f(0,\tu y) &= \tu g(\tu y) \\
        \frac{\partial \tu f(x,\tu y)}{\partial \lambda} &= \tu u\big(x, \tu y, \tu f(x,\tu y), \tu h(x,\tu y)\big),
    \end{align*}
    where $\tu u$ is essentially linear in $\tu f(x,\tu y)$.
    When $\lambda$ is the length function $\ell$, such schema is called \emph{linear length ODE}, $\ell$-\emph{ODE}.
\end{defn}

\noindent
We can even rephrase this definition stating that there exist $\tu A$ and $\tu B$ which are $\fun{sg}$-polynomial expressions essentially constant in $\tu f(x,\tu y)$ and such that:
$$
    \frac{\partial \tu f(x,\tu y)}{\partial \ell} = \tu A\big(x, \tu y, \tu f(x,\tu y), \tu h(x,\tu y)\big) \times \tu f(x,\tu y) + \tu B\big(x, \tu y, \tu f(x,\tu y), \tu h(x, \tu y)\big).
$$
Following~\cite{BournezDurand23}, when deriving along $\ell$, for all $x$ and $\tu y$, $\tu f(x,\tu y)$ is equal to:
$$
\sum^{\ell(x)-1}_{u=-1} \Bigg(\prod^{\ell(x)-1}_{t=u+1} \Big(1+\tu A\big(\alpha(t), \tu y, \tu f(\alpha(t), \tu y), \tu h(\alpha(t), \tu y)\big)\Big)\Bigg) \times \tu B\big(\alpha(t), \tu y, \tu f(\alpha(u), \tu y), \tu h(\alpha(u), \tu y)\big)
$$
with the convention that $\prod^{x-1}_x \kappa(x)=1$ and $\tu B(\cdot, \cdot, \alpha(-1), \tu y)=\tu f(0,\tu y)$.

\begin{example}[Function $2^{\ell(x)}$]\label{ex:2ell}
    The function $x\mapsto 2^{\ell(x)}$ can be seen as the solution of the IVP:
    \begin{align*}
        f(0) &= 1 \\
        \frac{\partial f(x)}{\partial \ell} &= f(x)
    \end{align*}
    that is a special case of $\ell$-\emph{ODE} such that $\tu G=1$, $\tu A=1$ and $\tu B=0$.
    Indeed, the solution of this system is of the form:
    $$
    f(x) = \prod^{\ell(x)-1}_{t=0} 2= 2^{\ell(x)}.
    $$
    In addition, the function $(x,y):x,y \mapsto 2^{\ell(x)} \times y$ can be captured by the same equation with initial value $f(0,y)=y$. 
    Indeed, $f(x,y)=\prod^{\ell(x)-1}_{t=0} 2 \times y =2^{\ell(x)}\times y$.
\end{example}

\paragraph{The Class $\LDL$}
The core of \cite{BournezDurand23} is the proof that the $\ell$-ODE schema actually captures poly-time computation~\cite[Cor. 5.21]{BournezDurand23}.
Relying on this result, Theorem 6.7 provides an original, implicit characterization of $\FPTime$ in terms of a function algebra made of basic functions $\fun{0}, \fun{1}, \pi^p_i, +, -, \times, \ell, \fun{sg}$ and closed under composition $(\circ)$ and $\ell$-ODE: 
$$
\LDL = [\fun{0}, \fun{1}, \ell, \fun{sg}, +, -, \times, \ell, \pi^p_i; \circ, \ell\text{-ODE}].
$$

%%%%%%%%%%% SUBSECTION
%%%%%%%%%%% On Parallel Computation
\subsection{On Parallel Computation}\label{sec:parallel}

In this Section, we recap basic notions in parallel complexity so to introduce small circuit classes (Section~\ref{sec:parCC}) and briefly outline the implicit characterizations of two of them as presented in the literature (Section~\ref{sec:art}).

%%%%%%%%%%%%%% SUBSUBSECTION
%%%%%%%%%%%%%% Parallel Computational Complexity
\subsubsection{Boolean Circuits and Parallel  Complexity classes}\label{sec:parCC}

We consider parallel complexity measures defined by Boolean circuit computation.
A Boolean circuit is a vertex-labeled directed acyclic graph whose nodes are either input nodes (no incoming edges), output nodes (no outgoing edges) or nodes labeled with a Boolean function in a given set.
%
%In general, 
The set of functions allowed in a Boolean circuit is given by its \emph{basis}, and the set $\{\neg, \wedge,\vee\}$ forms a \emph{universal basis}.
A bounded fan-in Boolean circuit is a circuit over the (standard) basis $\{\neg, \wedge^2, \vee^2\}$, while an unbounded fan-in circuit is defined over the (standard) basis $\{\neg, (\wedge^n)_{n\in \Nat}, (\vee^n)_{n\in \Nat}\}$.
A Boolean circuit with majority gates allows in addition gates labeled by the function $\maj$, that outputs 1 when the majority of its inputs are 1's.
A family of circuits ($C_n$) is $\Dlogtime$-uniform if there is a TM (with a random access tape) that decides in deterministic logarithmic time the \emph{direct connection language of the circuit}, i.e.~such that, given $1^n, a, b$ and $t\in \{\wedge,\vee, \neg\}$, it decides if $a$ is of type $t$ and $b$ is a predecessor of $a$ in the circuit (and analogously for input and output nodes).

When dealing with circuits, some resources of interests are \emph{size}, i.e.~the number of its gates, and \emph{depth}, i.e.~the length of the longest path from the input to the output (see~\cite{Vollmer} for more details and related results), as well as the choice of bounding or not the fan-in of gates, in particular for small classes.
In this paper we will focus on two small circuit classes of functions, namely $\AC^0$ and $\FTC^0$, which are the first levels of the hierarchies $\AC$ and $\FTC$.

\begin{defn}%[Classes $\AC^i$ and $\FTC^i$]
    For $i\in \Nat$, the class $\ACi$ (resp., $\TCi$) is the class of languages recognized by a $\Dlogtime$-uniform family of Boolean circuits (resp., circuits including majority gates) of polynomial size and depth $O((log$ $n)^i)$.
    We denote by $\AC^i$ and $\FTC^i$ the corresponding function classes.
\end{defn}
\noindent
In other words, $\AC^0$ (resp., $\FTC^0$) is the class of functions recognized by a family of unbounded circuits (including $\maj$) of polynomial size and constant depth. We set $\AC=\bigcup_{i\in \N}\AC^i$ and $\FTC=\bigcup_{i\in \N}\FTC^i$. We also denote by $\FNC=\bigcup_{i\in \N}\FNC^i$ the similar hierarchy for bounded fan-in circuits.

%\begin{defn}[Class $\AC^0$ and $\FTC^0$]
%    The class $\AC^0$ (resp., $\FTC^0$) is the class of functions recognized by a $\Dlogtime$-uniform unbounded Boolean %circuit family $\{C_n\}$ of polynomial size and constant depth (resp., including unbounded $\maj$ gates).
%\end{defn}

%%%%%%%%%%%%%% SUBSUBSECTION
%%%%%%%%%%%%%% Function Algebras for Small Parallel Classes
\subsubsection{Function Algebras for Small Parallel Classes}\label{sec:art}

As mentioned, Cobham's work paved the way to several recursion-theoretic characterizations for classes other than $\FPTime$, including small circuit ones.
In particular, in 1988/90, Clote introduced the algebra $\mathcal{A}_0$ to capture functions in the log-time hierarchy (and equivalent to $\AC^0$)~\cite{Clote88,Clote1990}.
To do so, he introduced so-called \emph{concatenation recursion on notation} (CRN, for short), as inspired by a similar schema developed by Lind in the context of logarithmic space complexity~\cite{Lind}.
Then, this work was extended to other parallel classes, $\AC^i$ and $\FNC$, due to the notions of \emph{k-bounded recursion on notation} ($k$-BRN, for short) and \emph{weak bounded recursion on notation} (WBRN, for short).
Indeed, by introducing the algebras $\mathcal{A}_i$ and $\mathcal{A}$, Clote captured $\AC^i$ and $\FNC$, resp.~\cite{Clote1990}.
In the same year, also Compton and Laflamme presented an algebra (and a logic) to characterize $\FNC^1$~\cite{ComptonLaflamme}.
Their algebra is based on a recursion schema called \emph{upward tree recursion}.
In 1991, an alternative characterization of $\NC$ was provided by Allen, who also introduced a corresponding arithmetic theory in the style of Buss~\cite{Allen}.
The most important schema in his algebra was so-called \emph{polynomially bounded branching recursion}, which abstracts the divide-and-conquer technique.
In 1992 and 1995, other small circuit classes were considered (and re-considered) by Clote and Takeuti, who also introduced bounded theories for them~\cite{CloteTakeuti92,CloteTakeuti}.
In particular, in~\cite{CloteTakeuti} a function algebra to capture $\FTC^0$ was presented.
Another recent recursion-theoretic characterization of $\NC^i$ was developed by Bonfante, Khale, Marion and Oitavem, this time following the ramification approach by Leivant, Bellantoni and Cook~\cite{BonfanteKahleMarionOitavem}.

%%%%%%%%%% On Clote's Characterizations
%\paragraph{On Clote's and Takeuti's Characterizations of $\AC^0$ and $\FTC^0$}
Since they are of particular interest for our work, we present a few details on Clote and Takeuti's characterizations of $\AC^0$ and $\FTC^0$. 
%and other small circuit classes.
%
%In order to present them, 
To this aim, we need to formally introduce a recursion schema called concatenation recursion on notation and obtained by limiting the BRN one.

\begin{defn}[Concatenation Recursion on Notation, CRN]
    A function $f$ is defined by \emph{concatenation recursion on notation} from $g,h_0$ and $h_1$, denoted by $f=CRN(g,h_0,h_1)$ if, for all $x$ and $\tu y$:
    \begin{align*}
        f(0,\tu y) &= g(\tu y) \\
        f(\fun{s}_0(x), \tu y) &= \fun{s}_{h_0(x,\tu y)}\big(f(x,\tu y)\big) \quad for \; x \neq 0 \\
        f(\fun{s}_1(x), \tu y) &= \fun{s}_{h_1(x,\tu y)}\big(f(x,\tu y)\big)
    \end{align*}
    with $h_i\in \{0,1\}$.
\end{defn}

%\begin{defn}[Weak Bounded Recursion on Notation, WBRN]
%    A function $f$ is defined by \emph{weak bounded recursion on notation} from $g,h_0,h_1$ and $k$, denoted $f=WBRN(g,h_0,h_1,k)$, if for some $F(x,\tu y)$ defined by BRN from $g,h_0,h_1$ and $k$, $f(x,\tu %y)=F(\ell(x), \tu y)$.
%\end{defn}
%
%
%
%\begin{defn}[$k$-Bounded Recursion on Notation, $k$-BRN]
%    A function $f$ is defined by \emph{$k$-bounded recursion on notation} from $g, h_0$ and $h_1$, denoted by $f=k$-$BRN(g,h_0,h_1)$, if, for all $x$ and $\tu y$:
%    \begin{align*}
%        f(0,\tu y) &= g(\tu y) \\
%        %
%        f(\fun{s}_0(x), \tu y) &= h_0\big(f(x,\tu y), x, \tu y\big) \quad \text{for } x\neq 0 \\
%        %
%        f(\fun{s}_1(x), \tu y) &= h_1\big(f(x,\tu y), x, \tu y\big) \\
%        f(x,\tu y) &\leq k 
%    \end{align*}
%    where $k$ is a constant.
%\end{defn}
%Then recursion-theoretic characterizations for circuit classes are obtained by defining special function algebras, starting with $\mathcal{A}_0$ and $\mathcal{A}$.
%
\noindent
It was relying on this schemas that the function algebras $\mathcal{A}_0$ and $\mathcal{TC}_0$ were introduced and recursion-theoretic characterizations for $\AC^0$ and $\FTC^0$ were obtained.\footnote{We stick to the original notation from~\cite{Clote1990,CloteTakeuti92,CloteTakeuti}.}
%from the classes below.}

\begin{defn}[Clote's Class $\mathcal{A}_0$]
    Let $\mathcal{A}_0$ be the algebra:
    $$
    \mathcal{A}_0 = [\fun{0}, \fun{s}_0, \fun{s}_1, \ell, \fun{BIT}, \#, \pi^p_i; \circ, \emph{CRN}]
    $$
    where $\fun{BIT}(x,y) = \big\lfloor \frac{x}{2^y}\big\rfloor\mod 2$ returns the value of the $y$th bit in the binary representation of $x$ and $x\# y=2^{\ell(x) \times \ell(y)}$ is the smash function.
\end{defn}
\noindent
This class was proved able to capture the logarithmic time hierarchy $\FLH$, which is equal to $\AC^0$.

\begin{theorem}[\cite{Clote88,Clote1990}]
    $\AC^0=\FLH=\mathcal{A}_0$.
\end{theorem}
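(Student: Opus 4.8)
The plan is to prove the two equalities $\AC^0=\FLH$ and $\FLH=\mathcal{A}_0$ separately. The first is a purely descriptive-complexity fact: by the Barrington--Immerman--Straubing correspondence~\cite{BarringtonImmermanStraubing,Immerman87}, $\Dlogtime$-uniform $\AC^0$ coincides with first-order logic over $\{+,\times,\leq,\fun{BIT}\}$, and the standard quantifier-by-quantifier simulation of such formulas by alternating logarithmic-time machines (and conversely) identifies this class with the logarithmic-time hierarchy $\FLH$. I would simply invoke this equivalence, lifting it from predicates to functions bit-by-bit. The substantive work is the recursion-theoretic identity $\FLH=\mathcal{A}_0$, which I would establish by the usual two inclusions.

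For soundness, $\mathcal{A}_0\subseteq\FLH$, I would first check that each basic function lies in $\FLH$: the projections, $\fun{s}_0,\fun{s}_1,\ell$ and $\fun{BIT}$ are immediate bit manipulations, and $x\#y=2^{\ell(x)\times\ell(y)}$ has a binary representation (a single $1$ followed by $\ell(x)\times\ell(y)$ zeros) whose $i$-th bit is a first-order predicate of $i,x,y$. Closure under composition is routine, since composing constant-depth circuits only adds depths and multiplies sizes polynomially. The key point is closure under CRN: if $f=CRN(g,h_0,h_1)$, then the $i$-th output bit of $f(x,\tu y)$ is determined by $h_{b}(x',\tu y)$, where $x'$ is the prefix of $x$ of the appropriate length and $b$ is the corresponding bit of $x$; since $h_0,h_1\in\{0,1\}$ and the output length equals $\ell(x)$, all output bits can be produced in parallel at constant depth, so $f\in\FLH$.

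For completeness, $\FLH\subseteq\mathcal{A}_0$, I would build up the expressive power of $\mathcal{A}_0$ in stages. First I would derive from the basis the standard arithmetic and comparison functions, a pairing and sequence-coding apparatus bounded via $\#$, and a $0/1$-valued sign test (exploiting that $z>0$ iff $\ell(z)\geq 1$, which is itself $\mathcal{A}_0$-definable through $\ell$, $\fun{BIT}$ and CRN). The crucial lemma is that $\mathcal{A}_0$ is closed under \emph{sharply bounded quantification}: given a predicate $R(i,\tu y)\in\mathcal{A}_0$, the predicate $(\exists i<\ell(t))\,R(i,\tu y)$ is again in $\mathcal{A}_0$. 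This is obtained by using CRN to assemble the number whose $i$-th bit is $R(i,\tu y)$ and then applying the derived sign test (a \emph{big OR}) to that number, the length bound being supplied by $\#$. With bounded quantification available, I would simulate first-order formulas over $\{+,\times,\leq,\fun{BIT}\}$ by induction on their structure --- atomic predicates from the derived arithmetic, Boolean connectives from $\mathcal{A}_0$-definable operations, and quantifiers from the closure lemma --- thereby capturing every $\FLH$-predicate, and finally extend from predicates to functions by defining each output bit through $\fun{BIT}$ and reassembling it via CRN.

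The main obstacle I anticipate is precisely this closure under sharply bounded quantification inside the bit-local CRN schema: CRN writes only one bit per step as a function of an already-computed prefix, so realizing an unbounded fan-in OR (a single existential quantifier over a polynomial range of indices) requires first materializing all the individual truth values as the bits of one number and then collapsing them with a derived, genuinely global sign operation. One must check that this collapse, together with the indexing of $i$ into bit positions, stays within the algebra and respects the $\Dlogtime$-uniformity matching $\FLH$. Keeping the growth of all intermediate objects polynomially bounded via the smash function --- so that the simulation does not escape $\AC^0$ --- is the other point demanding care.
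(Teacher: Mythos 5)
Your proposal is correct in outline, but it takes a genuinely different route from the one the paper relies on (the theorem is cited from Clote, and the paper's proof sketch follows his). Your soundness half, $\mathcal{A}_0\subseteq\FLH$, coincides with the paper's: basic functions are log-time computable and the class is closed under composition and CRN (modulo a small slip: the output length of a CRN-defined function is $\ell(g(\tu y))+\ell(x)$, not $\ell(x)$). The divergence is in the completeness half. The paper, following Clote, proves $\FLH\subseteq\mathcal{A}_0$ by directly \emph{arithmetizing log-time bounded random access machines} -- coding configurations and transition relations inside the algebra -- whereas you route everything through the descriptive-complexity equivalence of Barrington--Immerman--Straubing: you take $\AC^0=\FLH=\mathrm{FO}[+,\times,\leq,\fun{BIT}]$ as a black box and then simulate first-order formulas in $\mathcal{A}_0$ by structural induction, with closure under sharply bounded quantification (via CRN plus a sign test) as the key lemma. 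Your key lemma is indeed the standard crux -- the paper itself remarks that its own treatment of bounded quantification is ``conceptually the same as the treatment of bounded quantification by Clote'' -- and your approach buys a machine-free argument that reuses the logical characterization you already invoked for the first equality. What it costs is a heavier reliance on BIS and one point you gloss over that genuinely demands care: the atomic predicate $\times$ of the first-order language lives on the \emph{index domain} $\{0,\dots,n-1\}$, so you must derive multiplication of logarithmic-length numbers inside $\mathcal{A}_0$ (full multiplication is not in $\AC^0$, hence not available); this is a known but nontrivial derivation, as is your parenthetical sign test, and both would need to be spelled out for the induction on atomic formulas to go through. Clote's machine arithmetization avoids this issue by never passing through the logic at all.
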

\noindent
That $\mathcal{A}_0 \subseteq \FLH$ is established by showing that basic functions are computable in log-time and that $\FLH$ is closed under composition and CRN;
that $\FLH \subseteq \mathcal{A}_0$ is proved by arithmetizing the computation of a log-time bounded random access machines.
\\
In~\cite{CloteTakeuti}, these results are generalized to $\FTC^0$, which is proved equivalent to the function algebra below:
$$
\mathcal{TC}_0 = [\fun{0}, \fun{s}_0, \fun{s}_1, \ell, \fun{BIT}, \times, \#, \pi^p_i; \circ, \text{CRN}].
$$
%In order to capture $\FNC$, the class $\mathcal{A}$ is defined relying not only on CRN, but also on WBRN.

%%%%%%%%%
%%% Commented paragraph
\raus{
\begin{defn}[Clote's Class $\mathcal{A}$]
    Clote's algebra $\mathcal{A}$ is defined below:
    $$
    \mathcal{A} = [\fun{0}, \fun{1}, \fun{s}_0, \fun{s}_1, \ell, \fun{BIT}, \#; \circ, \emph{CRN}, \emph{WBRN}].
    $$
\end{defn}
\noindent
This algebra was shown able to characterize $\FNC$.

\begin{theorem}[\cite{CloteTakeuti}]
    $\FNC=\mathcal{A}$.
\end{theorem}

Then, to have a finer view of the hierarchy for $\FNC$ and $\AC$, one has to limit the use of WBRN by allowing only a constant number $k$ of nested applications of WBRN.

\begin{defn}
 The rank of a function $f$ is defined as follows. If $f$ is among the initial functions  $\fun{0}, \fun{1}, \sucs_0, \sucs_1, \length{x},\#$ or $\fun{BIT}$, then $\rank{f}=0$; if $f$ is defined by composition, $\CRN$, or $\kBRN$ then $\rank{f}$ is equal to the maximum of the rank of the functions from which $f$ is defined; if $f$ is defined by $\WBRN$ scheme from $g, h_0, h_1$ and $k$ then:

    \[\rank{f}=\max\big\{
    \rank{g},\rank{k},1+\max\{\rank{h_0},\rank{h_1}\}
    \big\}\]
\end{defn}

\begin{defn}
Being $\mathcal{N}_i'$ the restriction of $\mathcal{N}$ to rank $i$ functions, let:
\begin{align*}
    \mathcal{N}' &= [\fun{0}, \fun{1}, \fun{s}_0, \fun{s}_1, \ell, \fun{BIT}, \#; \circ, \emph{CRN}, k\emph{-BRN}, \emph{WBRN}] \\
    \mathcal{N}_i' &= \{f \in \mathcal{N}' : \rank{f} \leq i\} \\
    \mathcal{A}_i &= \{f \in \mathcal{A} : \rank{f} \leq i\}.
\end{align*}
\end{defn}

\begin{theorem}[\cite{Clote88,Clote1990}]
    For all $k\in \Nat$,
    \begin{itemize}
        \itemsep0em
        \item $\mathcal{N}_i'=\FNC^{i+1}$
        \item $\mathcal{A}_i = \AC^i$.
    \end{itemize}
\end{theorem}
\noindent
In particular, $\mathcal{N}_0' = \FNC^1$.
} % end of commented paragraph

%%%%%%%% SECTION
%%%%%%%% Towards an ODE Characterization of FAC^0
\section{Towards an ODE Characterization of $\AC^0$}\label{sec:main}

In this Section, we provide the first implicit characterization of $\AC^0$ in the ODE setting.
We start by introducing the new ODE schemas which are at the basis of our characterizations of $\AC^0$ and $\FTC^0$ and intuitively correspond to (variously) left- and right-shifting (Section~\ref{sec:schemas}). 
Relying on them, we introduce the function algebra $\ACDL$, the defining feature of which is precisely the presence of ODE schemas, and prove that this algebra captures $\AC^0$ (Section~\ref{sec:FAC}).
This is established indirectly, by~passing through Clote's $\mathcal{A}_0$. 
Aiming to fully clarify the relationship between our discrete ODE schemas and $\AC^0$, as well as the ``minimal'' computational principles necessary to capture this class, alternative ODE-based characterizations for $\AC^0$ are presented.
As an additional consequence of these results, similar characterizations for $\FTC^0$ are naturally provided (Section~\ref{sec:FTC}).
Finally, alternative, direct proofs of completeness are presented for both classes in a non-uniform setting (Section~\ref{sec:direct}).

\begin{remark}
    Here, functions can take images in $\mathbb{Z}$.
    Accordingly, a convention for the binary representation of integers must be adopted, e.g.~by assuming that, in any binary sequence, the first bit indicates the sign.
    Then, all algorithmic operations can be easily re-designed to handle the encoding of possibly negative integers by circuits of the same size and depth.
\end{remark}

\subsection{Discrete ODE-Schemas for Shifting}\label{sec:schemas}

Let us start by introducing a few preliminary notions.

\begin{notation}
    In the following we will use $\div 2$ to denote integer division by 2, i.e.~for all $x\in \mathbb{Z}$, $x\div 2 =\big\lfloor \frac{x}{2}\big\rfloor$.
\end{notation}
\noindent
To deal with $\AC^0$, we also introduce the auxiliary notions of weak arithmetic expression and of $\fun{sg}$-weak arithmetic expression, which are the natural counterparts of the concepts of polynomial and $\fun{sg}$-polynomial expressions from~\cite{BournezDurand23}, respectively, in the $\AC^0$ setting.

\begin{defn}[Weak Arithmetic Expression]
    A \emph{weak (arithmetic) expression} is an expression built over the signature $\{+,-,\div 2\}$.
    A \emph{$\fun{sg}$-weak (arithmetic) expression} is an expression built over the signature $\{+,-, \div 2, \fun{sg}\}$.
\end{defn}
\noindent
When considering  $\fun{sg}$-expressions, we assume that in them the function $\fun{sg}$ is used only in weak, non-signed expressions, i.e.~no imbrication of $\fun{sg}$ is allowed.

\begin{notation}
    In the following, we will write $A(x,\tu y, \tu f, \tu h)$ as a shorthand for the expression $A\big(x, \tu y, \tu f(x,\tu y), \tu h(x,\tu y)\big)$, $A(x, \tu y, \tu h)$ as a shorthand for $A(x, \tu y, \tu h(x,\tu y))$ and $A(x,\tu y)$ when the expression is ``purely arithmetic'', i.e.~does not contain any call to $\tu f(x,\tu y)$ or $\tu h(x,\tu y)$.
\end{notation}

\begin{remark}
    If $f_1,\dots, f_k$ are functions computable in $\AC^0$, then any weak arithmetic expression $A(x, \tu y, f_1,\dots, f_k)$ is computable in $\AC^0$.
\end{remark}

We now introduce the ODE schemas which are at the basis of our characterizations of $\AC^0$ and $\FTC^0$.
Observe that they will sometimes include $\times$.
This is admissible since, as we shall see, the ``kind of multiplication'' we consider is actually limited to special cases (namely, multiplication by $2^i$), which are proved to be computable in $\AC^0$.

%%%% PARAGRAPH
\paragraph{The $\ell$-ODE$_1$ and $\ell$-ODE$_2$ Schemas}
We start with the limited $\ell$-ODE$_1$ schema, intuitively corresponding to basic left shifting(s) and (possibly) adding a bit.

\begin{defn}[Schema $\ell$-ODE$_1$]\label{def:ODE1}
    Given $g:\Nat^p \to \Nat$ and $h:\Nat^{p+1} \to \Nat$, such that $h$ takes values in $\{0,1\}$ only, the function $f:\Nat^{p+1} \to \Nat$ is defined by $\ell$-\emph{ODE}$_1$ from functions $g$ and $h$ when it is the solution of the IVP:
    \begin{align*}
        f(0,\tu y) &= g(\tu y) \\
        \frac{\partial f(x,\tu y)}{\partial \ell} &= f(x,\tu y) + h(x, \tu y).
    \end{align*}
\end{defn}
\begin{remark}
    An equivalent, purely-syntactical formulation of Definition~\ref{def:ODE1} is obtained by substituting the explicit constraint imposing that $h(x,\tu y)\in \{0,1\}$ with the assumption that, in the more general linear form of the equation, such that $\frac{\partial f(x,\tu y)}{\partial \ell}= A(f, h, x, \tu y) \times f(x, \tu y) + B(f, h, x, y)$, again $A=1$ but $B$ is of the form $\fun{sg}\big(B'(h_1,\dots, h_m,x,\tu y)\big)$, where $B'$ is a weak arithmetic expression, calling previously defined $\AC^0$ functions $h_1(x,\tu y),\dots, h_m(x,\tu y)$.
\end{remark}
\noindent
The definition of the function $2^{\ell(x)}$ (and $2^{\ell(x)}\times y$) presented in Example~\ref{ex:2ell} is an instance of $\ell$-ODE$_1$, corresponding to the special case of $g(\tu y)=1$ and $h(x,\tu y)=0$.

\begin{remark}
    An equivalent, purely-syntactical formulation of Definition~\ref{def:ODE1} is obtained by substituting the explicit constraint imposing that $h(x,\tu y)\in \{0,1\}$ with the assumption that it is of the form $\fun{sg}\big(B(h_1,\dots, h_m,x,\tu y)\big)$, where B is a $\fun{sg}$-weak arithmetic expression, calling previously defined $\AC^0$ functions $h_1(x,\tu y),\dots, h_m(x,\tu y)$.
\end{remark}

In terms of circuits, this schema intuitively allows us to iteratively left-shifting (the binary representation of) a given number, each time possibly adding 1 to its final position.
This is clarified by the proof below, establishing that $\AC^0$ is closed under the mentioned schema.

%%%%%% PROP
\begin{proposition}\label{prop:ODE1}
    If $f$ is defined by $\ell$-\emph{ODE}$_1$ from $g$ and $h$ in $\AC^0$, then $f$ is in $\AC^0$.
\end{proposition}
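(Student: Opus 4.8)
The plan is to solve the initial value problem in closed form and then verify that the resulting function has an $\AC^0$-computable bit-graph of polynomially bounded length. The essential simplification is that in Definition~\ref{def:ODE1} the coefficient of $f$ is the constant $A=1$ and the additive term $B=h(x,\tu y)$ does not itself call $f$, so there is no genuine self-reference and the general solution formula for linear $\ell$-ODEs recalled above applies directly. Substituting $A=1$ (so that each factor $1+A$ equals $2$ and $\prod_{t=u+1}^{\ell(x)-1} 2 = 2^{\ell(x)-1-u}$) and $B=h$, with the convention that the $u=-1$ summand equals $g(\tu y)$, I obtain
$$
f(x,\tu y) \;=\; 2^{\ell(x)}\times g(\tu y)\;+\;\sum_{u=0}^{\ell(x)-1} 2^{\ell(x)-1-u}\times h(\alpha(u),\tu y),
$$
where $\alpha(u)=2^u-1$. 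Since each $h(\alpha(u),\tu y)\in\{0,1\}$, the summation is a number $N<2^{\ell(x)}$ whose bit at position $\ell(x)-1-u$ equals $h(\alpha(u),\tu y)$; thus $f$ is obtained by left-shifting $g(\tu y)$ by $\ell(x)$ positions and writing these sampled bits into the freed low-order positions, which is exactly the bit-by-bit left-shifting behaviour announced informally before the statement.

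Next I would read off the bit-graph. Because the two summands occupy disjoint ranges of bit positions --- the low-order block $0,\dots,\ell(x)-1$ for $N$ and positions $\ge\ell(x)$ for $2^{\ell(x)}\times g(\tu y)$ --- no carries propagate between them, so each output bit can be described directly: for $0\le j<\ell(x)$ the $j$-th bit of $f(x,\tu y)$ is $h\big(\alpha(\ell(x)-1-j),\tu y\big)$, and for $j\ge\ell(x)$ it is $\fun{BIT}\big(g(\tu y),\,j-\ell(x)\big)$. Each ingredient here --- the length function $\ell$, the comparison of $j$ with $\ell(x)$, the subtraction $\ell(x)-1-j$, the map $u\mapsto\alpha(u)=2^u-1$, the function $\fun{BIT}$, and the hypothesised $\AC^0$ functions $g$ and $h$ --- is in $\AC^0$, and $\AC^0$ is closed under composition; hence deciding whether a given bit of $f(x,\tu y)$ is $1$ is an $\AC^0$ predicate. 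The required length bound also follows from the closed form, since $N<2^{\ell(x)}$ gives $\ell(f(x,\tu y))\le \ell(x)+\ell(g(\tu y))+1$, which is polynomial in $\ell(x)+\ell(\tu y)$ because $g\in\AC^0$. Together these two facts place $f$ in $\AC^0$.

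The main obstacle, and the point deserving the most care, is the passage from the ODE to its closed form together with the verification that no $\AC^0$-hard operation is implicitly required. Concretely I must check (i) that multiplication by the power of two $2^{\ell(x)}$ is a shift and therefore stays in $\AC^0$, as promised in the remark preceding the schema; (ii) that feeding $h$ the sampling argument $\alpha(u)=2^u-1$ keeps intermediate values of polynomially bounded length, so that closure under composition genuinely applies; and (iii) the disjointness-of-bit-ranges observation, which is what lets me avoid computing an unrestricted iterated binary addition, an operation not available in $\AC^0$. An alternative and equally viable route would be to recognise the closed form as a concatenation-recursion-on-notation construction and to invoke closure of Clote's algebra $\mathcal{A}_0=\AC^0$ under CRN; I would nonetheless use the direct bit-graph argument as the primary proof since it is self-contained.
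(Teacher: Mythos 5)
Your proof is correct and follows essentially the same route as the paper's: both unroll the linear $\ell$-ODE into the closed form $f(x,\tu y)=2^{\ell(x)}g(\tu y)+\sum_{u=0}^{\ell(x)-1}2^{\ell(x)-1-u}h(\alpha(u),\tu y)$, observe that the multiplications are mere shifts and that $h\in\{0,1\}$ makes the sum a carry-free concatenation, and conclude via parallel evaluation of the polynomially many values $g(\tu y),h(\alpha(u),\tu y)$ in constant depth. The only cosmetic difference is that you phrase the final step as an $\AC^0$ bit-graph description (with the explicit length bound), whereas the paper describes the same content as a three-stage circuit (parallel evaluation, shifting/padding, bitwise disjunction).
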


\begin{proof}
By Definition~\ref{def:ODE1}, for all $x$ and $\tu y$:
\begin{align*}
    f(x,\tu y) &= \sum^{\ell(x)-1}_{u=-1} \Bigg(\prod^{\ell(x)-1}_{t=u+1} 2\Bigg) \times h\big(\alpha(u), \tu y\big) \\
    &= \sum^{\ell(x)-1}_{u=-1} 2^{\ell(x)-u-1} \times h\big(\alpha(u), \tu y\big)
\end{align*}
with the convention that $\alpha(u)=2^u-1$, for any $x$ and $\kappa(x)$, $\prod^{x-1}_x \kappa(x)=1$ and $h(\alpha(-1), \tu y)=f(0,\tu y)$.
Notice that the given multiplication is always by a power of 2 decreasing for each increasing value of $u$, which basically corresponds to left-shifting (so can be computed in $\AC^0$).
Hence, since by Definition~\ref{def:ODE1}, $h(x,\tu y)\in \{0,1\}$, the outermost sum amounts to a concatenation (which again can be computed in $\AC^0$).

Concretely, for any inputs $x$ and $\tu y$, the desired polynomial-sized and constant-depth circuit to compute $f(x,\tu y)$ is defined as follows:
\begin{itemize}
    \itemsep0em
    \item \begin{sloppypar} In parallel, compute the values of $g(\tu y)$ and of each $h(\alpha(u), \tu y)$, with $u\in \{0,\dots, \ell(x)-1\}$.
    For hypothesis, $g$ and $h$ are computable in $\AC^0$, and, since there are $\ell(x)+1$ initial values to be computed, the entire desired computation can be done in polynomial size and constant depth.
    \end{sloppypar}
    \item In one step, (left-)shift the binary representation of $h(\alpha(-1), \tu y)=g(\tu y)$ by padding $\ell(x)$ zeros on the right and, for $u\ge 0$, (left-)shift each value $h(\alpha(u), \tu y)$ by padding on the right $\ell(x)-u-1$ zeros (this corresponds to multiply by $2^{\ell(x)-u-1}$) and padding on the left $u+\ell(g(\tu y))$ zeros.
    \item Compute bit by bit the disjunction of all values computed above.
    Clearly, this is done in constant depth.
\end{itemize}
\end{proof}
\noindent
Observe that this schema is not as weak as it may seem as, together with the $\fun{sg}$ function, it is already sufficient to express bounded quantification.

\begin{remark}\label{remark:boundedQ}
    Let $R\subseteq \Nat^{p+1}$ and $h_R$ be its characteristic function.
    Then, for all $x$ and $\tu y$, it holds that $(\exists z\leq \ell(x))R(z,\tu y)=\fun{sg}(f(x,\tu y))$, where $f$ is the solution of the IVP:
    \begin{align*}
        f(0,\tu y) &= h_R(0,\tu y) \\
        \frac{\partial f(x,\tu y)}{\partial \ell} &= f(x,\tu y) + h_R\big(\ell(x+1), \tu y\big)
    \end{align*}
    which is an instance of $\ell$-\emph{ODE}$_1$.
    Intuitively,
    % this defines a conditional of the form:
    %$$
    %f(x+1,\tu y) = 2\times f(x,\tu y) + %\begin{cases}
        %1 \quad \text{if } h_R(x,\tu y) = 1 \\
        %0 \quad \text{otherwise.}
    %\end{cases}
    %$$
    %So 
    $f(x,\tu y)\neq 0$ when, for some $z$ smaller than $\ell(x)$, $R(z,\tu y)$ is satisfied (i.e.~$h_R(z,\tu y)=1$): if such instance exists, our bounded search ends with a positive answer.
    Clearly, this is conceptually the same as the treatment of bounded quantification by Clote (see~\cite{CloteKranakis}), although in his function algebra $\fun{sg}$ is not primitive.
    
    Universally bounded quantification can be expressed in a similar way. 
    Let $\fun{cosg}(x)=1-\fun{sg}(x)$ and consider $\fun{cosg}(f(x,\tu y))$, for $f$ defined substituting the value of $h_R$ with its co-sign. So, $(\forall z\leq \ell(x))R(z,\tu y)=\fun{cosg}(f(x,\tu y))$ for $f$ such that:
    \begin{align*}
        f(0,\tu y) &= \fun{cosg}\big(h_R(0,\tu y)\big) \\
        \frac{\partial f(x,\tu y)}{\partial \ell} &= f(x,\tu y) + \fun{cosg}\big(h_R(\ell(x+1), \tu y)\big).
    \end{align*}
\end{remark}

We now consider a more general schema, called $\ell$-ODE$_2$.
Intuitively it allows multiple left-shifting, such that each ``basic operation'' corresponds to shifting a given value \emph{of a specific number of digits}, determined by $2^{\ell(k(\tu y))}$.

\begin{defn}[Schema $\ell$-ODE$_2$]
    Given $g:\Nat^p \to \Nat$, $h: \Nat^{p+1} \to \Nat$ and $k:\Nat^p \to \Nat$, the function $f:\Nat^{p+1} \to \Nat$ is defined by $\ell$-\emph{ODE}$_2$ from functions $g, h$ and $k$ if it is the solution of the IVP:
    \begin{align*}
        f(0,\tu y) &= g(\tu y) \\
        \frac{\partial f(x,\tu y)}{\partial \ell} &= \big(2^{\ell(k(\tu y))}-1\big) \times f(x,\tu y) + h(x,\tu y),
    \end{align*}
    where $h(x,\tu y) \in \{0,1\}$ and if, for some $x$ and $\tu y$, $h(x, \tu y)= 1$, then $k(\tu y)\neq 0$.
\end{defn}
\noindent
Again, we are using the multiplication symbol with a slight abuse of notation as the ``multiplication'' involved in the computation above is of a special kind, i.e.~corresponds to multiplying by $2^i$, which is computable in constant depth.
Since this schema is introduced to characterize $\AC^0$, the constraint imposing $k(\tu y)\neq 0$, when at some point $h(x,\tu y)$ takes value 1, is really essential.
Indeed, as we shall see (Section~\ref{sec:FTC}), if we omit it, $\ell$-ODE$_2$ will be too strong, as able to capture binary counting, which is not in $\AC^0$.

Observe that $\ell$-ODE$_1$ is a special case of $\ell$-ODE$_2$, such that $\ell(k(\tu y))=1$ and that, also for it, the desired closure property holds.

\begin{proposition}\label{prop:ODE2}
    If $f$ is defined by $\ell$-\emph{ODE}$_2$ from $g$ and $h$ in $\AC^0$, then $f$ is in $\AC^0$.
\end{proposition}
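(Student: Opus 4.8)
The plan is to imitate the proof of Proposition~\ref{prop:ODE1}: I first make the solution of the $\ell$-ODE$_2$ explicit, then read off a polynomial-size constant-depth circuit from the resulting closed form. The equation is linear in $\tu f$, with coefficient $\tu A(x,\tu y)=2^{\ell(k(\tu y))}-1$ (which depends neither on $x$ nor on $\tu f$) and inhomogeneous term $\tu B(x,\tu y)=h(x,\tu y)$, so it is an instance of the linear $\ell$-ODE solution scheme recalled just before Example~\ref{ex:2ell}. Substituting $1+\tu A=2^{\ell(k(\tu y))}$, the inner product becomes a product of $\ell(x)-u-1$ equal factors and reduces to a single power of two, giving, for all $x$ and $\tu y$,
$$
f(x,\tu y)=\sum_{u=-1}^{\ell(x)-1} 2^{\ell(k(\tu y))\,(\ell(x)-u-1)}\;h\big(\alpha(u),\tu y\big),
$$
with the conventions $\alpha(u)=2^{u}-1$ and $h(\alpha(-1),\tu y)=g(\tu y)$.

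Each summand is the value $h(\alpha(u),\tu y)$ multiplied by the power of two $2^{\ell(k(\tu y))(\ell(x)-u-1)}$, i.e.\ left-shifted by that many positions, which is simply padding with zeros. I then argue the sum is carry-free, and this is exactly where the side condition of the schema is used. For $u\ge 0$ each $h(\alpha(u),\tu y)\in\{0,1\}$ is a single bit sitting at position $\ell(k(\tu y))(\ell(x)-u-1)$; if any of these bits can be $1$ the schema guarantees $k(\tu y)\neq 0$, hence $\ell(k(\tu y))\ge 1$, so the positions are strictly decreasing in $u$, pairwise distinct, and all lie strictly below position $\ell(k(\tu y))\ell(x)$, where the (possibly multi-bit) block $g(\tu y)$ starts. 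Thus no two contributions share a bit position, the additions create no carries, and $f(x,\tu y)$ is literally the bitwise disjunction of the shifted summands. This also explains the constraint: without it one could take $\ell(k(\tu y))=0$, collapsing every shift to $0$, so that the sum would pile the $1$-bits on top of each other and compute a genuine count, i.e.\ binary counting, which is not in $\AC^0$.

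To realize this in $\AC^0$ I follow Proposition~\ref{prop:ODE1}: compute $g(\tu y)$ and the $\ell(x)+1$ values $h(\alpha(u),\tu y)$ in parallel (each is an $\AC^0$ function), shift each to its target position, and take the bitwise $\vee$. The one genuinely new point --- and what I expect to be the main obstacle --- is that the shift amount is now the product $\ell(k(\tu y))(\ell(x)-u-1)$, not a bare index as in Proposition~\ref{prop:ODE1}. I would dispose of it by observing that both factors are \emph{lengths}: since $k$ and the inputs have polynomial size, $\ell(k(\tu y))$ and $\ell(x)-u-1$ are numbers of only $O(\log n)$ bits, so their product is a Boolean function of $O(\log n)$ input bits and is therefore computed by a polynomial-size DNF, placing it in (uniform) $\AC^0$; realizing a shift by the resulting amount is then the same $\AC^0$ padding/selection operation used implicitly in Proposition~\ref{prop:ODE1}. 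Since parallel evaluation, padding, and a final disjunction are all constant depth, this yields $f\in\AC^0$.
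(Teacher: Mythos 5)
Your proof is correct and follows essentially the same route as the paper's: unroll the linear ODE into the closed form $f(x,\tu y)=\sum_{u=-1}^{\ell(x)-1} 2^{\ell(k(\tu y))(\ell(x)-u-1)}\, h(\alpha(u),\tu y)$, observe that the schema's side condition makes the sum carry-free (the $k(\tu y)=0$ case degenerating to $f=g$, which the paper treats as a separate explicit case and you fold into the carry-freeness argument), and realize it by parallel evaluation of $g$ and the $h(\alpha(u),\tu y)$, shifting, and a final bitwise disjunction. If anything, you are slightly more careful than the paper on two points it glosses over: the pairwise distinctness of the bit positions, and the fact that the shift amounts $\ell(k(\tu y))(\ell(x)-u-1)$, being products of $O(\log n)$-bit quantities, are themselves computable in constant depth and polynomial size.
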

\begin{proof}
    There are two main cases to be taken into account.
    If $k(\tu y)\neq 0$ the proof is similar to that of Proposition~\ref{prop:ODE1}.
    Indeed, for all $x$ and $\tu y$:
    \begin{align*}
        f(x,\tu y) &= \sum^{\ell(x)-1}_{u=-1} \Bigg(\prod^{\ell(x)-1}_{t=u+1} 2^{\ell(k(\tu y))}\Bigg) \times h\big(\alpha(u), \tu y\big) \\
        &= \sum^{\ell(x)-1}_{u=-1} 2^{\ell(k(\tu y))\times (\ell(x)-u-1)} \times h\big(\alpha(u), \tu y\big)
    \end{align*}
    with the convention that $\alpha(u)=2^u-1$, $\prod^{x-1}_x \kappa(x)=1$ and $h(\alpha(-1), \tu y) = f(0,\tu y)$.
    As said, multiplication here corresponds to a left-shifting such that its ``basic shifting'' corresponds to a left movement of $\ell(k(\tu y))$ digits.
    As for the basic case, it can be easily shown that this operation can be implemented by a constant-depth circuit.
    Then, analogously to Proposition~\ref{prop:ODE1}, the outermost iterated sum amounts to concatenation (as, by construction, $h(\alpha(u), \tu y)\in \{0,1\}$).

    Concretely, we can construct a constant-depth circuit generalizing the procedure defined for the special case of $\ell$-ODE$_1$:
    \begin{itemize}
        \itemsep0em
        \item In parallel, compute the values of $g(\tu y)$ and $h(\alpha(u), \tu y)$, for each $u\in \{0,\dots, \ell(x)-1\}$. 
        This can be done in constant depth by hypothesis.
        \item In one step, shift the value of $g(\tu y)$ by padding $\ell(k(\tu y)) \times \ell(x)$ zeros on the right, and, for $u\ge 0$, shift all values $h(\alpha(u), \tu y)$ by padding on the right $\ell(k(\tu y)) \times (\ell(x)-u-1)$ zeros (i.e.~multiplying by $2^{\ell(k(\tu y))\times (\ell(x)-u-1)}$) and by padding on the left $\ell(g(\tu y)) + \ell(k(\tu y)) \times (u+1)-1$ zeros.
        \item Compute bit by bit the disjunction of all the values above.
    \end{itemize}
    In the special case of $k(\tu y)=0$ and $h(x,\tu y)=0$, for all $x$ and $\tu y$, it holds that $f(x,\tu y)=g(\tu y)$.
    This is clearly computable in $\AC^0$, as corresponding to computing $g(\tu y)$, which is computable in constant depth by hypothesis.
\end{proof}

Let us also introduce a weaker form of $\ell$-ODE$_2$, called $wk$-ODE$_2$.
Intuitively, this schema is defined by simply omitting the ``bit addition'' part of $\ell$-ODE$_2$.
As we shall see in Section~\ref{sec:FAC}, together with $\ell$-ODE$_1$ (and forthcoming $\ell$-ODE$_3$), this schema is expressive enough to capture $\AC^0$.

\begin{defn}[Schema $wk$-ODE$_2$]
    Given $g:\Nat^p \to \Nat$, the function $f:\Nat^{p+1}\to \Nat$ is defined by wk-\emph{ODE}$_2$ from $g$ if it is the solution of the IVP:
    \begin{align*}
        f(0,\tu y) &= g(\tu y) \\
        \frac{\partial f(x,\tu y)}{\partial \ell} &= \big(2^{\ell(\tu y)}-1\big) \times f(x,\tu y).
    \end{align*}
\end{defn}

% \Melissa{More details/investigation on the gap between $\ell$-ODE$_1$ and $\ell$-ODE$_2$/wk-ODE$_2$?}

\paragraph{The $\ell$-ODE$_3$ Schema}
Let us now consider $\ell$-ODE$_3$, intuitively corresponding to (basic) right-shifting operations.

\begin{defn}[Schema $\ell$-ODE$_3$]\label{def:ODE3}
    Given $g:\Nat^p\to \Nat$, the function $f:\Nat^{p+1}\to \Nat$ is defined by $\ell$-\emph{ODE}$_3$ from $g$ if it is the solution of the IVP:
    \begin{align*}
        f(0,\tu y) &= g(\tu y) \\
        \frac{\partial f(x,\tu y)}{\partial \ell} &= - \Bigg\lceil \frac{f(x,\tu y)}{2}\Bigg\rceil
    \end{align*}
    where $\lceil \frac{z}{2}\rceil$ is a shorthand for $z-(z\div 2)$.
\end{defn}
\noindent
For $x> 0$, the above equation can be rewritten as:
$$
f(x,\tu y) = f(x-1, \tu y) - \Delta \ell(x-1) \times \Bigg\lceil  \frac{f(x-1,\tu y)}{2} \Bigg\rceil
$$
where, as seen, $\Delta\ell(x-1)=\ell(x)-\ell(x-1)$.
Observe that, also in this case, we are using $\times$ with a slight abuse of notation: indeed, we are dealing with ``bit multiplication'' and multiplying a number by 0 or 1 can be easily done in $\AC^0$ (and re-written in our setting using the basic conditional function, which, as we will see in Section~\ref{sec:FAC}, can be defined in $\ACDL$). 
In other words,
\begin{align*}
    f(x,\tu y) &= \begin{cases}
        f(x-1,\tu y) \quad \quad &\text{if } \ell(x)=\ell(x-1) \\
        f(x-1, \tu y) - \Big\lceil \frac{f(x-1,\tu y)}{2} \Big\rceil \quad \quad &\text{otherwise}
    \end{cases} \\
    &= \begin{cases}
        f(x-1, \tu y) \quad \quad \quad \quad \quad \quad
        \quad \; &\text{if } \ell(x) = \ell(x-1) \\
        \Big\lfloor \frac{f(x-1,\tu y)}{2} \Big\rfloor \quad \quad &\text{otherwise}
    \end{cases} \\
    &=
    \begin{cases}
        f(x-1,\tu y) \quad \quad \quad \quad \quad \quad \quad \quad &\text{if } \ell(x) = \ell(x-1) \\
        f(x-1, \tu y) \div 2 \quad \quad &\text{otherwise.}
    \end{cases}
    \end{align*}
A bit more formally,
    $$
f(x,\tu y) 
= \Bigg\lfloor \frac{f(\beta(\ell(x)-1),\tu y)}{2} \Bigg\rfloor
= \Bigg\lfloor \frac{f(2^{\ell(x)-1}-1,\tu y)}{2}\Bigg\rfloor 
=f\big(2^{\ell(x)-1}-1, \tu y\big) \div 2
    $$
where $\beta(\ell(z))=2^{\ell(z)}-1$ is the greater integer the length of which is $\ell(z)$, i.e.~here $2^{\ell(x)-1}-1$ is the greatest integer the length of which is $\ell(x)-1$.
Hence, starting with $x>0$, there are $\ell(x)-1$ jumps of values.

\begin{proposition}\label{prop:ODE3}
    If $f$ is defined by $\ell$-\emph{ODE}$_3$ from $g$ in $\AC^0$, then $f$ is in $\AC^0$. % as well.
\end{proposition}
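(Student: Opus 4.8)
The plan is to rely on the explicit solution obtained by iterating the recursion displayed just above the statement, $f(x,\tu y)=f(2^{\ell(x)-1}-1,\tu y)\div 2$, down to its base value $f(0,\tu y)=g(\tu y)$, where $\ell(0)=0$. Since each step of this recursion lowers the length of the argument by one and applies a single integer division by $2$ (recall $z-\lceil z/2\rceil=z\div 2$), iterating it yields the closed form
$$
f(x,\tu y)=\Bigl\lfloor \frac{g(\tu y)}{2^{\ell(x)}}\Bigr\rfloor=g(\tu y)\div 2^{\ell(x)}.
$$
In other words, $f(x,\tu y)$ is just $g(\tu y)$ with its $\ell(x)$ lowest-order bits deleted, i.e.\ a right-shift of $g(\tu y)$ by the variable amount $\ell(x)$. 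Hence the whole proposition reduces to showing that such a variable right-shift is computable in $\AC^0$ whenever $g$ is, and I would structure the proof exactly as in Proposition~\ref{prop:ODE1}: first establish the closed form, then exhibit a polynomial-size, constant-depth circuit implementing it.

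For the circuit, I would, in parallel, compute $g(\tu y)$ (in $\AC^0$ by hypothesis) and the length $\ell(x)$ (computable in $\AC^0$). By the closed form, the $j$th output bit equals the $(j+\ell(x))$th bit of $g(\tu y)$, namely $\fun{BIT}\bigl(g(\tu y),\,j+\ell(x)\bigr)$. As $\ell(x)$ can only take one of the polynomially many values $s\in\{0,\dots,\ell(x)\}$, I would realise this shift by a multiplexer: the $j$th output bit is
$$
\bigvee_{s}\Bigl(\bigl[\ell(x)=s\bigr]\wedge\fun{BIT}\bigl(g(\tu y),\,j+s\bigr)\Bigr).
$$
For each fixed pair $(j,s)$ the index $j+s$ is a constant resolved at design time, so $\fun{BIT}(g(\tu y),j+s)$ is merely a designated wire of the already available representation of $g(\tu y)$, while the guard $[\ell(x)=s]$ is an equality test against a constant and hence in $\AC^0$. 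Each output bit is therefore a constant-depth disjunction of polynomially many $\AC^0$-computable terms, giving a polynomial-size, constant-depth circuit overall.

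The main obstacle is simply to argue cleanly that this variable-amount right-shift (a barrel shifter) lives in $\AC^0$; the multiplexer above makes this explicit, and the only points requiring care are that the number of candidate shift amounts is polynomial in the input length and that the addressing arithmetic $j+s$ is carried out at circuit-design time rather than by an in-circuit addition. No sign handling intervenes here, since $g:\Nat^p\to\Nat$ forces $f$ to take values in $\Nat$, so that $\lceil\cdot/2\rceil$ and $\div 2$ behave throughout as plain floor division.
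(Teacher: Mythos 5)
Your proposal is correct and follows essentially the same route as the paper's proof: derive the closed form $f(x,\tu y)=g(\tu y)\div 2^{\ell(x)}$ by iterating the recursion, then observe that this variable right-shift of $g(\tu y)$ by $\ell(x)$ positions is computable in $\AC^0$. The only difference is that you spell out the multiplexer (barrel-shifter) circuit for the shift, a detail the paper compresses into the remark that it ``can be easily implemented by a constant depth circuit.''
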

\begin{proof}
    As clarified by the remarks above, relying on Definition~\ref{def:ODE3}, and since $(a\div 2^b) \div 2 = a\div 2^{b+1}$, it is easily shown by induction that, for all $x$ and $\tu y$:
    \begin{align*}
        f(x,\tu y) &= \Bigg\lfloor \frac{g(\tu y)}{\prod^{\ell(x)-1}_{u=0} 2}\Bigg\rfloor \\
        &= \Bigg\lfloor \frac{g(\tu y)}{2^{\ell(x)}}\Bigg\rfloor \\
        &= g(\tu y) \div 2^{\ell(x)}.
    \end{align*}
    This corresponds to right-shifting $g(\tu y)$ a number of times equal to $\ell(x)$, which can be easily implemented by a constant depth circuit.
\end{proof}

%%%%%%%%%%% ALTERNATIVE ODE SCHEMAS
\paragraph{Alternative ODE Schemas}

As a last variant, we consider a schema $\ell$-ODE$_4$, which encapsulates both left- and right-shifting (with no bit addition) simultaneously. It will be proved useful to give an alternative characterization of $\AC^0$.

%\Melissa{Additional justifications?}

\begin{defn}[Schema $\ell$-ODE$_4$]
    Given $g,k:\Nat^p\to \Nat$, the function $f:\Nat^{p+1}\to \Nat$ is defined by $\ell$-\emph{ODE}$_4$ from functions $g$ and $k$ if it is the solution of the IVP:
    \begin{align*}
        f(0,\tu y) &= g(\tu y) \\
        \frac{\partial f(x,\tu y)}{\partial \ell} &= \big(2^{\pm \ell(k(\tu y))}-1\big) \times_\uparrow f(x,\tu y)
    \end{align*}
    where $\times_\uparrow$ is so defined that $\frac{v}{2^{v'}} \times_\uparrow z = \Big\lceil \frac{v\times z}{2^{v'}}\Big\rceil$.
\end{defn}
\noindent
\begin{remark}\label{remark:ODE4}
It is clear that both wk-\emph{ODE}$_2$ and $\ell$-\emph{ODE}$_3$ are special cases of $\ell$-\emph{ODE}$_4$, in particular such that $\pm \ell(k(\tu y))= \ell(\tu y)$ and $\pm \ell(k(\tu y))= -1$, respectively.
\end{remark}

\begin{remark}\label{remark:2ellODE4}
    The IVPs presented in Example~\ref{ex:2ell} are not only instances of $\ell$-\emph{ODE}$_1$ (with $h(x,\tu y)=0$), but also of $\ell$-\emph{ODE}$_4$, considering $\frac{\partial f(x,\tu y)}{\partial \ell}=(2^{+\ell(k(\tu y))}-1) \times_\uparrow f(x,\tu y)$ with $k(\tu y)=1$.
\end{remark}

\begin{proposition}\label{prop:ODE4}
    If $f$ is defined by \emph{$\ell$-\emph{ODE}$_4$} from $g$ and $k$ in $\AC^0$,
    then $f$ is in $\AC^0$ as well.
\end{proposition}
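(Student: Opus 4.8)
The plan is to mirror the proofs of Propositions~\ref{prop:ODE2} and~\ref{prop:ODE3}, exploiting the observation recorded in Remark~\ref{remark:ODE4} that the sign in $2^{\pm\ell(k(\tu y))}$ selects between a left-shifting regime (the positive case, behaving like wk-ODE$_2$) and a right-shifting regime (the negative case, behaving like $\ell$-ODE$_3$). First I would split the argument into these two cases and, in each, extract a closed form for $f(x,\tu y)$ by unfolding the length-ODE. As for every $\ell$-ODE, the value of $f(x,\tu y)$ is obtained after exactly $\ell(x)$ jumps, and since the right-hand side here carries no additive ($\tu B$) term, only the contribution of the initial value $g(\tu y)$ survives in the solution formula of~\cite{BournezDurand23} recalled above.

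In the positive case, $2^{\ell(k(\tu y))}-1$ is an integer, so $\times_\uparrow$ collapses to ordinary multiplication and the equation is a linear $\ell$-ODE with $\tu A=2^{\ell(k(\tu y))}-1$ and $\tu B=0$. The product formula then yields
\[
f(x,\tu y)=\Bigg(\prod_{t=0}^{\ell(x)-1}\big(1+\tu A\big)\Bigg)\times g(\tu y)=2^{\ell(k(\tu y))\cdot\ell(x)}\times g(\tu y),
\]
using $1+\tu A=2^{\ell(k(\tu y))}$; that is, $g(\tu y)$ left-shifted by $\ell(k(\tu y))\cdot\ell(x)$ positions, exactly as in the computation underlying Proposition~\ref{prop:ODE2}. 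In the negative case I would instead compute the effect of a single jump: a short calculation with the definition of $\times_\uparrow$ shows that each jump sends $f(x,\tu y)$ to $\lceil f(x,\tu y)/2^{\ell(k(\tu y))}\rceil$, and iterating $\ell(x)$ times (using $\lceil\lceil a/b\rceil/c\rceil=\lceil a/(bc)\rceil$ for the telescoping) gives
\[
f(x,\tu y)=\Bigg\lceil\frac{g(\tu y)}{2^{\ell(k(\tu y))\cdot\ell(x)}}\Bigg\rceil,
\]
i.e.\ $g(\tu y)$ right-shifted by the same amount (up to rounding of the least significant retained bit), as in Proposition~\ref{prop:ODE3}.

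It remains to build the constant-depth circuit. In parallel I would compute $g(\tu y)$ and $k(\tu y)$ (both in $\AC^0$ by hypothesis), the lengths $\ell(x)$ and $\ell(k(\tu y))$, and the shift amount $s=\ell(k(\tu y))\cdot\ell(x)$; since $g,k\in\AC^0$ have outputs of polynomial length, both $\ell(x)$ and $\ell(k(\tu y))$ are numbers with $O(\log n)$ bits, so their product depends on only $O(\log n)$ input bits and is therefore computable in $\AC^0$ by a brute-force disjunctive normal form of polynomial size. Finally, a (left- or right-) shift of the polynomial-length string $g(\tu y)$ by the binary amount $s$ is performed by a barrel shifter, routing each output bit from the appropriate input position after an $\AC^0$ comparison of indices; this is the same padding and selection mechanism already used in Propositions~\ref{prop:ODE2} and~\ref{prop:ODE3}. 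I expect the main obstacle to be the bookkeeping of the ceiling operation $\times_\uparrow$ across the $\ell(x)$ jumps in the negative case, since the ceiling does not distribute over the iterated product and one must verify that the per-jump rounding collapses to a single rounding of the total shift; once the telescoping identity is in place, the rest is routine.
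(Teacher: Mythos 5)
Your proposal is correct and follows essentially the same route as the paper's proof: a case split on the sign of the exponent, closed forms obtained by unrolling the ODE ($f(x,\tu y)=2^{\ell(k(\tu y))\cdot\ell(x)}\times g(\tu y)$ in the positive case, a right-shift of $g(\tu y)$ by $\ell(k(\tu y))\cdot\ell(x)$ positions in the negative case, with $k(\tu y)=0$ subsumed as a degenerate instance), followed by constant-depth shift circuits — your circuit details (polynomial-size DNF for the product of the two $O(\log n)$-bit lengths, barrel shifter) are in fact more explicit than the paper's. The one discrepancy is the rounding in the negative case: you read $(2^{-\ell(k(\tu y))}-1)\times_\uparrow f$ literally and obtain ceiling-jumps $f\mapsto\lceil f/2^{\ell(k(\tu y))}\rceil$, whereas the paper factors the minus sign out first and obtains floor-jumps $f\mapsto\lfloor f/2^{\ell(k(\tu y))}\rfloor$ (the reading forced by Remark~\ref{remark:ODE4}, under which $\ell$-ODE$_3$ is literally a special case), but since both $\lceil g(\tu y)/2^{s}\rceil$ and $\lfloor g(\tu y)/2^{s}\rfloor$ are computable in constant depth, the conclusion is unaffected.
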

\begin{proof}
    The proof is by cases:
    \begin{itemize}
        \itemsep0em
        \item Assume $k(\tu y)=0$.
        Then, for any $x$ and $\tu y$, $f(x,\tu y)=g(\tu y)$, which is in $\AC^0$ for hypothesis.
        \item Assume $+\ell(k(\tu y))$.
        Then, $f$ is defined from $g$ and $h$ as the solution of the IVP:
        \begin{align*}
            f(0,\tu y) &= g(\tu y) \\
            \frac{\partial f(x,\tu y)}{\partial \ell} &= \big(2^{\ell(k(\tu y))}-1\big) \times_\uparrow f(x,\tu y)
        \end{align*}
        and the proof is similar to that of Proposition~\ref{prop:ODE2}. 
        Indeed, since $\ell(k(\tu y)) >1$, we can simply treat $\times_\uparrow$ as $\times$ (where, as said, we use the standard multiplication symbol $\times$ with a slight abuse of notation, as it corresponds to multiplication by a power of 2, i.e.~corresponds to left-shifting and can be computed in constant depth).
        So for every $x$ and $\tu y$:
        $$
        f(x,\tu y) = \prod^{\ell(x)-1}_{t=0} 2^{\ell(k(\tu y))} \times g(\tu y)
        $$
        Concretely, this corresponds to left shifting the value of $g(\tu y)$ by padding $2^{\ell(k(\tu y))\times \ell(x)}$ zeros on the right and can be computed in $\AC^0$.
        \item Assume $-\ell(k(\tu y))$. 
        Then, $f$ is defined from $g$ and $h$ as the solution of the IVP below:
        \begin{align*}
            f(0,\tu y) &= g(\tu y) \\
            \frac{\partial f(x,\tu y)}{\partial \ell} &= \big(2^{-\ell(k(\tu y))}-1\big) \times_\uparrow f(x,\tu y).
        \end{align*}
        The special case of $\ell(k(\tu y))=1$ precisely corresponds to $\ell$-ODE$_3$, while the other cases are natural generalizations of it.
        Recall that $(2^{-1}-1)\times_\uparrow f(x,\tu y)= - \frac{1}{2}\times_\uparrow f(x,\tu y)= - \Big\lceil \frac{f(x,\tu y)}{2}\Big\rceil$, and we have seen that, since $(a\div 2^b)\div 2 = a\div 2^{b+1}$, it is proved by induction that, for all $x$ and $\tu y$:
        $
            f(x,\tu y) = \Big\lfloor \frac{g(\tu y)}{2^{\ell(x)}}\Big\rfloor 
            = g(\tu y) \div 2^{\ell(x)}.
        $
        This intuitively corresponds to right-shifting $g(\tu y)$ a number of times equal to $\ell(x)$ and can be computed by a constant depth circuit.
        When dealing with $\ell(k(\tu y))>1$, the procedure is analogous but we consider a sequence of right-shifting such that ``basic shifting'' is no more of one bit. 
        The number of digits for of the basic right-shifting is determined by the value of $\ell(k(\tu y))$, namely
        $f(x,\tu y) = \Big\lfloor \frac{g(\tu y)}{2^{\ell(x)\ell (y)}}\Big\rfloor$.
    \end{itemize}
\end{proof}

%\Melissa{Do we want to consider even the schema $\ell$-ODE$_5$, i.e. the stronger version of $\ell$-ODE$_2$?}

\subsection{An ODE Characterization of $\AC^0$}\label{sec:FAC}
We define a new class of functions, crucially relying on two of the ODE schemas just introduced:
$$
\ACDL = [\fun{0}, \fun{1},  \ell, \fun{sg}, +, -, \div 2, \pi^p_i; \circ, \ell\text{-ODE}_2, \ell\text{-ODE}_3].
$$
Observe that all its basic functions and (restricted) schemas are natural in the context of differential equations and calculus.
In $\ACDL$, multiplication is, of course, not allowed.
Compared to $\LDL$, the linear-length ODE schema is substituted by the two schemas $\ell$-ODE$_2$ and $\ell$-ODE$_3$, characterized by a very limited form of ``multiplication'' and, as seen, intuitively capturing left- and right-shifting.

In order to prove that $\ACDL$ characterizes $\AC^0$ we start by providing an \emph{indirect} proof that $\AC^0\subseteq \ACDL$.
This is established by showing that basic functions and schemas defining Clote's $\mathcal{A}_0$ (so its arithmetization of log-time bounded RAM) can be simulated in our setting by functions and schemas of $\ACDL$.
Preliminary, observe that some important operations ``come for free'' by composition.
For instance, the modulo 2 operation is defined as $x \; \fun{mod \; 2}= x-\big\lfloor \frac{x}{2}\big\rfloor - \big\lfloor \frac{x}{2}\big\rfloor$, while the binary successor functions are expressed in our setting as $\fun{s}_0(x)=x+x$ and $\fun{s}_1(x)=\fun{s}_0(x)+\fun{1}$ (being the constant $\fun{1}$ and + basic functions of $\ACDL$).

\subsubsection{The Smash Function 2$^{\ell(x)\times \ell(y)}$.}
The smash function $x\# y:x,y\mapsto 2^{\ell(x)\times \ell(y)}$ can be rewritten as the solution of the IVP below:
\begin{align*}
    f(0, y) &= 1 \\
    \frac{\partial f(x,y)}{\partial \ell} &= \big(2^{\ell(y)}-1\big) \times f(x,y).
\end{align*}
Indeed, $f(x,y)=\prod^{\ell(x)-1}_{t=0} 2^{\ell (y)} \times 1 = 2^{\ell(y)\times \ell(x)}$.
This is clearly an instance of $\ell$-ODE$_2$ (and of $wk$-ODE$_2$), such that $g(\tu y)=1$, $k(\tu y)=y$ and $h(x,\tu y)=0$.
Recall that, since $h(x,\tu y)=0$, also the case $y=0$ is properly captured.

\begin{remark}\label{remark:smashODE4}
    Observe that the IVP corresponding to the smash function is also an instance of $\ell$-\emph{ODE}$_4$ (namely, such that $g(\tu y)=1$, $\pm\ell(k(\tu y))= + \ell(k(y))$ and $k(\tu y)=y$).
\end{remark}

\subsubsection{The $\fun{BIT}$ Function}\label{sec:BIT}
Intuitively, the function $\fun{BIT}(x,y)$ returns the $y^{th}$ bit in the binary representation of $x$.
In order to capture it, a series of auxiliary functions are needed:
\begin{itemize}
    \itemsep0em
    \item the \emph{log most significant part function} $\fun{msp}(x,y):x,y\mapsto \Big\lfloor \frac{y}{2^{\ell(x)}}\Big\rfloor$, which can be rewritten via $\ell$-ODE$_3$.
    \item the \emph{basic conditional function} $\fun{if}(x,y,z)$, returning $y$ if $x=0$ and $z$ otherwise, can be rewritten in our setting by composition, using, in particular, the ``shift function'' $2^{\ell(x)}\times y$ (as seen, defined in $\ACDL$ using $\ell$-ODE$_1$).
    \item the \emph{special bit function} $\fun{bit}(x,y)$, returning 1 when the $\ell(y)^{th}$ bit of $x$ is 1, can be rewritten in $\ACDL$ due to $\fun{msp}$.
    \item the \emph{bounded exponentiation function} $\fun{bexp}(x,y)$, that, for any $y\leq \ell(x)$, returns $2^y$, can be obtained relying on functions in $\ACDL$, including, in particular $\fun{msp}, \fun{if}$ and $2^{\ell(\cdot)}$ together with ODE schemas.
\end{itemize}
Then, using $\fun{bexp}$ and $\fun{bit}$, the desired function $\fun{BIT}$ can be rewritten in our setting by composition:
$$
\fun{BIT}(x,y) = \fun{bit}\big(x, \fun{bexp}(x,y) -1 \big).
$$
Notably, alternative proofs are possible, but the one proposed here, and based on the introduction of $\fun{bexp}$, not only has the advantage of being straightforward, but also avoids the unnatural use of the most significant part function, $\fun{MSP}$.

\paragraph{Log most significant part function}
The \emph{log most significant part function} $\fun{msp}(x,y):x,y \mapsto \big\lfloor \frac{y}{2^{\ell(x)}}\big\rfloor$ can be rewritten as the solution of the IVP below:
\begin{align*}
    f(0,y) &= y \\
    \frac{f(x,y)}{\partial \ell} &= - \Bigg\lceil \frac{f(x,y)}{2} \Bigg\rceil
\end{align*}
which is clearly an instance of $\ell$-ODE$_3$, such that $g(\tu y)=y$.

\begin{remark}\label{remark:mspODE}
    This is not only an instance of $\ell$-\emph{ODE}$_3$, but also of $\ell$-\emph{ODE}$_4$, such that $g(\tu y)=y$, $\pm \ell(k(\tu y))=-\ell(k(\tu y))$ and $k(\tu y)=1$.
\end{remark}

\paragraph{Basic conditional function}
We introduce the basic conditional function:
$$
\fun{if}(x,y,z)= \begin{cases}
    y \quad \quad &\text{if } x= 0 \\
    z \quad \quad &\text{otherwise.}
\end{cases}
$$
Notice that this function is also crucial to rewrite the CRN schema.
As seen, the ``shift function'' $2^{\ell(x)} \times y$ can be rewritten via $\ell$-ODE$_1$ (see Example~\ref{ex:2ell}).
Thus, $\fun{if}(x,y,z)$ is simulated in our setting by composition from shift, addition and subtraction:
$$
\fun{if}(x,y,z) = \big(2^{\ell(1-\fun{sg}(x))} \times y - y\big) + \big(2^{\ell(\fun{sg}(x))}\times z-z\big).
$$
Indeed, as desired, if $x=0$, then $\fun{sg}(x)=0$ and $\ell(\fun{sg}(x))=0$, so that $\fun{if}(0,y,z)=\big(2^{\ell(1)} \times y - y\big) + \big(2^{\ell(0)}\times z-z\big)=y$;
similarly, for $x\neq 0$, $\fun{if}(0,y,z)=\big(2^{\ell(0)}\times y - y\big) + \big(2^{\ell(1)} \times z - z\big) =z$.
Generalizing this definition we can capture the more general conditional function below:
$$
\fun{cond}(x,v,y,z) = \begin{cases}
    y \quad \quad &\text{if } x<v \\
    z \quad \quad &\text{otherwise.}
\end{cases}
$$

%\Melissa{Further details/Appendix on general conditional? (This has to be checked: I am not 100$\%$ sure about it.}

\paragraph{Special bit function}
Then, we consider the special bit function $\fun{bit}(x,y)$ returning 1 when the $\ell(y)^{th}$ bit of $x$ is 1.
This can be rewritten in $\ACDL$ due to $\fun{msp}$:
$$
\fun{bit}(x,y) = \fun{msp}(y,x) - 2 \times \fun{msp}(2y +1,x ).
$$

\paragraph{Bounded exponentiation function}
Finally, we introduce the function $\fun{bexp}(x,y)$, that, for any $y\leq \ell(x)$, returns $2^y$.
We start by defining $f_{aux}(t,x,i)$ by the $\ell$-ODE$_1$ schema below:
\begin{align*}
    f_{aux}(\fun{0},x,i) &= \fun{if}(i,\fun{1},\fun{0}) \\
    \frac{\partial f_{aux}(t,x,i)}{\partial \ell(t)} &= f_{aux}(t,x,i) + h_{aux}(t,i)
\end{align*}
where
$$
h_{aux}(t,i) = \fun{if}(\ell(t)-i,\fun{1},\fun{0})
$$
Observe that, as seen, $\fun{if}$ can be rewritten in $\ACDL$ (while $\ell$ and subtraction are basic functions).
Then, for $i\leq \ell(x)$, we obtain $f_{aux}(x,x,i)=2^{\ell(x)-i}$.
The function $\fun{bexp}$ is then defined as follows:
$$
\fun{bexp}(x,i) = \fun{msp}\big(f_{aux}(x,x,i)-\fun{1}, 2^{\ell(x)}\big) = \Bigg\lfloor \frac{2^{\ell(x)}}{2^{\ell(x)-i}}\Bigg\rfloor = 2^i
$$
since the length of $f_{aux}(x,x,i)-1$ is $\ell(x)-i$.
Clearly, the function $\fun{bexp}$ is also in $\ACDL$, as all the functions involved in its definitions (namely, $\fun{msp}$, $f_{aux}$ and $2^{\ell(\cdot)})$ are in $\ACDL$.

\subsubsection{The CRN Schema}\label{sec:CRN}
A function $f$ defined by CRN from $g, h_0$ and $h_1$ can be simulated in $\ACDL$ via the $\ell$-ODE$_1$ schema.
Let us consider the following IVP:
\begin{align*}
    F(0,x,\tu y) &= g(\tu y) \\
    \frac{\partial F(t,x,\tu y)}{\partial \ell(t)} &= F(t,x,\tu y) + h(t+1, x, \tu y)
\end{align*}
where $h(t,x,\tu y)\in \{0,1\}$ is, in turn, defined as:
$$
\fun{if} \Big(\fun{bit}\big(x, 2^{\ell(x)-\ell(t)}-1\big),
h_0\big(\fun{msp}(2^{\ell(x)-\ell(t)},x),\tu y\big), h_1\big(\fun{msp}(2^{\ell(x)-\ell(t)}, x), \tu y\big)\Big).
$$
The function $F(t,x,\tu y)$ is clearly an instance of $\ell$-ODE$_1$ and $h(t,x,\tu y)$ is defined by composition from functions proved to be in $\ACDL$.
Then, we set $f(x,\tu y)=F(x,x,\tu y)$.

\subsubsection{Characterizing $\AC^0$}

We now have all the ingredients to prove our main result.

\begin{theorem}\label{theorem:ACDL}
    $\ACDL=\AC^0$.
\end{theorem}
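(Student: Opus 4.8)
The plan is to establish the equality by proving the two inclusions $\ACDL \subseteq \AC^0$ and $\AC^0 \subseteq \ACDL$ separately, with the bulk of the technical content already isolated in the preceding propositions and simulations.

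For the soundness direction $\ACDL \subseteq \AC^0$, I would argue by structural induction on the way a function $f \in \ACDL$ is built. In the base case one checks that every basic function of the algebra, namely $\fun{0}, \fun{1}, \ell, \fun{sg}, +, -, \div 2$ and the projections $\pi^p_i$, is computable by a constant-depth, polynomial-size circuit; this is immediate. For the inductive step I would invoke the closure of $\AC^0$ under composition, together with Propositions~\ref{prop:ODE2} and~\ref{prop:ODE3}, which already show that $\AC^0$ is closed under $\ell$-ODE$_2$ and $\ell$-ODE$_3$. Since these are precisely the two schemas permitted in $\ACDL$, every function assembled in the algebra remains in $\AC^0$.

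For the completeness direction $\AC^0 \subseteq \ACDL$, I would pass indirectly through Clote's characterization $\AC^0 = \mathcal{A}_0$, so that it suffices to prove $\mathcal{A}_0 \subseteq \ACDL$, again by structural induction on $\mathcal{A}_0$. The generators are handled one by one using the constructions preceding the theorem: $\fun{0}, \ell$ and the projections are basic in $\ACDL$; the binary successors rewrite by composition as $\fun{s}_0(x) = x + x$ and $\fun{s}_1(x) = \fun{s}_0(x) + \fun{1}$; the smash function $\#$ is the instance of $\ell$-ODE$_2$ exhibited above; and $\fun{BIT}$ is obtained by composition from $\fun{msp}, \fun{if}, \fun{bit}$ and $\fun{bexp}$, each placed in $\ACDL$ in Section~\ref{sec:BIT}. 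For the schema, closure under composition holds by definition of $\ACDL$, while closure under CRN is given by the simulation of Section~\ref{sec:CRN}, where a CRN-defined $f$ is recovered as $f(x,\tu y) = F(x,x,\tu y)$. Chaining the inclusions yields $\AC^0 = \mathcal{A}_0 \subseteq \ACDL$, and combining both directions gives the claimed equality.

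The main obstacle is not the inductive assembly, which is routine once the components are in place, but rather the bookkeeping that guarantees all the intermediate simulations genuinely live inside $\ACDL$ using only $\ell$-ODE$_2$ and $\ell$-ODE$_3$ as primitive schemas. Several of the constructions for $\fun{BIT}$ and CRN are phrased via $\ell$-ODE$_1$, which is \emph{not} a primitive of $\ACDL$; one must therefore confirm that $\ell$-ODE$_1$ is derivable, which it is, being the special case $\ell(k(\tu y)) = 1$ of $\ell$-ODE$_2$ together with the shift function $2^{\ell(x)} \times y$ of Example~\ref{ex:2ell}. Checking that no step covertly requires full multiplication or an unavailable schema is the delicate point on which the completeness direction rests.
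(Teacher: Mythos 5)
Your proposal is correct and follows essentially the same route as the paper: soundness via the basic functions being in $\AC^0$ plus closure under composition and Propositions~\ref{prop:ODE2} and~\ref{prop:ODE3}, and completeness indirectly through Clote's $\mathcal{A}_0$, using the simulations of $\#$, $\fun{BIT}$ and CRN developed before the theorem. Your added bookkeeping remark --- that the $\ell$-ODE$_1$-based constructions are legitimate inside $\ACDL$ because $\ell$-ODE$_1$ is the special case $\ell(k(\tu y))=1$ of $\ell$-ODE$_2$ --- is exactly the observation the paper makes when introducing $\ell$-ODE$_2$, so no gap remains.
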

\begin{proof}
    $\ACDL \subseteq \AC^0$. All basic functions of $\ACDL$ are computable in $\AC^0$.
    Moreover, the class is closed under composition and, by Propositions~\ref{prop:ODE2} and~\ref{prop:ODE3}, under $\ell$-ODE$_2$ and $\ell$-ODE$_3$.
\\
    $\AC^0\subseteq \ACDL$. As we have just proved, all functions and schemas constituting $\mathcal{A}_0$ can be rewritten in $\ACDL$.
    Then, it is routine to mimic Clote's arithmetization~\cite{Clote88,Clote1990} and obtain a direct proof in our setting. 
\end{proof}

%\Melissa{Shall we add some examples of interesting functions in $\AC^0$? E.g. LogItAdd and LBCount?}

\noindent
A careful analysis of the above definitions shows that the full power of $\ell$-ODE$_2$ is actually used only to capture the smash function $\#$.
Therefore, a class equivalent to $\ACDL$ is defined by allowing $\#$ and by replacing $\ell$-ODE$_2$ with the simpler $\ell$-ODE$_1$ schema.

\begin{lemma}\label{cor:FAC}
    $\AC^0 = [\fun{0}, \fun{1}, \ell, \fun{sg}, +, -, \div 2, \#, \pi^p_i; \circ, \ell\emph{-ODE}_1, \ell\emph{-ODE}_3].$
\end{lemma}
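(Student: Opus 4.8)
The plan is to prove the two inclusions separately, reusing as much of the machinery behind Theorem~\ref{theorem:ACDL} as possible. Write $\ACDLbis$ for the function algebra on the right-hand side of the statement; it differs from $\ACDL$ only in that the smash function $\#$ is promoted to a basic function while $\ell$-ODE$_2$ is replaced by the weaker $\ell$-ODE$_1$ schema.

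For the inclusion $\ACDLbis\subseteq\AC^0$ I would argue exactly as in the first half of Theorem~\ref{theorem:ACDL}. All basic functions of $\ACDLbis$ other than $\#$ are already basic functions of $\ACDL$, hence lie in $\AC^0$; and $\#$ is itself in $\AC^0$, being a basic function of Clote's algebra $\mathcal{A}_0=\AC^0$. Closure of $\AC^0$ under composition is immediate, and closure under the two remaining schemas $\ell$-ODE$_1$ and $\ell$-ODE$_3$ is given by Propositions~\ref{prop:ODE1} and~\ref{prop:ODE3}. Note that $\ell$-ODE$_2$ is no longer present, so Proposition~\ref{prop:ODE2} is not even invoked. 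This yields $\ACDLbis\subseteq\AC^0$.

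For the converse $\AC^0\subseteq\ACDLbis$, the key observation --- already flagged before the statement --- is that in the proof of $\AC^0\subseteq\ACDL$ the schema $\ell$-ODE$_2$ is used at a single point, namely to define $\#$. I would therefore revisit that proof, which simulates $\mathcal{A}_0$, and track the role of each construction: the binary successors are obtained by composition; $\fun{msp}$ (and hence $\fun{bit}$ and $\fun{BIT}$) is obtained via $\ell$-ODE$_3$; and the conditional $\fun{if}$, the shift $2^{\ell(x)}\times y$, the auxiliary $f_{aux}$ and $\fun{bexp}$, and the simulation of CRN are all obtained via $\ell$-ODE$_1$, together with composition and the basic functions. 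Consequently, once $\#$ is available as a basic function, the entire simulation of $\mathcal{A}_0$ goes through using only $\ell$-ODE$_1$ and $\ell$-ODE$_3$, i.e.\ inside $\ACDLbis$. Since $\AC^0=\FLH=\mathcal{A}_0$ by Clote's theorem, this gives $\AC^0\subseteq\ACDLbis$ and closes the argument.

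The only genuinely delicate point is the bookkeeping in the last paragraph: one must verify that $\ell$-ODE$_2$ really is confined to the definition of $\#$ throughout the earlier development, and in particular that none of the auxiliary functions feeding into $\fun{BIT}$ or into the CRN simulation secretly exploits the extra shifting power of $\ell$-ODE$_2$. A more self-contained alternative would be to prove directly that $\ACDLbis$ is closed under $\ell$-ODE$_2$ (giving $\ACDL\subseteq\ACDLbis$ at once): by the closed form in Proposition~\ref{prop:ODE2}, a function defined by $\ell$-ODE$_2$ from $g,h,k$ places the bits $h(\alpha(u),\tu y)\in\{0,1\}$ at positions spaced $\ell(k(\tu y))$ apart, and $\#$ supplies precisely the power of two $2^{\ell(x)\times\ell(k(\tu y))}$ needed to realise that spacing by shifting. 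I expect the confinement argument to be the shorter route, so I would make it the main line and relegate the closure argument to a remark.
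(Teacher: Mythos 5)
Your main line coincides with the paper's own argument: the paper obtains this lemma precisely from the observation that, in the simulation of $\mathcal{A}_0$ underlying Theorem~\ref{theorem:ACDL}, the schema $\ell$-ODE$_2$ is invoked only to define $\#$ (all of $\fun{msp}$, $\fun{if}$, $\fun{bit}$, $\fun{bexp}$, $\fun{BIT}$ and the CRN simulation use only $\ell$-ODE$_1$, $\ell$-ODE$_3$ and composition), so taking $\#$ as a basic function lets the whole simulation go through, while the converse inclusion follows from Propositions~\ref{prop:ODE1} and~\ref{prop:ODE3} together with the membership of the basic functions, $\#$ included, in $\AC^0$. Your bookkeeping of where each auxiliary construction lives is accurate, so the proof is correct and essentially identical to the paper's; the alternative closure-under-$\ell$-ODE$_2$ route you sketch is not needed and is rightly relegated to a remark.
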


To further clarify what features of discrete ODEs are really essential for capturing $\AC^0$ computation, we present alternative characterizations for this class, obtained by changing either the ODE schemas or the basic functions.

%an alternative function algebra, again not including the $\ell$-ODE$_2$ schema:
%$$
%w\ACDL = [\fun{0}, \fun{1}, \div 2, \ell, \fun{sg}, +, -; \circ, \ell\text{-ODE}_1, wk\text{-ODE}_2, \ell\text{-ODE}_3]
%$$
%It is easy to see that also this function algebra captures $\AC^0$.

\begin{lemma}
    $
    \AC^0=[\fun{0}, \fun{1}, \ell, \fun{sg}, +, -, \div 2, \pi^p_i ; \circ, \ell\emph{-ODE}_1, wk\emph{-ODE}_2, \ell\emph{-ODE}_3].
    $
\end{lemma}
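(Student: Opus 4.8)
The plan is to read the displayed algebra---call it $\mathcal{B}$---as a minor repackaging of the two algebras already shown to equal $\AC^0$: the algebra $\ACDL$ of Theorem~\ref{theorem:ACDL} and the algebra of Lemma~\ref{cor:FAC}. Indeed, $\mathcal{B}$ has exactly the basic functions of $\ACDL$ while replacing the single schema $\ell$-ODE$_2$ by the three schemas $\ell$-ODE$_1$, $wk$-ODE$_2$ and $\ell$-ODE$_3$; equivalently, it is obtained from the Lemma~\ref{cor:FAC} algebra by dropping the basic function $\#$ and adding $wk$-ODE$_2$ as a closure operation. I would prove the two inclusions separately, leaning on these two identifications.

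For $\mathcal{B}\subseteq\AC^0$ I would argue as in the first half of Theorem~\ref{theorem:ACDL}. The basic functions $\fun{0}, \fun{1}, \ell, \fun{sg}, +, -, \div 2, \pi^p_i$ are precisely those already checked to lie in $\AC^0$, and $\AC^0$ is closed under composition. Closure under $\ell$-ODE$_1$ and $\ell$-ODE$_3$ is Propositions~\ref{prop:ODE1} and~\ref{prop:ODE3}. For $wk$-ODE$_2$ I would note that it is the instance of $\ell$-ODE$_2$ with $h\equiv 0$ and $k$ a projection onto one of the parameters; since the side condition of $\ell$-ODE$_2$ (that $h(x,\tu y)=1$ somewhere forces $k(\tu y)\neq 0$) is then vacuously satisfied, Proposition~\ref{prop:ODE2} applies and yields closure under $wk$-ODE$_2$. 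Hence every function of $\mathcal{B}$ is in $\AC^0$.

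For $\AC^0\subseteq\mathcal{B}$ I would route through Lemma~\ref{cor:FAC}, which already states $\AC^0 = [\fun{0}, \fun{1}, \ell, \fun{sg}, +, -, \div 2, \#, \pi^p_i; \circ, \ell\text{-ODE}_1, \ell\text{-ODE}_3]$. Every generator of that algebra is available in $\mathcal{B}$ except the smash function: the basic functions coincide, and both $\ell$-ODE$_1$ and $\ell$-ODE$_3$ are among the closure schemas of $\mathcal{B}$. It therefore suffices to define $\#$ inside $\mathcal{B}$, and this is exactly the computation carried out in the treatment of the smash function in Section~\ref{sec:FAC}: the map $x,y\mapsto 2^{\ell(x)\times\ell(y)}$ is the solution of the IVP $f(0,y)=1$ and $\partial f/\partial\ell = (2^{\ell(y)}-1)\times f$, which is an instance of $wk$-ODE$_2$ (with $g(y)=1$). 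Since $wk$-ODE$_2$ is one of the schemas of $\mathcal{B}$, we obtain $\#\in\mathcal{B}$, whence the whole Lemma~\ref{cor:FAC} algebra, and thus $\AC^0$, is contained in $\mathcal{B}$. Combining the two inclusions gives $\mathcal{B}=\AC^0$.

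I do not expect a genuine obstacle here: the lemma is essentially a bookkeeping statement recording that the \emph{only} place where the full strength of $\ell$-ODE$_2$ entered the arithmetization of Clote's $\mathcal{A}_0$ was the smash function, and that this particular instance already falls under the weaker $wk$-ODE$_2$. The one point that deserves care is the direction of the specialization relations---verifying that both $\ell$-ODE$_1$ and $wk$-ODE$_2$ are genuine special cases of $\ell$-ODE$_2$ (so that dually $\mathcal{B}\subseteq\ACDL$ and no new functions are introduced), and in particular that the side condition of $\ell$-ODE$_2$ is indeed respected by the $h\equiv 0$ instance defining $wk$-ODE$_2$.
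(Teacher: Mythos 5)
Your proposal is correct, and the soundness direction ($\mathcal{B}\subseteq\AC^0$) is exactly the paper's argument: basic functions in $\AC^0$, plus Propositions~\ref{prop:ODE1} and~\ref{prop:ODE3}, plus the observation that $wk$-ODE$_2$ is the $h\equiv 0$ instance of $\ell$-ODE$_2$ (with the side condition vacuous), so Proposition~\ref{prop:ODE2} applies. Where you diverge is the completeness direction: the paper does not factor through Lemma~\ref{cor:FAC} but instead re-runs the simulation of Clote's $\mathcal{A}_0$ inside $\mathcal{B}$ item by item --- binary successors by addition and composition, the smash function $\#$ via $wk$-ODE$_2$ (the same IVP you cite), $\fun{BIT}$ via the constructions of Section~\ref{sec:BIT} (which use only $\ell$-ODE$_1$ and $\ell$-ODE$_3$), and the CRN schema via Section~\ref{sec:CRN}. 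Your route --- take Lemma~\ref{cor:FAC} as a black box, note that all its generators and schemas except $\#$ live in $\mathcal{B}$, and supply $\#$ by $wk$-ODE$_2$ --- is logically sound given the paper's ordering of statements, and it is more modular and shorter. What it costs is self-containedness: Lemma~\ref{cor:FAC} is itself stated in the paper with only the informal justification that ``the full power of $\ell$-ODE$_2$ is used only for $\#$,'' and the explicit verification that underwrites that claim (BIT and CRN need only $\ell$-ODE$_1$ and $\ell$-ODE$_3$) is precisely what the paper's proof of the present lemma spells out. So your proof shifts the real verification burden onto Lemma~\ref{cor:FAC} rather than discharging it, whereas the paper's proof discharges it directly; the key mathematical observation --- the smash IVP is an instance of $wk$-ODE$_2$ --- is common to both.
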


\begin{proof}
    Right-to-left inclusion:  As for Theorem~\ref{theorem:ACDL}, we know that all basic functions are computable in $\AC^0$ and that the class is closed under $\ell$-ODE$_1$ (Proposition~\ref{prop:ODE1}), $wk$-ODE$_2$ (special case of Proposition~\ref{prop:ODE2}) and $\ell$-ODE$_3$ (Proposition~\ref{prop:ODE3}).
    \\
    Left-to-write inclusion: As for Theorem~\ref{theorem:ACDL} the proof is indirect and obtained by showing that we can rewrite all basic functions and schemas of $\mathcal{A}_0$ in the given function algebra: % $w\ACDL$:
    \begin{itemize}
        \itemsep0em
        \item the binary successor functions $\fun{s}_0$ and $\fun{s}_1$ can be easily rewritten by addition and composition.
        \item the smash function $\#$ can be rewritten via $wk$-ODE$_2$; indeed $x\#y$ is the solution of the IVP below:
        \begin{align*}
            f(0,y) &= 1 \\
            \frac{\partial f(x,y)}{\partial \ell} &= (2^{\ell(y)}-1)\times f(x)
        \end{align*}
        %
        %which is precisely an instance of $wk$-ODE$_2$.
        \item the function $\fun{BIT}$ is rewritten as  in Section~\ref{sec:BIT}, namely by composition from $\fun{bit}$ and $\fun{bexp}$.
        Indeed, we have shown that $\fun{bit}$ is defined by composition using $\fun{msp}$, in turn rewritten using $\ell$-ODE$_3$, while $\fun{bexp}$ is defined relying not only on $\fun{msp}$ but also on $f_{aux}$ and $2^{\ell(\cdot)}$, which are both defined by $\ell$-ODE$_1$, in the former case passing through $\fun{if}$.
        \item the CRN schema can be rewritten  as done in Section~\ref{sec:CRN}, i.e.~by $\ell$-ODE$_1$ and functions definable in the class  ($2^{\ell(\cdot)}, \fun{bit}$ and $\fun{msp}$).
    \end{itemize}
\end{proof}

By similar arguments and by remembering that $\AC^0$ is closed under the $\ell$-ODE$_4$ schema (Proposition~\ref{prop:ODE4}), one can also prove the following corollary.

%Another characterization for $\AC^0$ can be obtained due to $\ell$-ODE$_4$ by considering the function algebra below:
%$$
%\ACDL^* = [\fun{0}, \fun{1}, \div 2, \ell, \fun{sg}, +, -, \pi^p_i; \circ, \ell\text{-ODE}_1, \ell\text{-ODE}_4].
%$$

\begin{lemma}
    $\AC^0= [\fun{0}, \fun{1}, \ell, \fun{sg}, +, -, \div 2, \pi^p_i; \circ, \ell\emph{-ODE}_1, \ell\emph{-ODE}_4].$
\end{lemma}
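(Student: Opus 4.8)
The plan is to reduce this statement to the immediately preceding lemma, which already gives $\AC^0 = [\fun{0}, \fun{1}, \ell, \fun{sg}, +, -, \div 2, \pi^p_i ; \circ, \ell\text{-ODE}_1, wk\text{-ODE}_2, \ell\text{-ODE}_3]$, by exploiting Remark~\ref{remark:ODE4}: both $wk$-ODE$_2$ and $\ell$-ODE$_3$ are special cases of $\ell$-ODE$_4$. Since the two algebras share exactly the same basic functions and both are closed under composition, the whole argument reduces to comparing the two families of recursion schemas, and no new auxiliary functions need to be constructed.

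For the right-to-left inclusion I would argue exactly as in Theorem~\ref{theorem:ACDL}: each basic function lies in $\AC^0$, the class is closed under composition, and it is closed under $\ell$-ODE$_1$ and $\ell$-ODE$_4$ by Propositions~\ref{prop:ODE1} and~\ref{prop:ODE4}. Hence every function built in the new algebra remains in $\AC^0$.

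For the left-to-right inclusion the key move is that, by Remark~\ref{remark:ODE4}, taking $\pm\ell(k(\tu y)) = \ell(\tu y)$ recovers $wk$-ODE$_2$ and taking $\pm\ell(k(\tu y)) = -1$ recovers $\ell$-ODE$_3$ as instances of $\ell$-ODE$_4$. Consequently, any function defined in the algebra of the preceding lemma can be transcribed into the present one by replacing every application of $wk$-ODE$_2$ or $\ell$-ODE$_3$ with the corresponding instance of $\ell$-ODE$_4$, while applications of $\ell$-ODE$_1$ and of composition are left untouched. This yields $[\ldots; \circ, \ell\text{-ODE}_1, wk\text{-ODE}_2, \ell\text{-ODE}_3] \subseteq [\ldots; \circ, \ell\text{-ODE}_1, \ell\text{-ODE}_4]$, and combining with the preceding lemma gives $\AC^0 \subseteq [\ldots; \circ, \ell\text{-ODE}_1, \ell\text{-ODE}_4]$. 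Alternatively, one could re-run the indirect proof through Clote's $\mathcal{A}_0$ as in Theorem~\ref{theorem:ACDL}, the only adjustment being that the smash function $\#$ and the function $\fun{msp}$ are now obtained via $\ell$-ODE$_4$ by Remarks~\ref{remark:smashODE4} and~\ref{remark:mspODE}, while $\fun{if}$, $\fun{bit}$, $\fun{bexp}$, $\fun{BIT}$ and the CRN simulation go through $\ell$-ODE$_1$ and composition exactly as before.

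The only point requiring genuine care — and the place I would expect to be the main obstacle — is checking that the specialisations of $\ell$-ODE$_4$ coincide with $wk$-ODE$_2$ and $\ell$-ODE$_3$ exactly, including the rounding behaviour of $\times_\uparrow$ in the right-shifting branch (recall $(2^{-1}-1)\times_\uparrow f = -\lceil f/2 \rceil$, which matches Definition~\ref{def:ODE3}). Since Remark~\ref{remark:ODE4} and Proposition~\ref{prop:ODE4} already record this correspondence, everything else is a routine transcription of the earlier proofs and no new ideas are required.
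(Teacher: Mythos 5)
Your proof is correct, and its completeness direction is organized differently from the paper's. The paper disposes of this lemma in one line: ``by similar arguments'' to the preceding lemma and Proposition~\ref{prop:ODE4} — that is, it re-runs the indirect simulation of Clote's $\mathcal{A}_0$, now producing the smash function and $\fun{msp}$ as instances of $\ell$-ODE$_4$ (Remarks~\ref{remark:smashODE4} and~\ref{remark:mspODE}) while handling $\fun{if}$, $\fun{bit}$, $\fun{bexp}$, $\fun{BIT}$ and CRN via $\ell$-ODE$_1$ and composition; this is exactly what you describe as your alternative route. Your primary route is more modular: since the two algebras share the same basic functions, and since by Remark~\ref{remark:ODE4} every instance of $wk$-ODE$_2$ (take $k$ the relevant projection) and of $\ell$-ODE$_3$ (take $k=\fun{1}$) is an instance of $\ell$-ODE$_4$ with $k$ a basic function, an induction on the structure of definitions embeds the algebra of the preceding lemma into the present one, and $\AC^0\subseteq[\fun{0}, \fun{1}, \ell, \fun{sg}, +, -, \div 2, \pi^p_i; \circ, \ell\text{-ODE}_1, \ell\text{-ODE}_4]$ follows with no re-examination of the $\mathcal{A}_0$ simulation at all. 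The soundness direction is identical in both treatments (Propositions~\ref{prop:ODE1} and~\ref{prop:ODE4}). What your route buys is brevity and reuse; what it risks is precisely the point you flag: the subsumption must be exact, and it is the paper's stated convention $(2^{-1}-1)\times_\uparrow f=-\lceil f/2\rceil$ (Remark~\ref{remark:ODE4}, Proposition~\ref{prop:ODE4}) that makes $\ell$-ODE$_3$ a literal special case — note that the displayed definition of $\times_\uparrow$ would instead give $\lceil -f/2\rceil=-\lfloor f/2\rfloor$, which differs on odd arguments; this small discrepancy is internal to the paper and affects its own proof equally, so your argument stands on the same footing.
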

%\begin{proof}
%$\ACDL^* \subseteq \AC^0$. All basic functions of $\ACDL^*$ are computable in $\AC^0$ and the class is closed under composition, $\ell$-ODE$_1$ (Proposition~\ref{prop:ODE1}) and $\ell$-ODE$_4$ (Proposition~\ref{prop:ODE4}).
%\\
%$\AC^0\subseteq \ACDL^*.$
%This result follows from Remark~\ref{remark:ODE4}.
%
%In particular, we can prove case by case that all basic functions and schemas of $\mathcal{A}_0$ can be rewritten in $\ACDL^*$:
%\begin{itemize}
%    \itemsep0em
%    \item the binary successor functions $\fun{s}_0$ and $\fun{s}_1$ can be easily rewritten in $\ACDL^*$ by addition and composition.
    %
%    \item the smash function $\#$ can be rewritten in $\ACDL^*$ (see Remark~\ref{remark:smashODE4}), using $2^{\ell(\cdot)}$ in turn definable in $w\ACDL$ (see Remark~\ref{remark:2ellODE4}).
    %
%    \item the bit function $\fun{BIT}$ can be rewritten in $\ACDL^*$ by $\fun{bit}$ and $\fun{bexp}$.
%    Since $\fun{msp}$ can be defined in $\ACDL^*$ by $\ell$-ODE$_4$ (see Remark~\ref{remark:mspODE}), it is clear that also $\fun{bit}$ and $\fun{bexp}$ can be rewritten in $\ACDL^*$ (as $2^{\ell(\cdot)}$ and $f_{aux}$ are defined as instances of $\ell$-ODE$_1$).
%    \item the CRN schema can be rewritten in $\ACDL^*$ by $\ell$-ODE$_1$, as done in Section~\ref{sec:CRN}. Since, as seen, $\fun{if}, \fun{bit}, \fun{msp}$ and $2^{\ell(\cdot)}$ can also be defined in $\ACDL^*$, this concludes our proof.
%\end{itemize}
%\end{proof}

\subsection{An ODE Characterization of $\FTC^0$}\label{sec:FTC}

As an additional consequence, an ODE-characterization for $\FTC^0$ is naturally obtained, this time passing through $\mathcal{TC}_0$~\cite{CloteTakeuti}.
Specifically, we consider the following algebra, obtained by endowing $\ACDL$ with the basic function $\times$:
$$
\TCDL = [\fun{0}, \fun{1}, \ell, \fun{sg}, +, -, \div 2,\times, \pi^p_i; \circ, \ell\text{-ODE}_2, \ell\text{-ODE}_3].
$$
Again, the desired closure properties are shown to hold.
\begin{proposition}\label{prop:ODE2TC}
    Let $f$ be defined by $\ell$-\emph{ODE}$_2$ from functions in $\FTC^0$.
    Then, $f$ is in $\FTC^0$ as well.
\end{proposition}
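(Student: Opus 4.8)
The plan is to reuse the closed-form solution of $\ell$-ODE$_2$ already computed in the proof of Proposition~\ref{prop:ODE2} and to observe that every operation used there to assemble $f$ from its initial-value functions lives in $\AC^0 \subseteq \FTC^0$. Concretely, for $k(\tu y)\neq 0$ the solution is
$$
f(x,\tu y) = \sum^{\ell(x)-1}_{u=-1} 2^{\ell(k(\tu y))\times(\ell(x)-u-1)} \times h\big(\alpha(u),\tu y\big),
$$
with the usual conventions $\alpha(u)=2^u-1$, $\prod^{x-1}_x\kappa(x)=1$ and $h(\alpha(-1),\tu y)=g(\tu y)$. The only new ingredient relative to Proposition~\ref{prop:ODE2} is that the functions $g$, $h$, $k$ are now assumed to lie in $\FTC^0$ rather than in $\AC^0$; the ODE-solving step itself never multiplies two genuine inputs but only shifts by powers of $2$, so nothing stronger than shifting and disjunction is invoked.

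First I would isolate the same two cases as in Proposition~\ref{prop:ODE2}. In the principal case $k(\tu y)\neq 0$, I would build the circuit exactly as there: in parallel compute $g(\tu y)$ and each of the $\ell(x)+1$ values $h(\alpha(u),\tu y)$ — these are now $\FTC^0$ computations, and $\FTC^0$ is closed under taking polynomially many of them in parallel; then left-shift the value $h(\alpha(u),\tu y)$ by $\ell(k(\tu y))\times(\ell(x)-u-1)$ positions (padding zeros), which is a constant-depth multiplication-by-$2^i$; and finally, since $h(\alpha(u),\tu y)\in\{0,1\}$ so that the shifted summands occupy disjoint bit ranges, realize the outer sum as a bit-by-bit disjunction. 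Both the padding and the disjunction are in $\AC^0\subseteq\FTC^0$, so the whole assembly remains in $\FTC^0$. In the remaining case $k(\tu y)=0$, the side condition of $\ell$-ODE$_2$ forces $h\equiv 0$, whence $f(x,\tu y)=g(\tu y)\in\FTC^0$ by hypothesis.

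I expect the closure to be essentially free rather than an obstacle: the substantive combinatorial content is already contained in Proposition~\ref{prop:ODE2}, and the only upgrade needed is that $\FTC^0$ contains $\AC^0$ and is closed under composition and under parallel evaluation of polynomially many of its functions, both of which are standard. The single point worth checking is that, just as in the $\AC^0$ case, the bit positions and pad lengths governing the construction (products involving $\ell(k(\tu y))$ and $\ell(x)-u-1$) are themselves computable within the class; this holds because they are products of length-valued, hence small, quantities, and such bookkeeping is unproblematic in $\FTC^0$ (indeed in $\AC^0$). Beyond this, the argument is a verbatim transcription of Proposition~\ref{prop:ODE2} with $\AC^0$ replaced throughout by $\FTC^0$, so no genuine difficulty arises; the reason the statement is recorded separately is simply that the constituent functions are now drawn from the richer class underlying $\TCDL$.
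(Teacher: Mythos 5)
Your proof is correct and follows essentially the same route as the paper's: both reuse the closed-form solution from Proposition~\ref{prop:ODE2}, observe that the only change is that $g$ and $h$ (now possibly involving $\times$) are computed in parallel within $\FTC^0$, and note that the assembly via shifts and carry-free bit-disjunction (valid since $h(\alpha(u),\tu y)\in\{0,1\}$) remains constant-depth. Your explicit treatment of the $k(\tu y)=0$ case via the side condition forcing $h\equiv 0$ is a point the paper delegates to its reference to Proposition~\ref{prop:ODE2}, but it is the same argument.
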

\begin{proof}
    The proof is similar to that of Proposition~\ref{prop:ODE2}.
    The main difference concerns $g(\tu y)$ and $h(x,\tu y)$, which are now expressions possibly including $\times$.
    
    As seen, for all $x$ and $\tu y$:
    $$
    f(x,\tu y) = \sum^{\ell(x)-1}_{u=-1} \Bigg(\prod^{\ell(x)-1}_{t=u+1} 2^{\ell(k(\tu y))}\Bigg) \times h\big(\alpha(u), \tu y\big)
    $$
    with the convention that $\alpha(u)=2^u-1$, $\prod^{x-1}_x \kappa(x)=1$ and $h(\alpha(-1), \tu y) = f(0,\tu y)$.
    Intuitively, the (constant-depth) circuit we are going to construct is analogous to that of Proposition~\ref{prop:ODE2}, but the values to be initially computed in parallel are obtained even via $\times$.
    However, the introduction of multiplication does not affect the overall structure of the circuit, as $h(\alpha(u), \tu y) \in \{0,1\}$, so that the final sum again corresponds to a simple bit-concatenation (without carries).

    More precisely, the desired constant-depth circuit is defined as follows:
    \begin{itemize}
        \itemsep0em
        \item In parallel, compute the values of $g(\tu y)$ and, for any $u=0,\dots, \ell(x)-1$, of $h(\alpha(u), \tu y)$.
        Observe that this can be done in $\FTC^0$, but possibly not in $\AC^0$, as now arithmetic expressions may include $\times$.
        \item The value of $g(\tu y)$ is shifted by padding $2^{\ell(k(\tu y)) \times \ell(x)}$ zeros on the right and, for $u\ge 0$, all values $h(\alpha(u), \tu y)$ are shifted by padding on the right $\ell(k(\tu y))$ zeros $\ell(x)-u-1$ times, and by padding on the left $\ell(k(\tu y))$ zeros $\ell(g(\tu y))+u$ times.
        \item Compute bit by bit the disjunction of the above values.
    \end{itemize}
\end{proof}

\begin{proposition}\label{prop:ODE3TC}
    Let $f$ be defined by $\ell$-\emph{ODE}$_3$ from functions in $\FTC^0$. Then, $f$ is in $\FTC^0$ as well.
\end{proposition}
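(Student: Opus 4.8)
The plan is to observe that the closed-form solution derived in Proposition~\ref{prop:ODE3} is insensitive to the class from which the base function $g$ is drawn, so that the entire argument transfers almost verbatim with $\AC^0$ replaced by $\FTC^0$. First I would recall that, by exactly the same induction as in the proof of Proposition~\ref{prop:ODE3} (which uses only the recurrence $\frac{\partial f(x,\tu y)}{\partial \ell} = -\lceil f(x,\tu y)/2\rceil$ together with the identity $(a\div 2^b)\div 2 = a\div 2^{b+1}$), the solution of the $\ell$-ODE$_3$ system satisfies
$$
f(x,\tu y) = \Bigg\lfloor \frac{g(\tu y)}{2^{\ell(x)}}\Bigg\rfloor = g(\tu y) \div 2^{\ell(x)}
$$
for all $x$ and $\tu y$. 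The crucial point is that $g$ enters this expression only as the argument of a right-shift by $\ell(x)$ positions: the derivation never adds, multiplies, or otherwise combines distinct values of $g$, so nothing in it actually depends on $g$ being in $\AC^0$ rather than in the larger class $\FTC^0$.

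Then I would decompose the computation of $f$ into three stages, each realizable in $\FTC^0$: (i) computing $g(\tu y)$, which is in $\FTC^0$ by hypothesis; (ii) computing $\ell(x)$, which is already in $\AC^0\subseteq\FTC^0$; and (iii) right-shifting the binary representation of $g(\tu y)$ by $\ell(x)$ positions. For step (iii) the very circuit constructed in the proof of Proposition~\ref{prop:ODE3} works unchanged: right-shifting by a number of positions bounded by the input length is a constant-depth, polynomial-size operation, hence already in $\AC^0\subseteq\FTC^0$. Composing these three $\FTC^0$ computations and invoking closure of $\FTC^0$ under composition yields $f\in\FTC^0$.

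I do not expect a genuine obstacle here, since the mathematical content is entirely the closed form of Proposition~\ref{prop:ODE3}, and the present statement is the routine observation that promoting $g$ from $\AC^0$ to $\FTC^0$ promotes $f$ correspondingly. The one point worth stating carefully is the contrast with $\ell$-ODE$_2$: unlike that schema, $\ell$-ODE$_3$ reintroduces no arithmetic on the values of $g$ beyond the shift, so no additional multiplication circuitry is required and the only place the majority power of $\FTC^0$ is actually invoked is in the evaluation of $g$ itself. Everything else in the construction remains within $\AC^0$.
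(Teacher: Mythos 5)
Your proposal is correct and follows essentially the same route as the paper: the paper's proof likewise reduces to the closed form $f(x,\tu y)=g(\tu y)\div 2^{\ell(x)}$ from Proposition~\ref{prop:ODE3} and observes that only the ``level 0'' computation of the initial value $g(\tu y)$ needs the power of $\FTC^0$, while the right-shift circuit itself is unchanged and remains in $\AC^0$. Your write-up is merely more explicit about the three-stage decomposition and the closure under composition, which the paper leaves implicit.
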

\begin{proof}
    Again, the proof is a straightforward generalization of Proposition~\ref{prop:ODE3}, with the same provisos of Proposition~\ref{prop:ODE2TC}, namely considering that computation of level 0, i.e.~the computation of the initial values $g(\tu y)$ and $h(x, \tu y)$ as expressions possibly including $\times$, does not affect the overall structure of the circuit.
\end{proof}
\noindent
Then, the desired characterization straightforwardly follows from Propositions~\ref{prop:ODE2TC} and~\ref{prop:ODE3TC} and from the encoding of the proof by~\cite{CloteTakeuti} in our ODE setting.

\begin{theorem}\label{theorem:TCDL}
    $\TCDL=\FTC^0$.
\end{theorem}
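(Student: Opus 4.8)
The plan is to mirror the two-inclusion structure of Theorem~\ref{theorem:ACDL}, using Clote and Takeuti's algebra $\mathcal{TC}_0$ in place of $\mathcal{A}_0$ as the intermediate object, and exploiting the fact that $\times$ is now a basic function.

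For the inclusion $\TCDL \subseteq \FTC^0$, I would first note that all basic functions $\fun{0}, \fun{1}, \ell, \fun{sg}, +, -, \div 2, \times, \pi^p_i$ are computable in $\FTC^0$; the only point genuinely new with respect to the $\AC^0$ case is the presence of $\times$, which is harmless precisely because multiplication lies in $\FTC^0$. The class is trivially closed under composition, and closure under the two ODE schemas is exactly Propositions~\ref{prop:ODE2TC} and~\ref{prop:ODE3TC} (the schema $\ell$-ODE$_1$ being subsumed as an instance of $\ell$-ODE$_2$). This direction thus reduces entirely to already-established closure facts.

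For the converse $\FTC^0 \subseteq \TCDL$, I would proceed indirectly through $\mathcal{TC}_0$, showing first that every basic function and schema of $\mathcal{TC}_0$ can be rewritten in $\TCDL$. Since $\mathcal{TC}_0$ differs from $\mathcal{A}_0$ only by the additional basic function $\times$, which is already primitive in $\TCDL$, almost all of this has been done already: the binary successors are obtained by composition from $+$, the smash function $\#$ is defined by $\ell$-ODE$_2$ as in Section~\ref{sec:FAC}, the $\fun{BIT}$ function is rewritten by composition from $\fun{bit}$ and $\fun{bexp}$ as in Section~\ref{sec:BIT}, and the CRN schema is simulated by $\ell$-ODE$_1$ (itself an instance of $\ell$-ODE$_2$) exactly as in Section~\ref{sec:CRN}. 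All these constructions use only basic functions and schemas available in $\TCDL$, so they transfer verbatim.

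With $\mathcal{TC}_0 \subseteq \TCDL$ in hand, the final step is to encode the arithmetization underlying Clote and Takeuti's proof that $\FTC^0 \subseteq \mathcal{TC}_0$~\cite{CloteTakeuti} into our ODE setting, exactly as Clote's arithmetization was mimicked in Theorem~\ref{theorem:ACDL}. I expect the main obstacle to be this arithmetization step, and more specifically the faithful simulation of majority (threshold) gates, which is the feature distinguishing $\FTC^0$ from $\AC^0$: counting the number of $1$'s among polynomially many bits requires iterated addition, and hence genuine multiplication, rather than the mere bit-concatenation that suffices for $\AC^0$. The key observation making this work is that the required counting is now available because $\times$ is a basic function of $\TCDL$; once majority is reproduced, the remaining bookkeeping is routine and parallels the $\AC^0$ argument.
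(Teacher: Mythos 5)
Your proposal is correct and takes essentially the same approach as the paper: the inclusion $\TCDL \subseteq \FTC^0$ via the closure results of Propositions~\ref{prop:ODE2TC} and~\ref{prop:ODE3TC} (plus the basic functions, including $\times$, being in $\FTC^0$), and the converse indirectly through Clote and Takeuti's $\mathcal{TC}_0$, reusing the rewritings of $\#$, $\fun{BIT}$ and CRN from Sections~\ref{sec:FAC}--\ref{sec:CRN} and encoding the arithmetization of~\cite{CloteTakeuti} in the ODE setting. This is exactly the paper's (much terser) argument, so nothing is missing.
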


An alternative characterization of $\FTC^0$ is obtained by considering the following generalized version of $\ell$-ODE$_2$, called $\ell$-ODE$_2^*$, with no constraint over the function $k$.

\begin{defn}[Schema $\ell$-ODE$_2^*$]\label{defn:ODEstar}
    Let $g:\Nat^p\to \Nat, h:\Nat^{p+1}\to \Nat$ and $k:\Nat^p\to \Nat$, where $h$ takes values in $\{0,1\}$.
    Then, the function $f:\Nat^{p+1} \to \Nat$ is defined by $\ell$-\emph{ODE}$^*_2$ from $g,h$ and $k$ when it is the solution of the IVP below:
    \begin{align*}
    f(0,\tu y) &= g(\tu y) \\
    \frac{\partial f(x,\tu y)}{\partial \ell} &= \big(2^{\ell(k(\tu y))}-1\big) \times f(x,\tu y) + h(x,\tu y).
    \end{align*}
\end{defn}
This schema is really more expressive than $\ell$-ODE$_2$ as shown by the following example, presenting an encoding by $\ell$-ODE$_2^*$ of the binary counting function, which is not in $\AC^0$ (see~\cite{Vollmer}).

\begin{example}[$\fun{bcount}$]\label{ex:bcount}
    Observe that if $k(\tu y)=0$, then $\ell$-\emph{ODE}$_2^*$ is enough to express the binary counting function $\fun{bcount}(x)$, that outputs the sum of the bits of $x$.
    Indeed, $\fun{bcount}(x)=f(x,x)$ where $f$ is the solution of the IVP below:
    \begin{align*}
    f(0,\tu y) &= \fun{BIT}_0(\tu y) \\
    \frac{\partial f(x,\tu y)}{\partial \ell} &= \fun{BIT}(\tu y, x),
    \end{align*}
    such that $\fun{BIT}_0(z)$ is a function returning the 0th bit of $z$ (and can be rewritten in $\mathbb{TCDL}^*$).
    This system is clearly an instance of $\ell$-\emph{ODE}$_2^*$, such that $g(\tu y)=\fun{BIT}_0(\tu y)$, $k(\tu y)=0$ and $h=\fun{BIT}$.
\end{example}
\noindent
Therefore, together with the function $\fun{BIT}$, $\ell$-ODE$_2^*$ is enough to capture majority computation and to ``simulate'' multiplication, see~\cite{Vollmer}, which is the essential step to show that $\TCDL \subseteq \TCDL^*$.
In addition, $\FTC^0$ is closed under $\ell$-ODE$_2^*$.

\begin{proposition}\label{prop:ODE2sTC}
    Let $f$ be defined by $\ell$-\emph{ODE}$_2^*$ from functions in $\FTC^0$. Then, $f$ is in $\FTC^0$ as well. 
\end{proposition}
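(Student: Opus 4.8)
The plan is to proceed exactly as in the proof of Proposition~\ref{prop:ODE2TC}, starting from the closed form of the linear length ODE. Since the coefficient $2^{\ell(k(\tu y))}-1$ is essentially constant in $\tu f$, the general solution formula for linear $\ell$-ODE recalled in Section~\ref{sec:CCviaODE} specializes to
\begin{align*}
f(x,\tu y) = \sum^{\ell(x)-1}_{u=-1} 2^{\ell(k(\tu y)) \times (\ell(x)-u-1)} \times h\big(\alpha(u),\tu y\big),
\end{align*}
with the conventions $\alpha(u)=2^u-1$ and $h(\alpha(-1),\tu y)=g(\tu y)$. First I would read off from this expression that $f(x,\tu y)$ is an iterated sum of only $\ell(x)+1$ numbers, each of which is either the single bit $h(\alpha(u),\tu y)$ or the value $g(\tu y)$, shifted left by the computable amount $\ell(k(\tu y))\times(\ell(x)-u-1)$ (respectively $\ell(k(\tu y))\times\ell(x)$).

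Then I would perform a case distinction on $k(\tu y)$, which isolates precisely the effect of having dropped the side constraint present in $\ell$-\emph{ODE}$_2$. When $\ell(k(\tu y))\ge 1$, i.e.~$k(\tu y)\neq 0$, the single-bit summands sit at distinct bit positions spaced $\ell(k(\tu y))$ apart and below the shifted copy of $g(\tu y)$, so no carries arise and the circuit construction of Proposition~\ref{prop:ODE2TC} carries over unchanged: compute $g(\tu y)$ and the $\ell(x)$ bits $h(\alpha(u),\tu y)$ in parallel in $\FTC^0$, pad them, and take a bitwise disjunction. The genuinely new case is $k(\tu y)=0$, where $2^{\ell(k(\tu y))}-1=0$ and the formula collapses to $f(x,\tu y)=g(\tu y)+\sum^{\ell(x)-1}_{u=0} h(\alpha(u),\tu y)$, namely $g(\tu y)$ plus the number of indices $u$ with $h(\alpha(u),\tu y)=1$. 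This is exactly binary counting (cf.~Example~\ref{ex:bcount}): unavailable in $\AC^0$, but precisely what majority gates supply, so both the counting of the $1$s among linearly many bits and the final addition of $g(\tu y)$ lie in $\FTC^0$ (see~\cite{Vollmer}).

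To merge the two cases into one uniform circuit I would either branch on $\fun{sg}(k(\tu y))$ via the conditional function, or, more cleanly, invoke once and for all the closure of $\FTC^0$ under iterated addition of polynomially many numbers: since each summand and each shift amount is computable in $\FTC^0$ and there are only $\ell(x)+1=O(n)$ summands, the whole sum is in $\FTC^0$ regardless of the value of $k(\tu y)$. The main obstacle is conceptual rather than computational. It is to recognize that removing the constraint on $k$ is exactly what lets the degenerate coefficient $2^{\ell(k(\tu y))}-1=0$ turn the carry-free concatenation of Proposition~\ref{prop:ODE2TC} into a genuine accumulation of carries, i.e.~a counting operation; once this is pinpointed, membership in $\FTC^0$ follows from the standard fact that iterated addition and counting are canonical $\TCz$ operations.
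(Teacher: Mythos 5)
Your proposal is correct and follows essentially the same route as the paper's proof: a case split on $k(\tu y)$, where $k(\tu y)\neq 0$ reduces to the carry-free concatenation argument of Proposition~\ref{prop:ODE2TC}, and $k(\tu y)=0$ collapses the solution to binary counting plus a single addition, both canonical $\FTC^0$ operations. Your closing remark on merging the two cases into one circuit (via $\fun{sg}(k(\tu y))$ and the conditional, or via closure under iterated addition) addresses a uniformity detail that the paper's sketch leaves implicit, but it is a refinement of the same argument rather than a different approach.
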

\begin{proof}[Proof Sketch]
    The proof is by cases. 
    \begin{itemize}
        \itemsep0em
        \item If $k(\tu y)\neq 0$, then the proof is precisely as Proposition~\ref{prop:ODE2TC}.
        \item If $k(\tu y)=0$, then computing $f$ corresponds to binary counting (plus a single addition), which is in $\FTC^0$~\cite{Vollmer}.
    \end{itemize}
\end{proof}

\noindent
This observation, together with the fact that what really makes $\ell$-ODE$^*_2$ more expressive than $\ell$-ODE$_2$ is its behavior for $k(\tu y)=0$, leads us to an alternative characterization for $\FTC^0$, in line with the one of Corollary~\ref{cor:FAC}:
$$
\TCDL^* = [\fun{0}, \fun{1}, \ell, \fun{sg}, +, -, \div 2, \pi^p_i; \circ, \ell\text{-ODE}_2^*, \ell\text{-ODE}_3].
$$
Clearly $\TCDL^*$ does not include $\times$ as a basic function.

\begin{corollary}
    $\FTC^0=\TCDL^*.$
\end{corollary}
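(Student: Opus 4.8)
The plan is to prove the two inclusions $\TCDL^* \subseteq \FTC^0$ and $\FTC^0 \subseteq \TCDL^*$ separately, reducing the second to the embedding $\TCDL \subseteq \TCDL^*$ via Theorem~\ref{theorem:TCDL}. The easy direction $\TCDL^* \subseteq \FTC^0$ is immediate: every basic function of $\TCDL^*$, namely $\fun{0}, \fun{1}, \ell, \fun{sg}, +, -, \div 2, \pi^p_i$, already lies in $\AC^0 \subseteq \FTC^0$, the class $\FTC^0$ is closed under composition, and closure under the two defining schemas is exactly Propositions~\ref{prop:ODE2sTC} and~\ref{prop:ODE3TC}. Hence every function generated by the algebra stays in $\FTC^0$.

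For $\FTC^0 \subseteq \TCDL^*$ I would invoke Theorem~\ref{theorem:TCDL}, which gives $\FTC^0 = \TCDL$, so it suffices to show $\TCDL \subseteq \TCDL^*$. The two algebras share all basic functions except that $\times$ is primitive in $\TCDL$ but absent from $\TCDL^*$; on the schema side they are compatible, since $\ell$-ODE$_2$ is literally a constrained instance of $\ell$-ODE$_2^*$ and $\ell$-ODE$_3$ is common to both (and $\ell$-ODE$_1$, used throughout the auxiliary constructions, is itself a special case of $\ell$-ODE$_2$). Consequently the whole algebra $\ACDL$ embeds into $\TCDL^*$, so in particular $\fun{BIT}$ and all the auxiliary functions of Section~\ref{sec:BIT} are available in $\TCDL^*$. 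Thus the entire task reduces to exhibiting a single function, multiplication $\times$, as a $\TCDL^*$-definable function.

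The remaining, essential step is to define $x \times y$ inside $\TCDL^*$ without a primitive product, and this is where I expect the main obstacle to lie. Here I would use the binary counting function $\fun{bcount}$, which Example~\ref{ex:bcount} realizes via $\ell$-ODE$_2^*$ with $k(\tu y)=0$, together with $\fun{BIT}$ and the shift schemas. Following the standard reduction of multiplication to iterated addition and thence to majority/counting (see~\cite{Vollmer}), each output bit of $x\times y$ is obtained by forming the partial-product bits $\fun{BIT}(x,a)\wedge\fun{BIT}(y,b)$ (a product of two Boolean values, expressible by the conditional $\fun{if}$ and hence without a general product), by counting each diagonal column with $\fun{bcount}$, and by propagating the resulting carries across columns. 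The delicate point is precisely this carry propagation: one must verify that the carries, which are themselves short counts, can be recombined within the $\TCDL^*$ primitives (counting, bit extraction, and the left/right shift schemas) rather than requiring a genuine multiplication. Once this is checked, $\times$ is a $\TCDL^*$-function.

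With $\times$ available, every primitive and every schema of $\TCDL$ is realized in $\TCDL^*$, giving $\TCDL \subseteq \TCDL^*$. Combining this with the easy inclusion and Theorem~\ref{theorem:TCDL} yields $\FTC^0 = \TCDL \subseteq \TCDL^* \subseteq \FTC^0$, hence $\FTC^0 = \TCDL^*$.
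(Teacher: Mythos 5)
Your proof is correct and follows essentially the same route as the paper: the easy inclusion via the closure propositions (Propositions~\ref{prop:ODE2sTC} and~\ref{prop:ODE3TC}), and the hard inclusion by invoking Theorem~\ref{theorem:TCDL} and reducing everything to expressing $\times$ in $\TCDL^*$ via the binary counting function of Example~\ref{ex:bcount} and the standard counting-to-multiplication reduction of~\cite{Vollmer}. Your write-up is in fact slightly more explicit than the paper's sketch (which leaves the partial-product/carry-propagation argument entirely to the citation), but it is the same proof.
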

\begin{proof}[Proof Sketch]
    $\mathbb{TCDL}^* \subseteq \FTC^0$. It is a straightforward consequence of Proposition~\ref{prop:ODE2sTC}.
    \\
    $\FTC^0\subseteq \mathbb{TCDL}^*$. The proof is indirect and passes through Theorem~\ref{theorem:TCDL}. In particular, we only need to prove that $\TCDL \subseteq \TCDL^*$, i.e. that $\times$ is expressible in $\TCDL^*$, which follows from Example~\ref{ex:bcount} and~\cite{Vollmer}.
    \\
\end{proof}

Let us consider the following stronger version of $\ell$-ODE$_1$.

\begin{defn}[Schema $\ell$-ODE$_1^*$]\label{defn:ODEstar}
    Let $g:\Nat^p \to \Nat$ and $h,k:\Nat^{p+1}\to \{0,1\}$.
    Then, the function $f:\Nat^{p+1} \to \Nat$ is defined by $\ell$-\emph{ODE}$_1^*$ from $g,h$ and $k$, if it is the solution of the IVP:
    \begin{align*}
        f(0,\tu y) &= g(\tu y) \\
        \frac{\partial f(x,\tu y)}{\partial \ell} &= k(x,\tu y) \times f(x,\tu y) + h(x,\tu y).
    \end{align*}
\end{defn}
\noindent
By straightforward inspection and Example~\ref{ex:bcount}, it is also proved that $\FTC^0$ is again captured by adding $\#$ to the set of basic functions of $\TCDL^*$ and by replacing $\ell$-ODE$_2^*$ with the simpler schema $\ell$-ODE$_1^*$.

\begin{corollary}
    $\FTC^0=[\fun{0},\fun{1}, \ell, \fun{sg}, +, -, \div 2, \#, \pi^p_i; \circ, \ell\text{-}\emph{ODE}_1^*, \ell\text{-}\emph{ODE}_3].$
\end{corollary}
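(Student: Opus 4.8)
The plan is to establish the two inclusions, exploiting the already proven identity $\FTC^0 = \TCDL^*$ together with the structural observation underlying Lemma~\ref{cor:FAC} and the binary-counting construction of Example~\ref{ex:bcount}. Throughout, write $\mathbb{B} = [\fun{0},\fun{1}, \ell, \fun{sg}, +, -, \div 2, \#, \pi^p_i; \circ, \ell\text{-ODE}_1^*, \ell\text{-ODE}_3]$ for the candidate algebra.

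For the inclusion $\mathbb{B} \subseteq \FTC^0$, I would first note that every basic function is in $\FTC^0$: this is immediate for $\fun{0},\fun{1},\ell,\fun{sg},+,-,\div 2,\pi^p_i$, while $\#$ lies in $\AC^0 \subseteq \FTC^0$. Closure under composition is clear, and closure under $\ell$-ODE$_3$ is Proposition~\ref{prop:ODE3TC}. The only genuinely new point is closure under $\ell$-ODE$_1^*$. Here I would unfold the solution as in the $\lambda$-ODE computation recalled in Section~\ref{sec:CCviaODE}, obtaining
$$
f(x,\tu y) = \sum_{u=-1}^{\ell(x)-1}\Bigg(\prod_{t=u+1}^{\ell(x)-1}\big(1+k(\alpha(t),\tu y)\big)\Bigg)\times h(\alpha(u),\tu y),
$$
with the convention $h(\alpha(-1),\tu y)=g(\tu y)$. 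Since $k,h\in\{0,1\}$, each inner product is a power of $2$ bounded by $2^{\ell(x)}$, so the expression is a sum of at most $\ell(x)+1$ polynomially bounded summands. Unlike the carry-free concatenation arising in $\ell$-ODE$_1$ (Proposition~\ref{prop:ODE1}), here two summands may land on the same bit position when $k$ vanishes at intermediate steps; hence the sum is a genuine iterated addition of polynomially many poly-sized numbers, which is exactly the operation that $\FTC^0$ provides (see~\cite{Vollmer}). This simultaneously keeps $f$ of polynomial length and places it in $\FTC^0$.

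For the converse $\FTC^0 \subseteq \mathbb{B}$, the argument is indirect and factors through $\FTC^0 = \TCDL^*$, so that it suffices to rewrite the two generating schemas of $\TCDL^*$, namely $\ell$-ODE$_2^*$ and $\ell$-ODE$_3$, inside $\mathbb{B}$. The schema $\ell$-ODE$_3$ is shared. For $\ell$-ODE$_2^*$ I would separate the two regimes identified before Proposition~\ref{prop:ODE2sTC}: when $k(\tu y)=0$ the schema degenerates to pure accumulation along $\ell$, i.e.~binary counting, which is precisely the $k\equiv 0$ instance of $\ell$-ODE$_1^*$ (this is the content of Example~\ref{ex:bcount} transported to the new schema); when $k(\tu y)\neq 0$ the schema coincides with $\ell$-ODE$_2$, whose full multi-shift power is, exactly as in the analysis preceding Lemma~\ref{cor:FAC}, used only to realize the smash function $\#$, which is now available as a basic function. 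All remaining auxiliary functions needed in the simulation ($\fun{msp}$, $\fun{if}$, $\fun{bit}$, $\fun{bexp}$, $\fun{BIT}$, the CRN schema, and multiplication $\times$ itself via Example~\ref{ex:bcount} and~\cite{Vollmer}) are definable from single-digit shifting $2^{\ell(\cdot)}$ and binary counting, both of which are instances of $\ell$-ODE$_1^*$ (taking $k\equiv 1$ and $k\equiv 0$, respectively), together with $\ell$-ODE$_3$. Assembling these rewritings reproduces every generator of $\TCDL^*$ inside $\mathbb{B}$.

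The step I expect to require the most care is the closure of $\FTC^0$ under $\ell$-ODE$_1^*$: one must check that the mixing of shifting steps ($k=1$) and stalling steps ($k=0$) never forces a super-polynomial intermediate value, and that the resulting overlapping-summand sum is handled by iterated addition rather than by the carry-free concatenation that sufficed for $\AC^0$. Once this is secured, the remainder is a bookkeeping exercise confirming that, apart from $\#$ (now basic) and binary counting (now an $\ell$-ODE$_1^*$ instance), the simulation of $\FTC^0$ uses nothing beyond single-digit left shifts and right shifts.
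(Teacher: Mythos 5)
Your proposal is correct in substance and takes the same route as the paper's (essentially one-sentence) argument: soundness via closure properties, completeness indirectly, by re-inspecting the existing simulation with $\#$ promoted to a basic function and binary counting recovered as the $k\equiv 0$ instance of $\ell$-ODE$_1^*$. Your soundness half is in fact more careful than the paper's ``straightforward inspection'': since in $\ell$-ODE$_1^*$ the coefficient $k(x,\tu y)$ may vary with $x$, this schema is \emph{not} a special case of $\ell$-ODE$_2^*$ (whose coefficient is fixed once $\tu y$ is), so closure of $\FTC^0$ under it does not follow from Proposition~\ref{prop:ODE2sTC} and needs exactly the argument you give: the solution unfolds into $\ell(x)+1$ summands $2^{c_u}\,h(\alpha(u),\tu y)$ whose bit positions may collide when $k$ stalls, so the carry-free disjunction of Proposition~\ref{prop:ODE1} must be replaced by $\TCz$ iterated addition (noting also that each exponent $c_u$ is a binary count, again computable in $\TCz$). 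This part is right and genuinely needed.

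The flaw is in the logical skeleton of your converse. From $\FTC^0=\TCDL^*$ you conclude that ``it suffices to rewrite the two generating schemas of $\TCDL^*$ inside $\mathbb{B}$'', and you handle the regime $k(\tu y)\neq 0$ of $\ell$-ODE$_2^*$ by asserting that its multi-shift power ``is used only to realize $\#$''. That assertion is a statement about how $\ell$-ODE$_2$ is \emph{used} in the paper's completeness proof; it does not show that an \emph{arbitrary} application of $\ell$-ODE$_2^*$ with $\ell(k(\tu y))\ge 2$ to functions of $\mathbb{B}$ lands back in $\mathbb{B}$, which is what an induction over $\TCDL^*$-derivations would require. (That closure is true, but only post hoc, once $\mathbb{B}=\FTC^0$ is known, so invoking it at this stage is circular; note also that your split into regimes is pointwise in $\tu y$, hence not even a legitimate case distinction at the level of schemas.) The repair is the factoring your final paragraph implicitly performs anyway: go through Clote--Takeuti's $\mathcal{TC}_0$, i.e.\ through the proof of Theorem~\ref{theorem:TCDL}, rather than through $\TCDL^*$ as an algebra. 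Every generator of $\mathcal{TC}_0$ --- $\fun{BIT}$ (via $\fun{msp}$, $\fun{if}$, $\fun{bit}$, $\fun{bexp}$), CRN, $\times$ (via Example~\ref{ex:bcount} and~\cite{Vollmer}), and $\#$, now basic --- is definable from $\ell$-ODE$_1^*$ (with $k\equiv 1$ and $k\equiv 0$) together with $\ell$-ODE$_3$, composition and the basic functions. With that one change of factoring, your proof is complete and coincides with the paper's intended argument.
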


\subsection{Alternative Direct Proofs}\label{sec:direct}

In this Section, we introduce alternative parameterized classes $\ACDL_{\mathcal{C}}$ and $\TCDL_{\mathcal{C}}$, respectively extending $\ACDL$ and $\TCDL$ with new basic functions that provide a description of  circuit families $\mathcal{C}=(C_n)_{n\ge 0}$ of polynomial size and constant depth  used for computation.
In this non-uniform context, we prove by a direct argument (not referencing to results in ~\cite{Clote1990,CloteTakeuti} or any other previous characterizations) that if a function is computable by such family $\mathcal{C}$ (resp. with majority gates in addition), then it is definable in $\ACDL_{\mathcal{C}}$ (Section~\ref{sec:directACDL}) (resp. $\TCDL_{\mathcal{C}}$; see Section~\ref{sec:directTCDL}).

\subsubsection{Direct Completeness for $\ACDL_{\mathcal{C}}$}\label{sec:directACDL}

Let $\mathcal{C}=(C_n)_{n\ge 0}$ be a class of circuits of polynomial size $n^k$, for some $k\in \Nat$, and constant depth $d$.
We assume that each circuit $C_n$ is in a special normal form, such that it strictly alternates between $\wedge$ and $\vee$ (and edges are only between gates of consecutive layers): input gates are all at level 0 as well as negation gates (by convention), at even levels all gates are $\wedge$ gates, at odd levels all gates are $\vee$ gates, and the depth of the circuit $d$ is even (so output gates are $\wedge$ gates). Hence, such a circuit can be described by an acyclic directed graph on a vertex set of size $n^k$ with vertices of in-degree (resp. out-degree) $0$ representing the input (resp. output) gates.

%To ease technical details, we index the output gates from $m\leq n^k$ to $0$. Other choices are arbitrary.

In this context we keep all the basic functions of $\ACDL$, but we add a set $\mathbf{circ}_{\mathcal{C}}=\{C, L_0^{in}, L_0^{\neg}, L_1,...,L_d,m \}$ with  $L_0^{in}, L_0^{\neg}, L_e \subseteq \Nat^2$, for $e\in \{1,\dots,d\}$,  $C \subseteq\Nat^3$ (these predicates are seen as their characteristic functions) and $m:\Nat\rightarrow\Nat$. The predicates and functions in $\mathbf{circ}_{\mathcal{C}}$ will encode the circuit family.
Intuitively, $m(\ell(x))$ (denoted $m$ if there is no ambiguity) is equal to the number of output gates of the circuit $C_{\ell(x)}$, while $L_0^{in}, L_0^{\neg}, L_e$ describe the level of gates (and, implicitly, their type): $L_0^{in}$ refers to input gates, $L_0^{\neg}$ to negation (of input) gates, and $L_e$ to $\wedge$ and $\vee$ gates, depending on $e$ being odd or even.
More precisely, for integers $a, x$ with $a\leq \ell(x)$, $L_0^{in}(a,x)$ holds if the $a^{th}$ input gate of the circuit $C_{\ell(x)}$ with input $x$  (or, equivalently, the $a^{th}$ bit of $x$) is $1$. Similarly, for $a, x$ with $a\leq \ell(x)^k$, $L_e(a,x)$ holds if the $a^{th}$ gate of circuit $C_{\ell(x)}$ is at level $e$. 
Finally, the predicate $C$ will describe the underlying graph of the circuit:
for any integers $x, a,b$ with $a,b\leq \ell(x)^k$, it holds that $(x,a,b) \in C$ when in $C_{\ell(x)}$ the $a^{th}$ gate of some level is a predecessor of the $b^{th}$ gate of the next level.
Thus, in this encoding, $a$ and $b$ are exponentially smaller than $x$.
%
%For $e\in \{1,\dots,d\}$, the relations $L_0^{in}, L_0^{\neg}, L_e \subseteq \Nat \times \Nat$ describe the level of gates (and, implicitly, their type): $L_0^{in}$ refers to input gates, $L_0^{\neg}$ to negation gates, and $L_e$ to $\wedge$ and $\vee$ gates, depending on $e$ being odd or even.
%
Since we aim to define functions over integers, to simplify unessential technical details, we assume that input gates are numbered from $0$ to $n-1$, negation of input gates  from $n$ to $2n-1$ and  output gates are indexed from $n^k-m$ to $n^k-1$ with $m$ being the number of output gates.
%and all other gates are indexed by a number from $\max\{2n,m\}$ to $n^k$.} %with $m\leq \ell(x)^k$.  
%
By considering the functions corresponding to the given relations, we obtain the desired ODE style family of classes (parametrized on $\mathcal{C}$) as follows:
$$
\ACDL_{\mathcal{C}} = [\fun{0}, \fun{1}, \mathbf{circ}_{\mathcal{C}}, \fun{sg}, \ell, +, -, \div 2, \pi^p_i; \circ, \ell\text{-ODE}_2, \ell\text{-ODE}_3].
$$

\begin{remark}
    If the family $\mathcal{C}$ is $\Dlogtime$-uniform, then the functions in $\mathbf{circ}_{\mathcal{C}}$ are computable in $\mathcal{A}_0$.
    Consequently, in this case, it holds that $\ACDL_{\mathcal{C}}=\ACDL$.
\end{remark}

In this non-necessarily uniform setting, a completeness proof still holds.

\begin{proposition}\label{prop:dirACDL}
    If a function $f:\Nat \to \Nat$ is computable by a family $\mathcal{C}=(C_n)_{n\ge 0}$ of polynomial size and constant depth circuits, then it is in $\ACDL_{\mathcal{C}}$.
\end{proposition}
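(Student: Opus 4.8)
The plan is to evaluate the circuit family $\mathcal{C}$ \emph{level by level} inside $\ACDL_{\mathcal{C}}$, exploiting that the depth $d$ is a fixed constant, so that the whole evaluation is a $d$-fold composition of a single ``one-level update'' function. Throughout, a \emph{configuration} at a given level will be encoded as a single integer whose bits record the values of all gates: since $C_{\ell(x)}$ has at most $\ell(x)^k$ gates, such a configuration is an integer of length $\ell(x)^k$, hence a number whose \emph{length} is polynomial in $\ell(x)$. To iterate over gate indices $0,\dots,\ell(x)^k-1$ by deriving along a length function, I first fix an auxiliary parameter $w$ with $\ell(w)\ge \ell(x)^k$, obtained by iterating the smash function $\#$ (which lies in $\ACDL\subseteq\ACDL_{\mathcal{C}}$ by Section~\ref{sec:FAC}); bounded quantifications and concatenations over the gate range will then be driven by $\ell(w)$ rather than by $\ell(x)$. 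All auxiliary functions needed below ($\fun{BIT}$, $\fun{msp}$, $\fun{bit}$, $\fun{if}$ and the bounded quantifiers) are available since they were shown to be in $\ACDL$ (Sections~\ref{sec:BIT} and~\ref{sec:CRN}, Remark~\ref{remark:boundedQ}).

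First I would build the level-$0$ configuration. Using the concatenation schema $\ell$-ODE$_1$ exactly as in the simulation of CRN (Section~\ref{sec:CRN}), I range over gate indices $a=0,\dots,\ell(x)^k-1$ and append, for each $a$, the value of gate $a$ at level $0$, computed by composition from the description functions $L_0^{in}$ and $L_0^{\neg}$ together with the conditional $\fun{if}$: input gates take the corresponding bit of $x$, negated-input gates its complement via $\fun{cosg}$, and all remaining indices take $\fun{0}$. Here $f(x)=F(w,x)$ for the $\ell$-ODE$_1$ solution $F$ driven by $w$, so that the concatenation runs through all $\ell(w)\ge\ell(x)^k$ positions.

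Next comes the core step, the \emph{one-level update}: given a configuration $\mathrm{conf}_e$ for level $e$, I define $\mathrm{conf}_{e+1}$ by a further $\ell$-ODE$_1$ concatenation over gate indices $b=0,\dots,\ell(x)^k-1$, where the bit appended for $b$ is the value of gate $b$ at level $e+1$. When $L_{e+1}(b,x)$ fails this bit is $\fun{0}$; otherwise gate $b$ is a $\vee$-gate (for $e+1$ odd) or an $\wedge$-gate (for $e+1$ even), and its value is, respectively,
$$
\big(\exists a\le \ell(w)\big)\big(C(x,a,b)\wedge \fun{BIT}(\mathrm{conf}_e,a)\big)
\quad\text{or}\quad
\big(\forall a\le \ell(w)\big)\big(C(x,a,b)\to \fun{BIT}(\mathrm{conf}_e,a)\big).
$$
These are expressible by the bounded-quantification device of Remark~\ref{remark:boundedQ} (its $\fun{sg}$- and $\fun{cosg}$-variants), instantiated with the driving variable $w$ so that the quantifier sweeps up to $\ell(w)\ge\ell(x)^k$; the quantified predicate is built by composition from $C$, from $\fun{BIT}$, and from $\mathrm{conf}_e$. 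Composing this update with itself $d$ times (a fixed number) yields the final configuration $\mathrm{conf}_d$, and $f(x)$ is recovered by isolating from $\mathrm{conf}_d$ the $m$ bits of the output gates (indices $\ell(x)^k-m,\dots,\ell(x)^k-1$) using the shift and truncation operations already in $\ACDL$ ($\fun{msp}$ from $\ell$-ODE$_3$, together with subtraction and $\div 2$).

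The main obstacle is this one-level update, and within it two intertwined points. The first is adapting the bounded quantifier of Remark~\ref{remark:boundedQ} so that it ranges over the full polynomial gate interval $a\le\ell(x)^k$ rather than merely $a\le\ell(x)$; this is precisely what forces the auxiliary variable $w$ of length $\ell(x)^k$, and requires checking that the shifting schemas still behave correctly on configurations of this (polynomial) length. The second is the positional bookkeeping of the concatenation: since $\ell$-ODE$_1$ appends bits by left-shifting, the bit produced for gate $b$ is stored at a reversed position, so the index read by $\fun{BIT}$ in the quantified predicate must be adjusted accordingly to guarantee that the value attributed to predecessor $a$ is the bit actually deposited for gate $a$. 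Both are handled by reusing, essentially verbatim, the index-recovery technique already used to simulate CRN in Section~\ref{sec:CRN} (recovering the current gate index from $\ell$ of the driving variable via $\fun{msp}$ and $\fun{bit}$); once this is in place, everything else is routine composition of functions already shown to lie in $\ACDL_{\mathcal{C}}$.
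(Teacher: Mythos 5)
Your proposal is correct in its essentials and rests on the same two pillars as the paper's proof: since the depth $d$ is a fixed constant, the whole evaluation is a $d$-fold composition of functions already in $\ACDL_{\mathcal{C}}$, and the unbounded fan-in of each gate is absorbed by the bounded-quantification power of $\ell$-ODE$_1$ (Remark~\ref{remark:boundedQ}). The decomposition, however, is genuinely different. The paper never materializes a level's configuration as a number: it defines gate-indexed evaluation functions $\fun{Eval}_0,\dots,\fun{Eval}_d$, where $\fun{Eval}_{2e}(t,x)$ is obtained from $\fun{Eval}_{2e-1}$ by a bounded minimum over the predecessors of $t$ (Lemma~\ref{lemma:min}, for $\wedge$-gates) and $\fun{Eval}_{2e+1}$ by its dual $1-\fun{min}\{1-\cdot\}$ (for $\vee$-gates), and only at the very end assembles the $m$ output bits by a single $\ell$-ODE$_1$ concatenation. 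You instead pack each level into an integer $\mathrm{conf}_e$ by one $\ell$-ODE$_1$ concatenation per level and read it back with $\fun{BIT}$, and it is precisely this representational choice that forces your extra machinery: the long driving parameter $w$ with $\ell(w)\ge \ell(x)^k$ obtained by iterated smash, the reversed-position bookkeeping for $\fun{BIT}$ reads, and the final extraction of the output block by shifts (the least detailed part of your argument, though doable with $\fun{msp}$, $\fun{bexp}$ and the fact that $\ell(x)^k$ is a weak expression in lengths of iterated smashes). The paper's functional representation avoids all three --- in particular no extraction is needed since only output-gate values are ever concatenated --- at the price of delegating the range $i\le \ell(x_1)^k$ to Lemma~\ref{lemma:min}, which silently hides the same ``long parameter'' issue you handle explicitly with $w$. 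Both routes go through at comparable levels of rigor; yours is closer to a textbook configuration-by-configuration circuit simulation, while the paper's stays leaner inside the algebra.
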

\noindent
To prove it we need the following Lemma, which is a consequence of Remark~\ref{remark:boundedQ}.

\begin{lemma}\label{lemma:min}
    Let $g$ and $h$ be functions computable in $\ACDL_{\mathcal{C}}$, $k\in \Nat$ and $\tu x = x_1, \dots, x_h$. 
    Then, the function $\fun{min}_{i\leq \ell(x_1)^k} \{g(i,\tu x) : h(i,\tu x) \triangleright j\}$, for $\triangleright \in \{<,\leq, >, \ge, =\}$ and $j\in \{0,1\}$ is in $\ACDL_{\mathcal{C}}$.
\end{lemma}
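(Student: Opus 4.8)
The plan is to reduce the bounded minimization to the bounded quantification already provided by Remark~\ref{remark:boundedQ}, together with the smash function (definable via $\ell\text{-ODE}_2$, exactly as shown for $x\#y$) to reach the polynomial range $\ell(x_1)^k$. First I would replace the selection condition by its characteristic function: since $j\in\{0,1\}$ and $\triangleright\in\{<,\leq,>,\ge,=\}$, the predicate $R(i,\tu x)\equiv\big(h(i,\tu x)\triangleright j\big)$ is obtained from $h$ by a fixed combination of $\fun{sg}$, $\fun{cosg}$ and subtraction, so $R\in\ACDL_{\mathcal{C}}$. The quantifier of Remark~\ref{remark:boundedQ} ranges over $z\le\ell(w)$ for some value $w$; to quantify over $i\le\ell(x_1)^k$ I would first build, by iterating $\#$, a value $w$ with $\ell(w)\ge\ell(x_1)^k$ (recall $\ell(x_1\#x_1)=\ell(x_1)^2+1$), so that ``$\exists i\le\ell(x_1)^k$'' and ``$\forall i\le\ell(x_1)^k$'' become instances of the bounded quantifiers of Remark~\ref{remark:boundedQ}, further restricted by the $\AC^0$ test $i\le\ell(x_1)^k$.

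Second, I would make the comparison between $g$-values available inside the class. Using $\fun{BIT}$ (already shown to be in $\ACDL$ in Section~\ref{sec:BIT}) the relation $g(i,\tu x)\le g(i',\tu x)$ is expressed in the usual way: $g(i,\tu x)>g(i',\tu x)$ holds iff there is a bit position $p$ with $\fun{BIT}(g(i,\tu x),p)=1$, $\fun{BIT}(g(i',\tu x),p)=0$, and $\fun{BIT}(g(i,\tu x),q)=\fun{BIT}(g(i',\tu x),q)$ for all higher $q$. The existential over $p$ and the universal over $q$ both range over bit positions bounded by $\ell(g(i,\tu x))$, which is polynomial in the input length since $g\in\AC^0$; so, using a further smash-built length bound, they too are instances of Remark~\ref{remark:boundedQ}, and the comparison predicate lies in $\ACDL_{\mathcal{C}}$. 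With this I define the minimizer predicate
$$
\fun{Min}(i)\;\equiv\;\big(i\le\ell(x_1)^k\big)\wedge R(i,\tu x)\wedge\big(\forall i'\le\ell(x_1)^k\big)\big[\,R(i',\tu x)\Rightarrow g(i,\tu x)\le g(i',\tu x)\,\big],
$$
which is in $\ACDL_{\mathcal{C}}$ by closure under composition and bounded universal quantification.

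Finally, I would read off the target value $M=\fun{min}_{i\le\ell(x_1)^k}\{g(i,\tu x):R(i,\tu x)\}$ bit by bit. The crucial point is that all minimizers share the same $g$-value, so each bit can be recovered by an \emph{independent} bounded existential query: for each position $p$,
$$
\fun{BIT}(M,p)=\fun{sg}\Big(\big(\exists i\le\ell(x_1)^k\big)\big[\,\fun{Min}(i)\wedge \fun{BIT}(g(i,\tu x),p)=1\,\big]\Big).
$$
This keeps the entire computation at constant depth, since no bit of $M$ depends on the previously determined ones. It then remains to assemble the bits $\big(\fun{BIT}(M,p)\big)_p$ over the polynomial range of positions into the integer $M$; this is precisely a concatenation-recursion-on-notation reconstruction, already simulated by $\ell\text{-ODE}_1$ in Section~\ref{sec:CRN} (via the shift function $2^{\ell(\cdot)}\times y$ of Example~\ref{ex:2ell}), so $M\in\ACDL_{\mathcal{C}}$. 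An empty selection set is handled by fixing a default value through $\fun{if}$, guarded by $\big(\exists i\le\ell(x_1)^k\big)R(i,\tu x)$.

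The main obstacle is bookkeeping of ranges rather than any deep idea: every quantifier and every bit index must be rephrased as a quantifier over $\ell(\cdot)$ of an explicitly constructed smash value so that Remark~\ref{remark:boundedQ} literally applies, and the final assembly of $M$ from its bits must be routed through the $\ell\text{-ODE}_1$/CRN machinery. The one genuinely important observation is that bit-by-bit extraction of a minimum is ordinarily sequential (bit $p$ depends on the higher bits), which would blow up the depth; phrasing the extraction through the minimizer predicate $\fun{Min}$ removes this dependency and is exactly what makes the construction fit into $\ACDL_{\mathcal{C}}$.
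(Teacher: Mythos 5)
Your proof is correct in substance, but it takes a much longer route than the paper, which disposes of Lemma~\ref{lemma:min} in a single line, presenting it as ``a consequence of Remark~\ref{remark:boundedQ}''. The paper can afford this brevity because in its only application (evaluating the $\wedge$- and $\vee$-gates in Proposition~\ref{prop:dirACDL}) both $h$ and the minimized function $g$ are $\{0,1\}$-valued, and for Boolean $g$ the bounded minimum collapses to a single bounded quantification: $\fun{min}_{i\le\ell(x_1)^k}\{g(i,\tu x):h(i,\tu x)\triangleright j\}=\fun{cosg}\big((\exists i\le\ell(x_1)^k)[\,h(i,\tu x)\triangleright j\wedge g(i,\tu x)=0\,]\big)$, i.e.\ the minimum is $1$ unless some selected index has $g$-value $0$ --- exactly the pattern of Remark~\ref{remark:boundedQ}, plus the smash-based extension of the quantifier range from $\ell(x_1)$ to $\ell(x_1)^k$, which the paper leaves entirely implicit and you spell out. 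Your construction --- the minimizer predicate $\fun{Min}$, the independent bit-by-bit extraction (justified by the correct observation that all minimizers share one $g$-value, which is indeed the step that keeps the depth constant), and the CRN/$\ell$-ODE$_1$ reassembly --- is what a literal reading of the statement requires, namely arbitrary integer-valued $g$; that added generality is what your extra work buys, and it makes yours the more faithful proof of the lemma as stated.

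Three smaller points. First, your $\fun{BIT}$-based lexicographic comparison of $g$-values is unnecessary and costs you a whole extra layer of quantifier bookkeeping: subtraction and $\fun{sg}$ are basic functions of $\ACDL_{\mathcal{C}}$, so the characteristic function of $g(i,\tu x)\le g(i',\tu x)$ is simply $\fun{cosg}\big(g(i,\tu x)-g(i',\tu x)\big)$. Second, the justification ``since $g\in\AC^0$'' is not quite right in this non-uniform setting: $g$ lies in $\ACDL_{\mathcal{C}}$, and the polynomial bound on $\ell(g(i,\tu x))$ should instead be argued by induction on the algebra (or via the remark following Proposition~\ref{prop:dirACDL}). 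Third, the guard $i\le\ell(x_1)^k$ is not a primitive and must itself be constructed inside the class; this is doable with tools you already invoke, e.g.\ iterated smash combined with subtraction yields values of length exactly $\ell(x_1)^k$ (note that $\ell\big((x_1\# x_1)-1\big)=\ell(x_1)^2$, and iterate), after which the test is again a $\fun{sg}$-of-difference. None of these issues affects the correctness of your overall plan; they are gaps in rigor, all fillable within the machinery you cite.
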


%\Melissa{More details? Proofs?}

\begin{proof}[Proof of Proposition~\ref{prop:dirACDL}]
Let $\fun{Eval}(t,x)$ be a function that returns  the value of the $t^{th}$ output gate of the circuit $C_{\ell(x)}$ of input $x$ when $n^k-m\leq t\leq n^k-1$ and 0 otherwise (other values are indifferent).
%$$
%\fun{Eval}(t,x) = \begin{cases}
   % C_{\ell(x)} \quad \quad &\text{if } t \leq m-1 \\
%    0 \quad \quad &\text{if } t > m-1.
%\end{cases}
%$$
Then, the following expression defines a function $f$ such that $f(\alpha(m)+1,x)$, with $\alpha(m)=2^{m}-1$ (see Notation~\ref{notation: alpha functions}) outputs the value of the computation of $C_n$ on input $x$, with $n=\ell(x), m=m(\ell(x))$:
$$
\frac{\partial f(y,x)}{\partial \ell(y)} = f(y,x) + \fun{Eval}\big(n^k-1 -m + \ell(y),x\big)
$$
with $f(1,x)=\fun{Eval}(n^k-m,x)$.
Intuitively, the function above computes the successive suffixes of the output word, starting from the bits of bigger weights~\footnote{Indeed, $\alpha(m)+1=2^{m}$ is the first integer of length $m$+1. Hence, $f(2^m,x)=2f(\alpha(m),x)+ \fun{Eval}(n^k-1,x)=4f(\alpha(m-1),x)+ 2\fun{Eval}(n^k-2,x) + \fun{Eval}(n^k-1,x)$, etc is the number whose binary decomposition is given by the sequence $\fun{Eval}(n^k-m,x), \fun{Eval}(n^k-m+1,x),...,\fun{Eval}(n^k-2,x), \fun{Eval}(n^k-1,x) $ 
}.
Remarkably, this is an instance of the ODE$_1$ schema (indeed, $\fun{Eval}(y,x)\in \{0,1\}$).
So, the given $f$ can be rewritten in $\ACDL_{\mathcal{C}}$.

It remains to describe how the function $\fun{Eval}(t,x)$ is defined.
Again, we assume that $C_n$ has depth $d$, that it is in normal form, and that $d$ is even. 
Concretely, we start by considering a special (bounded) minimum operator function defined as in Lemma~\ref{lemma:min} and such that $k\in \Nat$ and
%and two functions $g$ and $h$, with 
$h(t,\tu x)\in \{0,1\}$ for $t\in \Nat$ and $\tu x=x_1,\dots, x_h$.
%$$
%\fun{min}_{i\leq \ell(x_1)^k}\{g(i,\tu x) : h(i,\tu x)\triangleright j\},
%$$
%with $\triangleright \in \{<,\leq, >, \ge, =\}$ and $j\in \{0,1\}$.
Intuitively, given $i\in\{0,\dots, \ell(x_1)^k\}$, this function computes the minimum of the values of $g(i,\tu x)$, for $i$ and $\tu x$ such that $h(i,\tu x)\triangleright j$.

The inductive definition of $\fun{Eval}$ relies on those of the $d+1$ functions $\fun{Eval}_0, \dots, \fun{Eval}_d$, with $\fun{Eval}_d=\fun{Eval}$:
\begin{itemize}
    \itemsep0em
    \item $\fun{Eval}_0(t,x)$ is equal to $\fun{BIT}(t,x)$ if $L_0^{in}(t,x)$ holds and to $1-\fun{BIT}(t - \ell(x),x)$ if $L^\neg_0(t,x)$ does.
    For $t$ not corresponding to gate index, $\fun{Eval}_0(t,x)$ is set to an arbitrary value, say 0.
    Since $\fun{BIT}$ can be rewritten already in $\ACDL$ (Section~\ref{sec:BIT}), $\fun{Eval}_0$ is in $\ACDL_{\mathcal{C}}$ as well.
    \item $\fun{Eval}_{2e}(t,x)$ is equal to $\fun{min}_{i\leq \ell(x)^k}\{\fun{Eval}_{2e-1}(i,x) : C(x,i,t)=1\}$, for $L_{2e}(t,x)$, i.e.~if $t$ is the index of a gate at this level.
    The evaluation for the $i^{th}$ gate at level $2e$ (a $\wedge$-gate) is the minimum of the evaluations of its predecessors gates of level $2e-1$.
    As seen, $\fun{min}$ is in $\ACDL_{\mathcal{C}}$ (Lemma~\ref{lemma:min}), so $\fun{Eval}_{2e}$ can also be rewritten in this class.
    \item Similarly, $\fun{Eval}_{2e+1}(t,x)$ is the $1-\fun{min}_{i\leq \ell(x)^k}\{1-\fun{Eval}_{2e}(i,x) : C(x,i,t) =1\}$.
    The evaluation for the $t^{th}$ gate of level $2e+1$ (a $\vee$-gate) is the maximum among evaluations of its predecessor gates of level $2e$.
    As for $\fun{Eval}_{2e}$, $\fun{Eval}_{2e+1}$ can be rewritten in $\ACDL_{\mathcal{C}}$.
\end{itemize}
\end{proof}

\begin{remark}
    By inspecting the proofs of Propositions~\ref{prop:ODE2} and~\ref{prop:ODE3}, even the following converse to Proposition~\ref{prop:dirACDL} is proved to hold:
    If a function $f:\Nat \to \Nat$ is in $\ACDL_{\mathcal{C}}$ for some family of polynomial size and constant depth circuits $\mathcal{C}$, then there is a family of polynomial size and constant depth circuits $\mathcal{C}'$ that computes it.
\end{remark}

\subsubsection{Direct Completeness for $\TCDL_{\mathcal{C}}$}\label{sec:directTCDL}

We now consider a similar, direct characterization for $\FTC^0$.
Suppose that $C_n$ strictly alternates between $\wedge, \vee$ and $\maj$ gates, and that input gates and their negations are all at level 0, $\vee$ gates are at level $3e+1$, $\wedge$ gates are at levels $3e+2$ and $\maj$ gates are at level $3e$.
Accordingly, in this case $L_e$ describes the level of a gate of a type not limited to $\wedge, \vee$, but including $\maj$.
Then, the desired family of classes is defined as follows:
$$
\TCDL_{\mathcal{C}} = [\fun{0}, \fun{1}, \mathbf{circ}_{\mathcal{C}}, \div 2, \fun{sg}, \ell, +, -, \pi^p_i; \circ, \ell\text{-ODE}_2^*, \ell\text{-ODE}_3]
$$
where, with a slight abuse of notation, we use $\mathbf{circ}_{\mathcal{C}}$ to denote the function corresponding to a (set of) relation(s), this time including the \emph{extended L$_e$}.
The proof that non-uniform $\FTC^0\subseteq \TCDL_\mathcal{C}$ is similar to the one from Section~\ref{sec:directACDL}.

\begin{proposition}
    If a function $f:\Nat \to \Nat$ is computable by a family $\mathcal{C}=(C_n)_{n\ge 0}$ of polynomial size and constant depth circuits including $\maj$ gates, then it is in $\TCDL_{\mathcal{C}}$.
\end{proposition}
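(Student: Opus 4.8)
The plan is to follow the template of the proof of Proposition~\ref{prop:dirACDL} as closely as possible, reusing its output-assembly step and its treatment of input, $\vee$ and $\wedge$ gates essentially verbatim, and concentrating the real work on the new $\maj$-gate case. First I would observe that, although the schema list of $\TCDL_{\mathcal{C}}$ replaces $\ell$-ODE$_1$ by $\ell$-ODE$_2^*$, the schema $\ell$-ODE$_1$ is recovered as the instance of $\ell$-ODE$_2^*$ with $k(\tu y)=1$, so that $2^{\ell(k(\tu y))}-1=1$. Consequently every auxiliary function used in Section~\ref{sec:directACDL} --- $\fun{msp}$, $\fun{if}$, $\fun{bit}$, $\fun{bexp}$, $\fun{BIT}$ and the bounded $\fun{min}$ of Lemma~\ref{lemma:min} --- is available in $\TCDL_{\mathcal{C}}$, as is the output-assembly $\ell$-ODE$_1$ equation from the proof of Proposition~\ref{prop:dirACDL} that recovers the output word from the values of the top gates. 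Hence $\fun{Eval}_0$ (input and negation gates, via $\fun{BIT}$), the $\vee$-case at level $3e+1$ (via $1-\fun{min}\{1-\cdot\}$) and the $\wedge$-case at level $3e+2$ (via $\fun{min}$) transfer with no change beyond relabelling the levels to respect the period-$3$ normal form.

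The one genuinely new ingredient is the evaluation of a $\maj$-gate at level $3e$, which outputs $1$ exactly when strictly more than half of its predecessors evaluate to $1$. To handle it I would first establish a counting analogue of Lemma~\ref{lemma:min}: for $h$ computable in $\TCDL_{\mathcal{C}}$ with values in $\{0,1\}$, the function $x\mapsto \#\{\, i \leq \ell(x_1)^k : h(i,\tu x)=1 \,\}$ lies in $\TCDL_{\mathcal{C}}$. This follows from the binary-counting capability of $\ell$-ODE$_2^*$ with $k(\tu y)=0$ exhibited in Example~\ref{ex:bcount}, combined with the same range-extension device used to push the bounded quantifiers of Remark~\ref{remark:boundedQ} from range $\ell(x_1)$ up to range $\ell(x_1)^k$ (deriving along $\ell$ of a witness of length $\ell(x_1)^k$, built from the smash function, which is itself definable in the class as in Section~\ref{sec:FAC}). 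With this operator in hand, for a gate $t$ at level $3e$ I would form the $\{0,1\}$-valued predecessor indicator
$$
p(i) = \fun{sg}\big( C(x,i,t) + \fun{Eval}_{3e-1}(i,x) - 1 \big),
$$
which is $1$ precisely when $i$ is a predecessor of $t$ that evaluates to $1$, count $S_1 = \#\{\, i\leq \ell(x)^k : p(i)=1 \,\}$ together with the total in-degree $S_{\mathrm{tot}} = \#\{\, i\leq \ell(x)^k : C(x,i,t)=1 \,\}$, and set
$$
\fun{Eval}_{3e}(t,x) = \fun{sg}\big( (S_1 + S_1) - S_{\mathrm{tot}} \big),
$$
which is $1$ iff $2 S_1 > S_{\mathrm{tot}}$, i.e.\ iff the majority threshold is met. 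Only $+$, $-$ and $\fun{sg}$ occur here, so general multiplication is never needed (the factor $2$ is just $S_1+S_1$).

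Finally I would assemble the pieces exactly as in Section~\ref{sec:directACDL}: $\fun{Eval}=\fun{Eval}_d$ is defined by the finite descending induction through $\fun{Eval}_0,\dots,\fun{Eval}_d$, now with the three gate types cycling with period $3$, and the output of $C_{\ell(x)}$ on $x$ is recovered by the single output-assembly $\ell$-ODE$_1$ step, placing $f$ in $\TCDL_{\mathcal{C}}$. The main obstacle I expect is precisely the counting lemma: one must check that binary counting via $\ell$-ODE$_2^*$ can genuinely be carried out over the full range of $\ell(x)^k$ gate indices rather than merely $\ell(x)$ of them, and that the resulting counts and the threshold comparison remain expressible with the multiplication-free basis of $\TCDL_{\mathcal{C}}$. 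Everything else is a routine adaptation of the $\ACDL_{\mathcal{C}}$ argument.
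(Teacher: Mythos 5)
Your proposal is correct and follows essentially the same route as the paper: reuse the $\ACDL_{\mathcal{C}}$ evaluation and output-assembly scheme, introduce a binary-counting operator $\sum_{i\leq \ell(x)^k} h(i,t,x)$ realized via $\ell$-ODE$_2^*$ (the paper's $\fun{bcount}_h$), form $\{0,1\}$-valued predecessor indicators, and evaluate each $\maj$ gate by comparing twice the count of $1$-predecessors with the total in-degree under $\fun{sg}$. Incidentally, your threshold $\fun{sg}\big((S_1+S_1)-S_{\mathrm{tot}}\big)$ has the correct orientation, whereas the paper's displayed $\fun{sg}\big(\fun{bcount}_{v^0_{3e-1}}(t,x)-2\times \fun{bcount}_{v^1_{3e-1}}(t,x)\big)$ outputs $1$ when \emph{fewer} than half of the predecessors are $1$ --- an apparent sign slip.
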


\begin{proof}
    The functions $f$ and $\fun{Eval}$ are globally defined as the corresponding ones in Section~\ref{sec:directACDL}.
    Modifications only affect the definition of $\fun{Eval}_d$ and, in particular, the inductive levels corresponding to $\maj$.
    Specifically, in this context, it is defined as follows:
    \begin{itemize}
        \itemsep0em
        \item For a given function $h$ and integer $k$, $\fun{bcount}_h(t,x) = \sum_{i\leq \ell(x)^k}h(i,t,x)$.
        Notice that this function is in $\TCDL_{\mathcal{C}}$, since it can be rewritten as an instance of $\ell$-ODE$_2^*$, and $h$ is in $\TCDL^*$ (due to Lemma~\ref{lemma:min}).
        \item For any $i$ such that $L_{3e-1}(i)$ and $3e-1 <d$:
        \begin{align*}
            v_{3e-1}^0 (i,t,x) &= \fun{sg}\big(C(x,i,t)\big) \\
            v_{3e-1}^1 (i,t,x) &= \fun{if}\big(C(x,i,t), \fun{Eval}_{3e-1(i,x)},0\big).
        \end{align*}
        The value of $v_{3e-1}^0(i,t,x)$ is 1 when $i$ is a predecessor of gate $t$ and $v_{3e-1}^1(i,t,x)$ is 1 when, in addition, the value of gate $i$, on input $x$, is 1.
        \begin{sloppypar}
        \item For $t$ such that $L_{3e}(t)$, $\fun{Eval}_{3e}(t,x)$ is defined as $\fun{sg}(\fun{bcount}_{v^0_{3e-1}}(t,x)-2\times \fun{bcount}_{v^1_{3e-1}}(t,x))$.
        The function outputs 1 when more than half of the inputs of gate $t$ are 1.
        \end{sloppypar}
    \end{itemize}
\end{proof}

\section{Conclusion}\label{sec:conclusion}

In this paper, we introduce what, to the best of our knowledge, are the first implicit characterizations for small circuit classes based on discrete ODEs. By imposing various restrictions on linearity, we obtain straightforward characterizations for both $\AC^0$ and $\FTC^0$. 
The striking simplicity of our schemas and the fact that we can pass from the former class to the latter by simply removing a `syntactical' constraint, i.e.~switching from $\ell$-ODE$_2$ to $\ell$-ODE$_2^*$, suggests that these characterizations could clarify the computational principles underlying the classes considered and enhance our understanding of the relationship between circuit computation and the ODEs machinery. 
Indeed, it emerges that the features defining circuit computation and their basic steps are mirrored by the shape and constraints on (linear) equations defining our schemas:
for example, when $A$ is a power of two, computation corresponds to shifting (the binary representation of) a number, while $B$ being restricted to values 0 or 1 ensures that addition involves only bits, thereby preventing any carry operations; additionally, when $A$ is allowed to take value 0, binary counting – which is not in $\AC^0$ but is characteristic of counting or majority gates of $\mathbf{FTC}^0$ circuits – is captured.
Notably, this framework is expected to generalize to the investigation of other relevant circuit classes.

For this reason, we view these characterizations %of $\AC^0$ and $\FTC^0$ 
as a first step in a broader project aimed at capturing additional small circuit classes.
Indeed, given the novelty of this approach, there are several promising directions for future research. 
As mentioned, since the restrictions we have applied to linear-length ODE schemas are surprisingly natural, we believe that a similar analysis could also capture computation corresponding to $k$-BRN and WBRN~\cite{Clote88,Clote1990,CloteKranakis}, paving the way for \emph{uniform} characterizations of the $\AC$ and $\FNC$ hierarchies through the lens of ODEs. Investigating intermediate classes, such as $\mathbf{FACC(2)}$ (i.e., $\AC^0$ with modulo $2$ gates), could provide new insight into the relationship between small circuit classes (with varying notions of uniformity) and counting complexity. Another intriguing avenue for exploration is the generalization of our approach to the continuous setting, specifically the analysis of small circuit classes over the reals, following the path outlined by~\cite{BlancBournez22,BlancBournez23} for $\FPTime$ and $\mathbf{FPSPACE}$. A challenging direction for future research would also be to develop logical and proof-theoretical counterparts to ODE-style algebras, such as introducing \emph{natural} rule systems (guided by the ODE design) to syntactically characterize the corresponding classes.

%We conceive our characterizations of $\AC^0$ and $\FTC^0$ as the first step in a more general project aiming to capture several other relevant classes, starting with $\AC^k$ and $\FNC^k$.
%
%Indeed, the restrictions over linear length ODEs schemas we have adopted are surprisingly natural, and we believe that a similar analysis would also allows us to capture computation corresponding to $k$-BRN and WBRN~\cite{Clote88,Clote1990,CloteKranakis}, hence paving the way towards obtaining  uniform characterizations of the (entire) mentioned hierarchies through the prism of ODEs. 
%
%The investigation of intermediate classes, such as $\mathbf{FACC(2)}$ ($\AC^0$ with modulo $2$ gates),  could also help shed new light on the relationship between small circuit classes (with varying notions of uniformity) and counting complexity.
%
%A second intriguing investigation concerns the generalization of our approach to the continuous setting, that is to the analysis of small circuit classes over the reals, following the path delineated by~\cite{BlancBournez22,BlancBournez23} for $\FPTime$ and $\mathbf{FPSPACE}$.
%
%Another challenging direction for future research would be to develop logical and proof-theoretical counterparts to ODE-style algebras, for instance by introducing \emph{natural} rule systems (oriented by the ODE design) to syntactically characterize the corresponding classes.

\section*{Acknowledgements}

Since 2023, the research of Melissa Antonelli has been funded by the Helsinki Institute for Information Technology.
The research of Arnaud Durand is partially funded by ANR Grant $\partial$ifference.
Melissa Antonelli and Arnaud Durand thank Maupertuis’ SMR Programme, on behalf of
the Institute Français in Helsinki, the French Embassy to Finland, the French Ministry of Higher
Education, Research and Innovation in partnership with the Finnish Society for Science and Letters
and the Finnish Academy of Sciences and Letters, for the financial support.

% I was sending an email to you but maybe it's faster here
% yes
% usually there is some kind of \thanks that can be used in the front page as a footnote but this could also be a solution.
% 

%\bibliographystyle{plainnat}
\bibliographystyle{plain}
\bibliography{references.bib}

\end{document}